\documentclass[letterpaper]{article}
\usepackage[margin=1in]{geometry}
\usepackage{graphicx}
\usepackage{natbib} 
\usepackage{bbm}
\usepackage{amsthm}                
\usepackage{amssymb} 
\usepackage{amsmath} 
\usepackage{thmtools,thm-restate}
\usepackage[autostyle=true,german=quotes]{csquotes}
\PassOptionsToPackage{hyphens}{url}\usepackage{hyperref}
\usepackage{cleveref}

\newtheorem*{theorem*}{Theorem}
\newtheorem{lemma}{Lemma}
\newtheorem*{lemma*}{Lemma}

\newcommand{\p}[1]{\left( #1 \right)}

\newcommand{\abs}[1]{\left \vert #1 \right \vert}

\newcommand{\cd}[0]{\cdot}


\newcommand{\val}[0]{\ensuremath{v}}

\newcommand{\perm}[0]{\ensuremath{\sigma}}
\newcommand{\permdist}[0]{\ensuremath{\mathcal{P}}}
\newcommand{\nitem}[0]{\ensuremath{N}}
\newcommand{\pfree}[0]{\ensuremath{p}}
\newcommand{\avail}[0]{\ensuremath{s}}
\newcommand{\busypen}[0]{\ensuremath{\gamma}}
\newcommand{\ratio}[0]{\ensuremath{r}}

\newtheorem{observation}{Observation}
\newtheorem{definition}{Definition}
\usepackage[ruled, linesnumbered]{algorithm2e}
\SetKwComment{Comment}{/* }{ */}

\newcount\Comments
\Comments=1
\usepackage{color}
\usepackage{xcolor}
\definecolor{darkgreen}{rgb}{0,0.5,0}
\newcommand{\kibitz}[2]{\ifnum\Comments=1{\color{#1}{#2}}\fi}

\newcommand{\updated}[1]{{\color{purple}#1}}

\newif\ifinformal
\informalfalse 

\begin{document} 

\title{Optimal Selection Using Algorithmic Rankings with Side Information}

\author{%
  Kate Donahue\thanks{Massachusetts Institute of Technology}\thanks{Work partially completed while at Microsoft Research and Cornell University.} 
  \\
   Nicole Immorlica\thanks{
   Yale University \& Microsoft Research} \\
  Brendan Lucier \thanks{
  Microsoft Research} \\
}
\date{} 

\maketitle
\begin{abstract}
Motivated by online platforms such as job markets, we study an agent choosing from a list of candidates, each with a hidden quality that determines match value.  The agent observes only a noisy ranking of the candidates plus a binary signal that indicates whether each candidate is \enquote{free} or \enquote{busy}.   Being busy is positively correlated with higher quality, but can also reduce value due to decreased availability. We study the agent's optimal selection problem in the presence of ranking noise and free-busy signals and ask how the accuracy of the ranking tool impacts outcomes. In a setting with one high-valued candidate and an arbitrary number of low-valued candidates, we show that increased accuracy of the ranking tool can result in suboptimal social outcomes. For example, increased accuracy may mean that agents may be more likely to make offers to busy candidates, and (counter-intuitively) may be more likely to select lower-ranked candidates.  We further discuss conditions under which these results extend to more general settings.
\end{abstract}

\section{Introduction}

Consider the following setting: you are a hiring manager, trying to find a candidate to interview for a hiring slot. You know that some candidates are better suited for your role than others - ideally, you would like to interview the best candidate available. To assist you with this goal, you might obtain a (noisy) ranking of the candidates from a commercial employment site, such as LinkedIn, Indeed, or other ranking tools. If the ranking is \enquote{good} (if the expected utility of candidates decreases as you navigate further down the list), the best action for the employer is clear: always pick the top-ranked candidate. Also, as the ranking tool becomes more accurate, the employer is only happier with their choice, as the expected value of the top-ranked candidate only increases. 

However, in many cases, employers have access to side information, besides the ranking. For example, a hiring manager may know that some candidates are less likely to accept any eventual offer because they are already employed (see \cite{filippas2023advertising} for an empirical example of such a setting).  
We call such a candidate \enquote{busy}, as compared to a \enquote{free} candidate without these competing sources of employment. If a busy candidate would \emph{never} accept an offer, then the choice is clear: the best option is to select the highest-ranked free candidate. However, if a busy candidate still has some chance of accepting an offer, the possible strategy space changes. In many markets there is an opposing force to the penalty for being busy: it may be the case that high-value candidates are more likely to be busy, and thus being busy is a signal of candidate quality. For example, this could be because high value candidates are more likely to have already been identified and hired by competing firms\footnote{Throughout this paper, we will make the assumption that some candidates are better fits for the job than others (e.g. higher or lower value). We recognize that this assumption drastically simplifies the multi-faceted strengths and weaknesses each candidate may bring, and discuss this (and other) assumptions in more depth in Section \ref{sec:modelassump}.}. In many settings it may not be possible to resolve the uncertainty inherent in a busy signal: for example, the candidate themselves may not know whether or not they would be willing to move without going through the interview process, at which point the firm has already used up its interview slot. 

Here, an employer faces a difficult question: given a noisy ranking of candidates, along with information about which ones are free or busy, which candidate should she select? Should she pick the top-ranked busy candidate, following the old adage \enquote{if you need something done fast, ask a busy person}? Should she aim to avoid the extra hassle of competing for a busy candidate and instead hire the top-ranked free candidate? Moreover, how should the firm's strategy change if the ranking tool becomes more accurate? 

Moreover, a firm's hiring strategy has impacts on the rest of society. For example, consider that a firm's choice to interview a busy (already employed) candidate over a free (unemployed) candidate may be more or less valuable to the candidate. An unemployed candidate may have higher utility for an interview than an employed candidate. Alternatively, an employed candidate may have a stronger ability to negotiate a better offer, which could increase their welfare more. While we do not directly model such externalities, we will use \enquote{probability of picking a busy candidate} and other such quantities as benchmarks to understand the impact of a ranking tool on broader society. Specifically, we will be interested in how these quantities change as the accuracy of the ranking tool increases. Do these measures uniformly improve or degrade as the firm uses a more accurate ranking tool?

\noindent\textbf{Motivation:} More broadly, in this paper our goal is to study a stylized model of an agent using an algorithmic tool to make a decision, and combining the information encoded by the tool with her own source of information. Given the importance attached to continually increasing the accuracy of algorithmic tools, we wish to show that when these tools are used as part of a broader system involving human discretion and alternative sources of information, the returns to increased algorithmic accuracy are less clear: in particular, relevant measures of welfare could \emph{deteriorate} under increased accuracy. We hope that our results will lead to further research in other scenarios where algorithmic tools are used alongside other sources of information, and in particular into what properties of such algorithmic tools are helpful for societal welfare. While we will use the employer hiring job candidates as our motivating example, we note that this setting is quite broad and could model a range of other scenarios where humans make decisions. For example, consider a hungry customer navigating a ranking of restaurants, each \enquote{crowded} or \enquote{not crowded}: a crowded restaurant is more likely to be good, but also has a longer wait. 

\noindent\textbf{Summary:} The rest of this paper is as follows: in Section \ref{sec:modelassump}, we will present our model and assumptions. In Section \ref{sec:related}, we discuss the connection of our paper to related work, including the extensive literature on herding and the Pandora's Box problem. Next, Section \ref{sec:optstrat} analyzes the optimal strategy that a firm navigating a ranking might take: should she pick the first busy candidate, the first free candidate, or some other candidate entirely? In Section \ref{sec:welfare} we analyze phenomena related to welfare, like how frequently firms preferentially select job candidates who are already employed (potentially modeling unemployment discrimination) or avoid doing so (potentially modeling collusion in job hiring). We also analyze how often firms select the top ranked candidate, a measure of how incentive compatible the ranking is with potential implications for the welfare of the company creating the ranking tool. Throughout, we consider how increasing the accuracy of the ranking tool can affect the strategy, welfare, and fairness results in previous sections. Surprisingly, we show that increased accuracy can cause \emph{negative} outcomes. In Section \ref{sec:beyondsuperstar} we discuss extensions of our model and in Section \ref{sec:discuss} we conclude by discussing implications of our results. All proofs are deferred to the appendix.

\section{Model, notation, and assumptions}\label{sec:modelassump} 

 \textbf{Model and notation: }
First, we present our model. There are $\nitem$ candidates, of which the agent (alternatively, firm) wishes to select exactly one: for example, to interview a job candidate. The candidates have different true values to the firm, where $\val_i$ denotes the value of candidate $i$, though the firm \emph{cannot} know the true value of candidates at each ranking (this only holds in the limit of a perfectly noiseless algorithmic ranking tool). The goal of the firm is to maximize their expected utility. In order to help with this decision, the firm has access to a ranking tool which produces noisy permutations over the candidates: $\perm \sim \permdist$, where $\perm_i$ denotes the index of the candidate ranked $i$th in permutation $\perm$ with value $\val_{\sigma_i}$. We will require that the expected value of candidates decreases in the ranking ($\mathbb{E}_{\perm \sim \permdist}[\val_{\perm_i}]$ is decreasing in $i$). 

\noindent \textbf{Permutation distribution:} When comparing two distributions $\permdist$ and $\permdist'$, we will use notation $P[\sigma]$ and $P[\sigma]'$ to denote the probability $\permdist$ and $\permdist'$ place on $\sigma$, respectively. In general, we will assume that the probability of observing each permutation is drawn from a Plackett-Luce distribution \cite{plackett1975analysis, luce1959individual}). This commonly-used model is where i.i.d. noise is added to the true value of each item to produce $\hat \val_i = \val_i + \epsilon$, for $\epsilon$ drawn from a Gumbel distribution: then, the items are sorted in order of the noised values $\{\hat \val_i\}$. Definition \ref{def:accuracy} defines how we will consider changing accuracy throughout most of the paper: 
\begin{definition}[Increased accuracy]\label{def:accurate}
A permutation distribution governed by Plackett-Luce has \emph{increased accuracy} when it has a smaller degree of noise (smaller standard deviation) in the Gumbel distribution.
\end{definition}
However, many of our results rely only weakly on the Plackett-Luce model, and in the relevant appendices we state the sufficient properties on the permutation distribution in order for our results to apply. 

\noindent \textbf{Firm strategy:} In addition to the ranking, the firm has access to a single bit of side information about each candidate: specifically, a status vector $\avail$, where $\avail_i = 1$ if the candidate ranked in position $i$ is free, and $\avail_i = 0$ if the candidate is busy. We assume that higher-valued candidates are less likely to be available, in particular, that $\val_i \geq \val_j$ implies $\pfree_i \leq \pfree_j$, where $\pfree_i$ gives the probability that candidate $i$ is free. We also assume that each candidate's probability of being free or busy is independent of every other candidate: specifically, each free/busy realization is drawn from a Bernoulli distribution $s_i \sim Bern(p_i)$ before the firm observes the ranking. Throughout this paper we will take these probabilities as fixed and assume each candidate reports her free-busy status honestly: in the conclusion we will discuss extensions of this model. 

To summarize, there are three steps that occur in this model: 
\begin{enumerate}
    \item A permutation over candidates is drawn, with some $\perm$ drawn from a Plackett-Luce distribution based on values of the candidates. 
    \item Each candidate obtains a free/busy status, independently given her probability $p_i$ of being free. 
    \item A firm may pick a single candidate, observing \emph{only} the free-busy status and not the candidate values. 
\end{enumerate}

Picking a busy candidate incurs a penalty: specifically, if a candidate has value $\val_i$ if picked when they are free, then they have value $\val_i/\busypen$ if they are selected when they are busy, for $\busypen \geq 1$. In general we allow $\busypen_i$ to vary with the candidate $i$: when we use $\busypen$ without a subscript, we will consider the simplifying case when $\busypen_i$ is identical across candidates. This $\busypen$ parameter reflects the fact that a busy (e.g. already employed) job candidate is less likely to accept an offer. We may also view $\busypen$ as a firm-specific parameter reflecting, for example, the attractiveness of that firm to candidates\footnote{We do not directly model candidate utilities, but include this solely as interpretation for when $\busypen$ may be high or low.}.  

One central assumption is that the firm must select a candidate based solely on the ranking, and cannot re-select another candidate later. This could be motivated by hiring in stages, where \enquote{selecting a candidate} corresponds to bringing a candidate on for an intensive onsite which cannot easily be filled with another candidate if one declines an offer after the onsite. Thus, we model strategy involved in this first stage of hiring, e.g., the selection of a candidate for an onsite interview. We also assume that it is not possible to resolve the uncertainty inherent in the free/busy signal - for example, while the employer could ask the candidate about their willingness to move jobs, the candidate themselves may not know perfectly without going through the recruiting process.

\noindent \textbf{Superstar model:} In much of this paper,  we will focus on the \emph{super-star setting}, with exactly one high-value candidate, and all other candidates with lower value: $\val_1 \geq \val_2 = \val_3 \ldots  \geq 0$ and $\pfree_1 \leq  \pfree_2 = \pfree_3 \ldots$. This models settings where the distribution of candidate quality is skewed: the highest-value individual may have much higher value than other candidates, who are all roughly comparable. In this setting, the space of permutations becomes much smaller, because the only distinguishing feature is the index of the high-value item. We will use the notation $\perm^i$ to denote the permutation where the high-value item is in the $i$th index, and in general we will assume that $P[\perm^i] \geq P[\perm^{i+1}]$ for all $i$ (that is, the ranking is more likely to rank the high value item higher). In Section \ref{sec:beyondsuperstar} we will show that while the general setting is more complex, our core results generalize beyond the superstar model. 

\section{Related work}\label{sec:related}
Our model has connections to celebrated models of agents strategically picking items, such as the hiring problem and Pandora's box problem \cite{krengel1977semiamarts, weitzman1978optimal}. However, our setting has key differences from these. Foremost among them is the signaling mechanism: rather than each item having an independent, known distribution of value, we model the scenario where the free/busy signal is generated by items with different probabilities, given different values. This structure, in conjunction with the ranking, means that the free/busy status of item $i$ can affect the expected value of item $j$ -- generally not captured in existing models. Furthermore, in our setting the agent must pick a single item and commit to it without being able to observe its true value. This captures the setting where it is difficult or impossible to evaluate the item without committing to it first -- for example, in hiring, when a candidate's true \enquote{value} may not become apparent until after weeks or months of work.

Additionally, our work has connections to long literature on herding (also known as information cascades) \cite{banerjee1992simple, bikhchandani1992theory, welch1992sequential}, a pattern where it may be optimal for a decision-maker to follow actions taken by previous people, even ignoring their own information. Such patterns can lead to \enquote{cascades} where multiple sequential decision-makers each follow actions taken by previous decision-makers, rather than following their own information. In our setting, when an agent chooses to pick a lower ranked busy candidate over a higher ranked free, this could be seen as herding, since the free/busy status of candidates could be seen as created by actions taken by previous decision-makers, whereas the ranking is assumed to be private (personal) information. To our knowledge, herding has not been extensively studied in ranking settings: our model interestingly allows for herding to be moderated by the strength of private information through the quality of the ranking and the index of the first busy candidate. 

Our paper also has connections to the literature on algorithmic monoculture \cite{kleinberg2021algorithmic}, which studies when utility may be \emph{decreased} when two agents rely on the same algorithmic ranking, rather than their own (uncorrelated) rankings. Variants of this setting have been explored, such as how algorithmic monoculture can lead to individuals experiencing correlated outcomes (\enquote{outcome homogenization}) \cite{NEURIPS2022_17a234c9}, or the implications of monoculture and of noisy matching in two-sided matching markets. For example, \cite{peng2023monoculture, peng2024wisdom, baek2025hiring} all model settings where firms rely on common algorithmic tools to make hiring decisions, and the potential pitfalls of such an approach.

Common features include the model of multiple agents taking strategic actions in relation to imperfect rankings, as well as the notion of a penalty for picking an item that another agent has picked. Key differences in our model include the fact that this penalty is softened -- in particular, an agent can derive non-zero utility for picking an item that another has already chosen. This dramatically complicates the strategy space. Multiple works have considered the question of fairness in ranking - \cite{zehlike2021fairness} offers a helpful survey. In particular, \cite{NEURIPS2021_63c3ddcc} describes a notion of fairness relating to how often items are presented in the top $k$, given that they have some probability of truly being the best $k$ items (relating to our notion of an item being \enquote{picked}). Issues of fairness and strategy have also been studied in the related area of algorithmic monoculture \cite{blackraghavanbarocas22, jain2023algorithmic, cooper2023my, NEURIPS2022_ba4caa85}, while \cite{peng2023reconciling} studies the tradeoff between diversity and accuracy, showing that consumption constraints (e.g. only top item can be consumed) explains away a tension between these two goals. 

Finally, there are multiple papers (empirical and theoretical) exploring models of firms using prediction tools to compete with each other. For example, \cite{filippas2023advertising} explicitly studies an empirical version of this where job candidates could pay to send a signal of availability to potential employers. Intriguingly, in this setting it seemed like such a signal did \emph{not} lead employers to preferentially pick busy candidates, with most employers choosing a \enquote{first-free} strategy, perhaps indicating a parameter regime where this strategy was typically dominant. \cite{Michellespaper} explores this setting in a dynamical pricing model, giving conditions where pricing levels for such a signal induce a truthful response. Additionally, \cite{jagadeesan2023improved} explores a setting where improved data representation (as modeled by reduced Bayes risk) can paradoxically increase the error that users experience when they choose between two competing firms. \cite{10.1145/3490486.3538364} studies statistical discrimination in stable matchings where candidates have preferences over firms, who only have noisy access to signals about candidate quality, and \cite{castera2024correlation} studies a setting where different candidates have differing degrees of correlation in their predictions across models, which leads to tensions between societal welfare and individual candidate welfare. 

\section{Superstar: Optimal Selection Strategies}\label{sec:optstrat}

In this section, we begin by evaluating the firm's strategy: given access to an algorithmically-generated ranking over candidates and each candidate's free-busy status, what is the optimal strategy? (All proofs will be deferred to Appendix \ref{app:optstrat}.) First, we will show that (given a status vector $\avail$) there are only two strategies we need to consider within the superstar setting: picking the first free candidate, or the first busy candidate. This dramatically reduces the strategy space firms need to consider. 

\begin{restatable}{lemma}{alwaysborf}
\label{lem:alwaysborf}
In the superstar setting, given a realized status vector $\avail$, it is always optimal to either pick the first (top-ranked) free item or the first busy item. 
\end{restatable}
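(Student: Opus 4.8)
The plan is to reduce the firm's decision to a one-dimensional comparison via the posterior probability that a given ranking position holds the high-value item. First I would observe that, because all low-value candidates are interchangeable (identical value and identical free-probability) and their statuses are independent, the only payoff-relevant information contained in the observed pair $(\perm, \avail)$ is, for each position $j$, the bit $\avail_j$ together with $q_j := \Pr[\text{superstar at position } j \mid \perm, \avail]$. Picking position $j$ then yields expected utility
\[
U(j) \;=\; \busypen^{-(1-\avail_j)}\big(\val_2 + q_j(\val_1 - \val_2)\big),
\]
since, conditioned on the superstar being at $j$, the realized value is $\val_1$ (divided by $\busypen$ when $\avail_j = 0$) and is otherwise $\val_2$ (likewise scaled). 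Because $\val_1 \ge \val_2 \ge 0$, for any fixed status bit $U(j)$ is nondecreasing in $q_j$.

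Next I would compute $q_j$ by Bayes' rule. Let $a$ and $b = \nitem - a$ be the number of free and busy positions in $\avail$. The event ``superstar is at $j$'' has likelihood $P[\perm^j]\cdot \pfree_1 \cdot \pfree_2^{\,a-1}(1-\pfree_2)^{b}$ if $\avail_j = 1$, and $P[\perm^j]\cdot (1-\pfree_1) \cdot \pfree_2^{\,a}(1-\pfree_2)^{b-1}$ if $\avail_j = 0$. Factoring out the common factor $\pfree_2^{\,a}(1-\pfree_2)^{b}$ and the (position-independent) normalizer, this shows $q_j \propto P[\perm^j]\cdot \tfrac{\pfree_1}{\pfree_2}$ over the free positions and $q_j \propto P[\perm^j]\cdot \tfrac{1-\pfree_1}{1-\pfree_2}$ over the busy positions, with the same proportionality constant within each status class. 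Using the superstar-model assumption $P[\perm^i] \ge P[\perm^{i+1}]$, it follows that $q_j$ is weakly decreasing in $j$ along the free positions, and likewise along the busy positions.

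Combining the two observations finishes the argument: among free positions $U(j)$ is maximized at the smallest free index, namely the first free candidate, and among busy positions at the smallest busy index, namely the first busy candidate, so the global optimum is the better of those two (and if a status class is empty, only the surviving option is relevant, still consistent with the claim). I expect the main difficulty to be bookkeeping rather than conceptual: one must verify that the Bayesian update genuinely leaves $q_j$ proportional to $P[\perm^j]$ with a constant depending on the status bit but not on the position --- that is, that the ``busy is good news'' asymmetry $\tfrac{\pfree_1}{\pfree_2} \le 1 \le \tfrac{1-\pfree_1}{1-\pfree_2}$ does not interfere with the ordering of positions within a class --- and to dispose of the degenerate cases where every candidate shares the same status. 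Note that the only property of the permutation distribution used is $P[\perm^i] \ge P[\perm^{i+1}]$, so the argument is not special to Plackett--Luce.
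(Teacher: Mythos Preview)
Your proposal is correct and takes essentially the same approach as the paper: both arguments reduce to showing that, among positions sharing the same status bit, the posterior likelihood of holding the superstar depends on the position only through the prior $P[\perm^j]$ (the paper phrases this as $P[\avail \mid \perm^i] = P[\avail \mid \perm^j]$ whenever $\avail_i=\avail_j$), so the conditional expected value is weakly decreasing in rank within each status class. Your explicit computation of $q_j$ and the proportionality constants $\pfree_1/\pfree_2$ and $(1-\pfree_1)/(1-\pfree_2)$ is a slightly more detailed version of the same Bayes-rule cancellation the paper performs, and like the paper you only use $P[\perm^i]\ge P[\perm^{i+1}]$ rather than anything Plackett--Luce specific.
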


Next, the question turns to \emph{which} of these two candidates (top-ranked free or top-ranked busy) we should pick, and the more general strategy the firm follows. Algorithm \ref{algo:superstar} describes such a strategy. In particular, this strategy is made up of two decisions: 1) Whether the firm seeks to preferentially select free or busy candidates, and 2) How far down the ranking the firm is willing to look for a candidate of their preferred free/busy status (their \enquote{window size}). Specifically, it shows that the strategy space depends on two terms. First, free-busy ratio $\ratio = \frac{p_2/(1-p_2)}{p_1/(1-p_1)}$ measures how strongly status is correlated with value - effectively, it shows how strongly observing whether a candidate is free or busy should influence our belief about the (unobserved) candidate value. Secondly, $\busypen$ measures the penalty for selecting a busy candidate - if this is larger, then the firm is \emph{less} likely to pick a busy candidate. 
The term $R = \frac{v_1/\busypen_1 - v_2}{v_1 - v_2/ \busypen_2} \cd r$ incorporates both $\ratio$ and $\busypen_1, \busypen_2$. If $R \leq 1$, firms have a preference for free candidates, and if $R\geq 1$, firms have a preference for busy candidates. However, the further down the ranking the firms search, the less likely it is that a given candidate will be high value. Therefore, the optimal strategy is for firms to pick the top-ranked candidate of their preferred status, so long as that candidate doesn't come \enquote{too far} down the ranking, where \enquote{too far} is measured in terms of the accuracy of the ranking tool. 

\updated{
\begin{algorithm}[htbp]
\caption{Given candidate values $\val_1> \val_2$, probabilities of being free $\pfree_1 < \pfree_2$, accuracy of ranking tool $\{P[\perm^i]\}$, $s \in \{0, 1\}$ for status of top-ranked candidate, and index $j$ of first candidate with different free-busy status from $s$. }
\label{algo:superstar}
\DontPrintSemicolon
\LinesNumbered
Define $R = \frac{v_1/\busypen_1 - v_2}{v_1 - v_2/ \busypen_2} \cd r$. \\
\eIf{$R\leq 1$}{
Calculate $j^* = \max_{j \in N}$ such that $ P[\perm^1]/P[\perm^j] \leq 1/R$, and set $j^*=1$ if no such $j$ exists. \\
Calculate $k = \min_{i \in [j^*]}$ such that $ s_k = 1$, and $k=1$ if no such index exists.  
}{
Calculate $j^* = \max_{j \in N}$ such that $ P[\perm^1]/P[\perm^j] \leq R$, and set $j^*=1$ if no such $j$ exists. \\
Calculate $k = \min_{i \in [j^*]}$ such that $ s_k = 0$, and $k=1$ if no such index exists. 
}
\Return{$k$} 
\end{algorithm}

}

\begin{figure}
    \centering 
\includegraphics[width=3.3in]{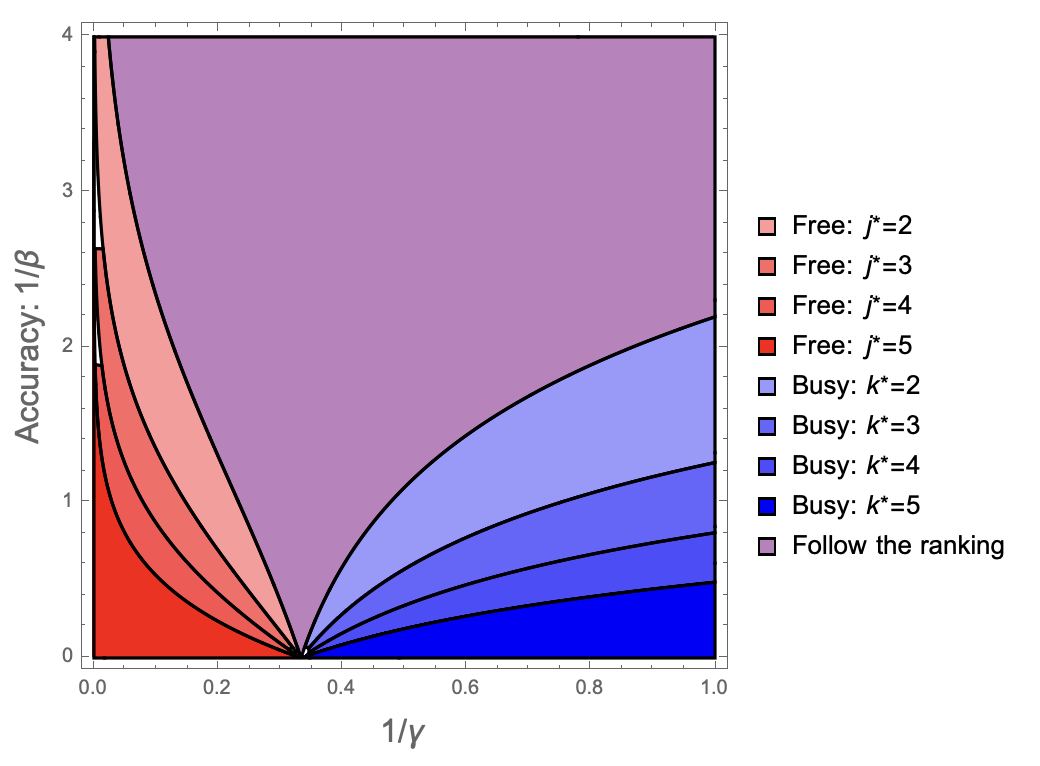}
\caption{Illustration of strategy in Algorithm \ref{algo:superstar} for firm, given $\nitem=5$ candidates with $\val_1 > \val_2=0$ and permutations given by Plackett-Luce. The x axis varies the $\busypen$ busy penalty, while the y axis varies accuracy as parameterized by the $1/\beta$ Gumbel noise parameter (higher values increase accuracy). Shades of {\color{red} red} indicate regions where first-free selection is the optimal strategy and shades of {\color{blue} blue} indicate where first-busy selection is optimal. Darker shades indicate regions where the firm has larger $j^*$ or $k^*$ and thus \enquote{hunts} further down the list to find a candidate with their preferred free/busy status. The {\color{purple} purple} region is where the optimal strategy is \enquote{follow the ranking} (pick the top ranked candidate always). Note that the choice of first-free or first-busy depends only on $\busypen$, while the window size depends on the accuracy of the ranking tool.
}

\label{fig:optstrat}
\end{figure}
For simplicity, we will refer to the three different strategy families that Algorithm \ref{algo:superstar} implies as: \textbf{$k$-free} if the firm is preferentially picking a free candidate within a \enquote{window} of size $k$, \textbf{$k$-busy} if the firm is preferentially picking a busy candidate within a \enquote{window} of size $k$, and \textbf{follow the ranking} if the window is of size 1, implying that the firm always ignores its own free/busy information and simply follows the ranking tool. Figure \ref{fig:optstrat} includes a visual illustration of this strategy space as a function both of the busy penalty $\busypen$ and the accuracy of the ranking tool. 

First, we will note one important property of the algorithm in Algorithm \ref{algo:superstar}: as accuracy increases (e.g. noise in the RUM decreases), the optimal window size of the algorithm only shrinks. This should make intuitive sense: firms know that the high value candidate is more likely to be at the top of the list, and thus search less far for candidates with their preferred free/busy status.

\begin{lemma}\label{obs:vpos}
As accuracy increases, the window size $j^*$ in Algorithm \ref{algo:superstar} stays constant or shrinks.
\end{lemma}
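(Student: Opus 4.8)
The plan is to reduce the lemma to a single monotonicity fact about the Plackett--Luce distribution. First I would observe that in Algorithm~\ref{algo:superstar} the window size is, in both branches, of the form $j^{\ast} = \max\{\,j\in[\nitem]\,:\,P[\perm^1]/P[\perm^j]\le c\,\}$, where $c = 1/R$ when $R\le 1$ and $c=R$ when $R\ge 1$. The key observations are that $c\ge 1$ in either case, and that $c$ is a function of $\val_1,\val_2,\busypen_1,\busypen_2,\pfree_1,\pfree_2$ only --- none of which involve the ranking accuracy. Hence it suffices to show that for every fixed $j$ the ratio $P[\perm^1]/P[\perm^j]$ is non-decreasing as accuracy increases: then the set $S_c:=\{j\in[\nitem]:P[\perm^1]/P[\perm^j]\le c\}$ can only lose elements, and since $1\in S_c$ always (the ratio is $1\le c$) the set stays nonempty, so $j^{\ast}=\max S_c$ stays constant or shrinks.

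Next I would establish the monotonicity of each ratio directly from the Plackett--Luce form in the superstar setting. Writing the item weights as $e^{\val_i/\tau}$ with $\tau>0$ the Gumbel scale (smaller $\tau$ meaning higher accuracy), set $\rho := e^{(\val_1-\val_2)/\tau}$; if $\val_1=\val_2$ all the ratios are identically $1$ and there is nothing to prove, so assume $\val_1>\val_2$, whence $\rho>1$ and $\rho$ strictly increases as accuracy increases. Using the PL chain rule together with the symmetry of the $\nitem-1$ low-value items --- once the superstar has been drawn, every remaining draw is uniform over the low items --- I would obtain the closed form $P[\perm^i]=\frac{1}{(\nitem-i)!}\cdot\frac{\rho}{\rho+(\nitem-i)}\cdot\prod_{\ell=1}^{i-1}\frac{1}{\rho+(\nitem-\ell)}$ (a function of $\rho$ alone, the powers of $e^{\val_2/\tau}$ having cancelled). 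Taking the ratio, simplifying, and cancelling the common $(\rho+\nitem-1)$ factor then leaves $P[\perm^1]/P[\perm^j]=\frac{1}{\prod_{\ell=1}^{j-1}(\nitem-\ell)}\prod_{\ell=2}^{j}\bigl(\rho+\nitem-\ell\bigr)$, a polynomial in $\rho$ with strictly positive coefficients: each factor $\rho+\nitem-\ell$ with $2\le\ell\le j\le\nitem$ is positive and increasing in $\rho$. Therefore $P[\perm^1]/P[\perm^j]$ is increasing in $\rho$, hence in accuracy, for every $j$.

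Combining the two steps gives the lemma, and I would close with a remark that the only property of Plackett--Luce actually used is ``$P[\perm^1]/P[\perm^j]$ is monotone non-decreasing in accuracy for all $j$,'' which is precisely the natural monotonicity any reasonable notion of ``more accurate'' should satisfy; this is the sufficient condition under which the result extends to the more general permutation models of the appendix.

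The main obstacle is really the bookkeeping in the second step: putting the PL product into closed form in the superstar setting and carrying out the cancellation cleanly enough to expose the ``polynomial with positive coefficients'' structure, from which monotonicity is immediate. Everything else --- noticing that the threshold $c$ is accuracy-independent, that $c\ge 1$, and that $j=1$ always qualifies so the ``no such $j$'' branch is vacuous --- is routine once the ratio formula is in hand.
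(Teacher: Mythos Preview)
Your proposal is correct and matches the paper's approach: the paper's Definition~\ref{def:accuracy} takes ``$P[\perm^i]/P[\perm^j]$ increases with accuracy for all $i<j$'' as the defining property of increased accuracy, and Corollary~\ref{cor:gRUMbelacc} (via Observation~\ref{obs:gRUMbel}) establishes exactly the Plackett--Luce ratio formula and monotonicity you derive; the lemma then follows from the threshold $c$ being accuracy-independent, just as you argue. One small remark: your closed form for $P[\perm^i]$ is the probability of a \emph{single} permutation with the superstar at position $i$ (it differs from the paper's aggregate $P[\perm^i]$ by a factor of $(\nitem-1)!$), but since this factor cancels in the ratio your argument goes through unchanged.
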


Finally, we will analyze the performance of the Algorithm in \ref{algo:superstar}. As a warm-up, we will show that the there exist cases where the \enquote{optimal} strategy differs from that in Algorithm \ref{algo:superstar}. 

\begin{restatable}{lemma}{counter}\label{lem:counter}
In the limit of an extremely uninformative ranking, there exist scenarios where the optimal strategy differs from Algorithm \ref{algo:superstar}. 
\end{restatable}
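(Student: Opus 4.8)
The plan is to determine what Algorithm~\ref{algo:superstar} prescribes when the ranking carries essentially no information, to compute the Bayes-optimal action in that same limit, and then to exhibit parameters for which the two disagree. The crux will be that the algorithm's free-versus-busy decision is insensitive to the \emph{number} of busy candidates in the realized status vector, whereas the genuinely optimal decision is not, \emph{unless} $\val_2=0$. First I would formalize ``extremely uninformative'' as $P[\perm^i]\to 1/\nitem$, equivalently $P[\perm^1]/P[\perm^j]\to 1$ for all $j$ (the Plackett--Luce limit as the Gumbel scale diverges). Fixing admissible parameters with $R\neq 1$, so that $\min(R,1/R)>1$, I would observe that for large enough noise $P[\perm^1]/P[\perm^j]<\min(R,1/R)$ for every $j$, hence the window in Algorithm~\ref{algo:superstar} is $j^*=\nitem$ and the algorithm collapses to ``select a free candidate if $R\le 1$, a busy candidate if $R\ge 1$'' (with the position-$1$ fallback when the preferred status is absent) --- a rule depending only on $\val_1,\val_2,\busypen_1,\busypen_2,\ratio$, not on $\avail$. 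Since in the uniform limit the high-value candidate's position, conditioned on $\avail$, is uniform among the free positions and uniform among the busy positions, only the choice of \emph{status type} matters.

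Next I would compute the Bayes-optimal type directly. Conditioning on a status vector with $f$ free and $b=\nitem-f$ busy candidates and a uniform prior on the high-value candidate's position, Bayes' rule makes a free (resp.\ busy) position hold the high-value candidate with probability proportional to $\pfree_1/\pfree_2$ (resp.\ $(1-\pfree_1)/(1-\pfree_2)$); using the identity $\frac{(1-\pfree_1)/(1-\pfree_2)}{\pfree_1/\pfree_2}=\ratio$ and normalizing, this is $q_F=\tfrac{1}{f+b\ratio}$ for a free position and $q_B=\ratio\,q_F$ for a busy one. With $\busypen_1=\busypen_2=\busypen$ for simplicity, selecting free has expected utility $q_F\val_1+(1-q_F)\val_2$ and selecting busy has $\tfrac{1}{\busypen}\p{q_B\val_1+(1-q_B)\val_2}$, so after substituting $q_B=\ratio q_F$ the optimal choice is ``free'' exactly when
\[
q_F(\val_1-\val_2)(\busypen-\ratio) \ge -\val_2(\busypen-1).
\]
When $\ratio>\busypen>1$ and $\val_1>\val_2>0$ this reads ``pick free iff $q_F\le q^*$'' with $q^*=\tfrac{\val_2(\busypen-1)}{(\val_1-\val_2)(\ratio-\busypen)}>0$; note that $q^*$ is a constant while $q_F=1/(f+b\ratio)$ moves with the realized $b$, which the algorithm's threshold $R$ never sees.

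To finish, I would take $\nitem=3$ and choose parameters with $\val_2>0$ and $\ratio>\busypen>1$ so that $q^*$ lies strictly between $\tfrac{1}{1+(\nitem-1)\ratio}$ (the value of $q_F$ when $b=\nitem-1$) and $\tfrac{1}{(\nitem-1)+\ratio}$ (its value when $b=1$); such a choice exists by continuity of $q^*$ in $\val_1/\val_2$. A concrete instance is $\pfree_1=\tfrac15,\ \pfree_2=\tfrac12$ (so $\ratio=4$), $\busypen=2$, $\val_1=5$, $\val_2=1$, giving $q^*=\tfrac18\in(\tfrac19,\tfrac16)$ and $R=\tfrac43>1$: here the algorithm always prefers busy, yet the Bayes-optimal type is busy when the realized vector has one busy candidate and free when it has two. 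Because the algorithm commits to the same type regardless, it is strictly suboptimal for one of these two realizable status vectors, which I would confirm by the direct two-utility comparison above. (When $\val_2=0$ the displayed inequality's right-hand side is $0$ and $q^*$ degenerates, so the discrepancy vanishes --- matching the fact that Algorithm~\ref{algo:superstar} is exactly optimal in the $\val_2=0$ superstar setting.)

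The main obstacle is the first step: making ``the limit of an extremely uninformative ranking'' precise and verifying that the window $j^*$ really saturates at $\nitem$ there, so that the algorithm becomes a rule independent of the realized status vector while the Bayes-optimal rule stays dependent on it. Everything after that is elementary algebra together with a check that the numerical parameters respect the model ($\busypen\ge 1$, $\pfree_1<\pfree_2$, $\val_1>\val_2\ge 0$, $R\neq 1$) and that $q^*$ genuinely separates two realizable status vectors.
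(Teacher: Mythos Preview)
Your proposal is correct and shares the paper's core insight: in the fully uninformative limit with $\val_2>0$ the Bayes-optimal free-versus-busy choice flips with the realized number of busy candidates, whereas Algorithm~\ref{algo:superstar}'s decision (governed solely by $R$) does not. The paper likewise takes $\nitem=3$ and contrasts the status vectors $[0,0,1]$ and $[1,1,0]$, but reaches the contradiction by plugging $P[\perm^1]=P[\perm^2]=P[\perm^3]$ into Equations~\eqref{eq:firstbusyeq} and~\eqref{eq:firstfreeeq} from the proof of Theorem~\ref{thrm:vposntwoaccstrat} and solving for a range of $\ratio$ (with concrete parameters $\val_1=1$, $\val_2=0.3$, $\busypen=2$, $\ratio\in[7.25,17]$). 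Your route---computing the posterior $q_F=1/(f+b\ratio)$ directly and isolating the crossover threshold $q^*$---is self-contained and arguably cleaner, since it does not borrow machinery from the later theorem and makes transparent exactly why the dependence on $b$ disappears when $\val_2=0$.

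One small slip: you write ``$\min(R,1/R)>1$'' where you mean $\max(R,1/R)>1$; the threshold the algorithm compares $P[\perm^1]/P[\perm^j]$ against is $R$ when $R>1$ and $1/R$ when $R<1$, and $\min(R,1/R)\le 1$ always. This is harmless for your concrete instance, since there $R=4/3>1$ and you only need $P[\perm^1]/P[\perm^j]\to 1<R$, which holds.
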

\begin{proof}[Proof sketch]
Consider $\nitem=3$ candidates with a completely uninformative ranking: $P[\perm^1] = P[\perm^2] = P[\perm^3]$, so the only source of information is the free busy status. Next, consider two status vectors: one is $[0, 0, 1]$ (candidates are busy, busy, free), and the second is $[1, 1, 0]$ (free, free, busy). We show that there exist parameters such that it is optimal to pick the \emph{last} candidate for both of these status vectors: this shows that, while for a fixed status vector the optimal strategy is always to pick the top-ranked free or top-ranked busy candidate, the optimal strategy cannot be described as in Algorithm \ref{algo:superstar} (\enquote{first-free} or \enquote{first-busy}) because firms may preferentially pick free or busy candidates based on the situation. 
\end{proof}

However, this result should not be very surprising: in the limit of an extremely uninformative ranking, we should not expect the optimal strategy to be well-described in terms of a \enquote{window} of top $j$ items in the ranking. Instead, we will show in Theorem \ref{thrm:vposntwoaccstrat} that Algorithm \ref{algo:superstar} has low additive error when the ranking tool is relatively accurate.

\begin{restatable}{theorem}{vposntwoaccstrat}\label{thrm:vposntwoaccstrat}
Algorithm \ref{algo:superstar} as applied to a setting where $j$ is the first index that differs from $1$ has error that is optimal up to an additive factor of $\epsilon$, for $\epsilon \leq \frac{p_2}{p_1} \cd \val_2 \cd (1-1/\busypen_2) \cd (1-P[\perm^1] - P[\perm^j])$. 
\end{restatable}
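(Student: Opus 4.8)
The plan is to collapse the comparison to a single scalar, identify the ratio test inside Algorithm~\ref{algo:superstar} as a surrogate for the exact value comparison, and bound the (clean) discrepancy between the two. \textbf{Reduction.} Fix the realized status vector $\avail$, let $\avail_1$ be the status of the top-ranked candidate, and let $j$ be the first index with $\avail_j\neq\avail_1$ (if no such $j$ exists the algorithm trivially follows the ranking and the claim is vacuous). Since ``first free'' and ``first busy'' are precisely positions $1$ and $j$ in some order, Lemma~\ref{lem:alwaysborf} says both the optimal action and the action returned by Algorithm~\ref{algo:superstar} lie in $\cb{1,j}$, so I only need to bound the loss when the two disagree. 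By Bayes' rule, using the prior $P[\perm^m]$ on the location of the high-value candidate and the fact that each position's free/busy bit is an independent $Bern(\pfree_1)$ draw at the high-value position and $Bern(\pfree_2)$ elsewhere, the posterior location distribution is $q_m = P[\perm^m]\,\rho(\avail_m)/Z$ with $\rho(1)=\pfree_1/\pfree_2$, $\rho(0)=(1-\pfree_1)/(1-\pfree_2)$ and normalizer $Z=\sum_m P[\perm^m]\rho(\avail_m)$; note $\pfree_1\le\pfree_2$ gives $\rho(1)\le 1\le\rho(0)$ and $\rho(0)/\rho(1)=\ratio$.

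\textbf{Exact comparison vs.\ surrogate.} Take $\avail_1=1$ (the $\avail_1=0$ case is symmetric, interchanging $1$ and $j$). The expected value of choosing position $1$ is $\val_1 q_1+\val_2(1-q_1)$ and that of choosing the busy position $j$ is $\tfrac{\val_1}{\busypen_1}q_j+\tfrac{\val_2}{\busypen_2}(1-q_j)$; let $\Delta_{\mathrm{opt}}$ be the second minus the first, so the optimum picks $j$ iff $\Delta_{\mathrm{opt}}\ge0$. In the branch where the algorithm can select $j$, it does so iff $P[\perm^1]/P[\perm^j]\le R$. I would rewrite this as $P[\perm^j]\,\ratio(\tfrac{\val_1}{\busypen_1}-\val_2)-P[\perm^1](\val_1-\tfrac{\val_2}{\busypen_2})\ge 0$ (both $\val_1-\val_2/\busypen_2>0$ and, in this branch, $\val_1/\busypen_1-\val_2>0$), multiply by $\rho(1)/Z>0$, and use $\rho(1)\ratio=\rho(0)$ to recognize the coefficients of $P[\perm^j]$ and $P[\perm^1]$ as $q_j$ and $q_1$; this turns the test into $\Delta_{\mathrm{alg}}\ge0$, where
\[
\Delta_{\mathrm{alg}} \;=\; q_j\p{\tfrac{\val_1}{\busypen_1}-\val_2} - q_1\p{\val_1-\tfrac{\val_2}{\busypen_2}} .
\]
The point of clearing denominators this way is that $q_1$ and $q_j$ now enter $\Delta_{\mathrm{alg}}$ linearly, in the same slots as in $\Delta_{\mathrm{opt}}$.

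\textbf{The discrepancy collapses.} Writing $\Delta_{\mathrm{opt}}=q_j(\tfrac{\val_1}{\busypen_1}-\tfrac{\val_2}{\busypen_2})-q_1(\val_1-\val_2)-\val_2(1-\tfrac1{\busypen_2})$ and subtracting, every term cancels except those carrying a factor $\val_2(1-1/\busypen_2)$, which telescope to
\[
\Delta_{\mathrm{opt}}-\Delta_{\mathrm{alg}} \;=\; -\,\val_2\p{1-\tfrac1{\busypen_2}}\p{1-q_1-q_j}\;\le\;0 .
\]
Hence $\Delta_{\mathrm{opt}}\le\Delta_{\mathrm{alg}}$ with a gap of exactly $\val_2(1-1/\busypen_2)(1-q_1-q_j)$; since the algorithm picks $j$ iff $\Delta_{\mathrm{alg}}\ge0$ while the optimum picks $j$ iff $\Delta_{\mathrm{opt}}\ge0$, the two disagree only when both lie in an interval of that length, so the welfare loss is at most $\val_2(1-1/\busypen_2)(1-q_1-q_j)$ (and in the two sub-cases where the algorithm returns position $1$ without consulting $j^*$, one checks from $R\le1$ and $P[\perm^1]\ge P[\perm^j]$ that $\Delta_{\mathrm{alg}}\le0$, so $\Delta_{\mathrm{opt}}\le0$ and the loss is zero). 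It then remains to bound $1-q_1-q_j=\tfrac1Z\sum_{m\neq1,j}P[\perm^m]\rho(\avail_m)$: estimating $\rho(\avail_m)\le\rho(0)$ in the numerator and $Z\ge\rho(1)\sum_m P[\perm^m]=\rho(1)$ in the denominator bounds $1-q_1-q_j$ by a constant multiple of $1-P[\perm^1]-P[\perm^j]$, and the resulting constant is the $\tfrac{\pfree_2}{\pfree_1}$ appearing in the statement.

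\textbf{Main obstacle.} The delicate part is the middle two steps: one has to clear denominators in the algorithm's ratio test in exactly the right way so that $\Delta_{\mathrm{alg}}$ is expressed through the same posterior weights $q_1,q_j$ as the true difference $\Delta_{\mathrm{opt}}$, and then verify that the subtraction leaves behind only the single clean term $\val_2(1-1/\busypen_2)(1-q_1-q_j)$ (checking the four $R\lessgtr1$ / $\avail_1\in\{0,1\}$ sub-cases, which are pairwise symmetric). The Bayesian update and the concluding estimate of $1-q_1-q_j$ are routine by comparison.
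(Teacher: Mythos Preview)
Your argument is essentially the same computation the paper carries out, but packaged more cleanly: where the paper writes out the four sub-cases via the raw likelihoods $P[\avail\mid\perm^i]$ and compares the exact inequalities~(\ref{eq:firstbusyeq}),~(\ref{eq:firstfreeeq}) against the algorithm's threshold tests~(\ref{eq:firstfreestrateq}),~(\ref{eq:firstbusystrateq}) one at a time, you normalize once to the posterior weights $q_m$ and obtain the single identity $\Delta_{\mathrm{opt}}-\Delta_{\mathrm{alg}}=\pm\val_2(1-1/\busypen_2)(1-q_1-q_j)$, from which the one-sided disagreement and the loss bound follow immediately. This is a genuine simplification of the case analysis, and the sign check in the ``algorithm returns $1$ without consulting $j^*$'' sub-cases is handled correctly.

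The one point that needs attention is the final constant. Your crude estimate $\rho(\avail_m)\le\rho(0)$, $Z\ge\rho(1)$ yields
\[
1-q_1-q_j\;\le\;\frac{\rho(0)}{\rho(1)}\,(1-P[\perm^1]-P[\perm^j])\;=\;\ratio\,(1-P[\perm^1]-P[\perm^j]),
\]
and $\ratio=\tfrac{\pfree_2(1-\pfree_1)}{\pfree_1(1-\pfree_2)}$ is \emph{not} $\tfrac{\pfree_2}{\pfree_1}$; in fact $\ratio\ge\tfrac{\pfree_2}{\pfree_1}$, so the bound you actually derive is weaker than the one stated. The paper's own proof arrives at the same $\ratio$-factor in its first displayed bound and then passes to $\tfrac{\pfree_2}{\pfree_1}$ through a chain of equalities that does not hold as written, so the discrepancy is not unique to your write-up; but as it stands neither your final sentence nor the paper's manipulation justifies the constant $\tfrac{\pfree_2}{\pfree_1}$ in the theorem. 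If you want to match the stated bound you would need a sharper estimate on $(1-q_1-q_j)$ that exploits the structure of $\avail$ at positions $2,\dots,j-1$ rather than the blanket $\rho(\avail_m)\le\rho(0)$.
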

At a high level, the bound in Theorem \ref{thrm:vposntwoaccstrat} has high error when a) there is a relatively low probability of the high-valued candidate being in index $1$ or $j$ (the only two indices that Algorithm \ref{algo:superstar} considers choosing), or b) when the low value candidate has relatively high value. Both of these are scenarios where we expect preferentially picking candidates based on free-busy status to be a poor strategy. Figure \ref{fig:simsstrat} in the appendix includes numerical simulations studying the sub-optimality of Algorithm \ref{algo:superstar}, showing that often in practice the error is extremely small. Throughout the superstar analysis, we will assume that firms are acting according to Algorithm \ref{algo:superstar}.

\section{Superstar: welfare implications }\label{sec:welfare}

Section \ref{sec:optstrat} described the strategies that firms select under different parameter regimes: either preferentially looking for free candidates, for busy candidates, or simply following the ranking (picking the top ranked candidate). In this section, we will discuss how the firm's strategy impacts the broader system, such as the welfare of firms and candidates. Throughout, we will focus specifically on the impact of changing accuracy in the ranking tool on both strategy and welfare. 

\subsection{Welfare of firm}
First, we will note that welfare of the firm making hiring decisions always increases with increased accuracy of the ranking tool:
\begin{restatable}{theorem}{welfarefirm}\label{thrm:welfarefirm}
Increasing the accuracy always increases welfare of firms using a first-free or first-busy strategy as described in Section \ref{sec:optstrat}. 
\end{restatable}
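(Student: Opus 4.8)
The plan is to separate the two channels through which ranking accuracy enters the firm's payoff: the distribution over the high-value candidate's position in the ranking, and (via Lemma~\ref{obs:vpos}) the optimal window size $j^*$. I would dispose of the second channel by an envelope argument. The choice of direction (first-free vs.\ first-busy) is governed entirely by $R$, which does not depend on accuracy, so it is fixed throughout. For a fixed direction and a fixed window of size $k$, write $U_k(\permdist)$ for the firm's expected utility under permutation distribution $\permdist$; by the characterization in Algorithm~\ref{algo:superstar}, the firm's actual utility is $U^*(\permdist) = \max_k U_k(\permdist)$, attained at $k = j^*(\permdist)$. If $\permdist'$ is more accurate than $\permdist$ and $k^\star = j^*(\permdist)$, then
\[ U^*(\permdist') \;=\; \max_k U_k(\permdist') \;\ge\; U_{k^\star}(\permdist') \;\ge\; U_{k^\star}(\permdist) \;=\; U^*(\permdist), \]
so the theorem reduces to the single-strategy claim that $U_k(\permdist)$ is non-decreasing in accuracy for every fixed $k$.

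For that claim I would condition on the position $m$ of the high-value candidate. Because each candidate's free/busy bit is drawn independently of the permutation, conditional on $\perm^m$ the statuses of the positions are: position $m$ free with probability $\pfree_1$ and every other position free with probability $\pfree_2$, all independent, irrespective of accuracy. Hence the conditional expected value obtained by the (fixed) strategy, $g_k(m) := \mathbb{E}[\text{value selected}\mid \perm^m]$, depends only on $\val_1,\val_2,\pfree_1,\pfree_2,\busypen,k,m$ and not on accuracy, and $U_k(\permdist) = \sum_m P[\perm^m]\,g_k(m)$. Accuracy therefore enters only through the weights $P[\perm^m]$, and the relevant structural fact is that a more accurate ranking first-order stochastically dominates a less accurate one toward index $1$: in Plackett--Luce, letting $a$ denote the weight ratio between the high-value candidate and a low-value candidate (which increases as accuracy increases), one has $\Pr[\text{high-value candidate in position}\ge m] = \prod_{\ell=1}^{m-1}\frac{\nitem-\ell}{a+\nitem-\ell}$, and every factor is decreasing in $a$, hence decreasing as accuracy grows. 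This is the only place Plackett--Luce is used, and the appendix can record exactly this stochastic-dominance property as the sufficient condition on $\permdist$.

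Given this dominance, $U_k(\permdist') \ge U_k(\permdist)$ follows as soon as $g_k$ is non-increasing in $m$ — intuitively, moving the high-value candidate up the list only helps the firm. I would establish this from the same case analysis that underlies Algorithm~\ref{algo:superstar}: for a first-free strategy with window $k$, summing over which earlier positions are free yields, for $2\le m\le k$, an expression of the form $g_k(m) = \val_2 + (1-\pfree_2)^{m-1}\pfree_1(\val_1-\val_2) + (\text{a term not depending on } m)$, which is visibly non-increasing in $m$; $g_k(m)$ is constant for $m>k$; and $g_k(1)$ is a separate expression, because the ``all busy in the window'' fallback then selects the high-value candidate itself (value $\val_1/\busypen_1$) rather than a low-value one (value $\val_2/\busypen_2$). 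The hard part is precisely the transition $g_k(1)\ge g_k(2)$, where this fallback value changes: it is immediate when the busy penalty is candidate-independent ($\busypen_1=\busypen_2$, the regime of Figure~\ref{fig:optstrat}), since the change then contributes $(\val_1-\val_2)/\busypen\ge 0$, and in the general case it must be extracted from the defining first-free regime condition $R\le 1$. The first-busy case is handled symmetrically, swapping the roles of free and busy, replacing $\pfree$ by $1-\pfree$, and using $R\ge 1$; so the crux of the proof — and essentially its only content beyond bookkeeping — is verifying that $g_k$ is non-increasing in $m$ through this boundary.
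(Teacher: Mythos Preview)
Your approach is different from the paper's and has a genuine gap at a boundary you did not flag. You verify that $g_k$ is non-increasing on $\{2,\dots,k\}$ and constant on $\{k+1,\dots,\nitem\}$, and you single out $g_k(1)\ge g_k(2)$ as the hard transition; but you never check $g_k(k)\ge g_k(k+1)$, and it can fail. In the first-free case a direct computation gives
\[
g_k(k)-g_k(k+1)\;=\;(1-\pfree_2)^{k-1}\Bigl[\val_1\pfree_1-\val_2\pfree_2+\tfrac{\val_2}{\busypen_2}(\pfree_2-\pfree_1)\Bigr],
\]
and the bracket is negative whenever $\val_1\pfree_1<\val_2\pfree_2-\tfrac{\val_2}{\busypen_2}(\pfree_2-\pfree_1)$. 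For instance $\val_1=2,\ \val_2=1,\ \pfree_1=0.1,\ \pfree_2=0.5,\ \busypen_1=\busypen_2=1.9$ gives bracket $\approx -0.09$ while $R\approx 0.31<1$ and $\val_1/\busypen_1>\val_2$, so we are squarely in the first-free regime with a common busy penalty. Intuitively: moving the high-value candidate from just outside the window into slot $k$ replaces a slot that was free with probability $\pfree_2$ by one free with probability $\pfree_1<\pfree_2$, which can make the ``all busy'' fallback more likely and hurt the firm. Since $g_k$ is not monotone, first-order stochastic dominance alone does not yield $U_k(\permdist')\ge U_k(\permdist)$, and the reduction breaks. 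A smaller issue is the envelope step: you write $U^*(\permdist)=\max_k U_k(\permdist)$, but Algorithm~\ref{algo:superstar} selects $j^*$ by the threshold rule $P[\perm^1]/P[\perm^j]\le 1/R$, which Theorem~\ref{thrm:vposntwoaccstrat} only shows is approximately optimal, so the inequality $U^*(\permdist')\ge U_{k^\star}(\permdist')$ is not automatic.

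The paper avoids both difficulties by coupling rather than by pointwise monotonicity. Lemma~\ref{lem:convertacc} constructs a random index-swap that transports a draw from the more accurate distribution to one from the less accurate distribution, and the proof then compares the realized picks directly, invoking Lemma~\ref{lem:alwaysborf} to argue that the candidate selected under the less accurate coupling can only have weakly lower value. This sidesteps any claim about the shape of $g_k$ at the window boundary.
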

This should be relatively intuitive: as the accuracy of the tool increases, the utility of a best-responding firm always increases. 

\subsection{Welfare for candidates:}\label{sec:welfarecandidates}

Next, we will consider impacts that the firm's hiring strategy has on candidate welfare. For example, a firm using the first-busy strategy will select busy candidates much more frequently than a firm using the first-free strategy. Picking a busy candidate could involve negative impacts, particularly in terms of equity: for example, it could be viewed as firms \enquote{poaching} candidates from other firms rather than hiring unemployed candidates, which may lead to a higher unemployment rate. However, picking a busy candidate could also be viewed positively: candidates who receive competing offers could potentially use them to increase their compensation.

What is the impact of increasing accuracy on how often busy candidates are picked? Theorem \ref{thrm:increaseddup} shows that this can \emph{increase} or \emph{decrease}.  For intuition, increasing the accuracy of the ranking tool has multiple opposing effects: first, it directly changes the distribution of candidates over indices, which in general means candidates ranked towards the top are more likely to be higher value. Secondly, because it changes the distribution of candidates, it also changes the frequency of status vectors: specifically, top-ranked candidates are more likely to be busy (because they are more likely to be high value candidates). Finally, it also changes the strategy of firms: because Algorithm \ref{algo:superstar} relies on the accuracy of the ranking tool, as the accuracy of this tool changes, the firm (hiring manager) will also update her strategy. 

Theorem \ref{thrm:increaseddup} considers all three of these simultaneous changes. Specifically, Theorem \ref{thrm:increaseddup} shows that, when accuracy increases, if a firm is using the strategy first-free, the probability that the picked candidate is busy always decreases. Conversely, if the firm is using the strategy first-busy, then the probability that the picked candidate is busy \emph{may increase} (if the firm's strategy window size stays constant) or \emph{may decrease} (if the window size shrinks).

\begin{restatable}{theorem}{increaseddupnew}\label{thrm:increaseddup}
Increasing the accuracy of the ranking tool \emph{increases} the probability that the picked candidate is busy among firms using first-free and \emph{may decrease} the probability that the picked candidate is busy among firms using first-busy.  
\end{restatable}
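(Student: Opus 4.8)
My plan is to set up the probability that the picked candidate is busy as an explicit function of the accuracy parameter and the firm's strategy, and then analyze it separately in the two regimes (first-free and first-busy). Recall from Algorithm \ref{algo:superstar} that in the first-free regime the firm picks the top-ranked free candidate within a window of size $j^*$, defaulting to the top candidate if none is free, and symmetrically for first-busy. The key structural fact to exploit is that in the superstar model the only randomness relevant to value is the index of the high-value candidate, whose distribution is $\{P[\perm^i]\}$, together with the independent Bernoulli free/busy draws with parameters $\pfree_1 < \pfree_2 = \cdots$. So I would write $\Pr[\text{picked candidate busy}]$ by conditioning on the position of the superstar and on the realized status vector restricted to the first $j^*$ positions.

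For the first-free case, I would argue the probability that the picked candidate is busy decreases (the theorem statement as phrased is a bit loose; I read the intended claim, consistent with the surrounding text, as: first-free $\Rightarrow$ probability busy decreases, first-busy $\Rightarrow$ may increase or decrease). Here I'd decompose into two events: either some candidate in the window is free (then the picked one is free, contributing $0$ to the "busy" probability), or all $j^*$ candidates in the window are busy (then the firm picks candidate $1$, which is busy). So $\Pr[\text{busy}] = \Pr[\avail_1 = \cdots = \avail_{j^*} = 0]$. As accuracy increases, two things happen: by Lemma \ref{obs:vpos} the window $j^*$ weakly shrinks, which weakly increases this probability; but also the distribution $\{P[\perm^i]\}$ concentrates the superstar toward the top, which makes the top positions \emph{more} likely to be the low-$\pfree$ superstar and hence \emph{more} likely to be busy — this also increases the probability. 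I'd make the stochastic-dominance statement precise: increasing accuracy shifts $\{P[\perm^i]\}$ so that $P[\perm^1]$ increases (and more generally the distribution dominates in the appropriate order), and then $\Pr[\avail_1 = \cdots = \avail_{j^*}=0] = \sum_i P[\perm^i] \cdot (\text{prob all of first } j^* \text{ busy} \mid \text{superstar at } i)$, where moving the superstar earlier into the window replaces a factor $(1-\pfree_2)$ by the larger $(1-\pfree_1)$. Both effects push the same direction, giving the monotone increase.

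For the first-busy case, I would show the two effects now \emph{oppose} each other, which is exactly why the direction is ambiguous. Here $\Pr[\text{picked busy}] = 1 - \Pr[\text{picked free}]$, and the picked candidate is free only if all $j^*$ window candidates are free; symmetrically $\Pr[\text{picked busy}] = 1 - \Pr[\avail_1 = \cdots = \avail_{j^*} = 1]$. Increasing accuracy again concentrates the superstar toward the top (raising the chance the window contains the low-$\pfree$ superstar, hence raising $\Pr[\text{busy}]$), but shrinking $j^*$ makes the all-free event more likely, which \emph{lowers} $\Pr[\text{busy}]$. To prove "may increase" I'd exhibit a parameter regime where the window stays constant across an accuracy increase (there is a range of accuracies with the same $j^*$ by the threshold structure $P[\perm^1]/P[\perm^j] \leq R$), isolating the concentration effect, and verify $\Pr[\text{busy}]$ strictly rises there; to prove "may decrease" I'd pick an accuracy increase that triggers a drop in $j^*$ large enough that the all-free probability jumps up enough to dominate. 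The main obstacle I anticipate is making the "concentration toward the top" claim rigorous for the Plackett-Luce family in a form strong enough to compare the weighted sums above — i.e. showing that decreasing the Gumbel noise parameter yields a first-order stochastic dominance shift of $\{P[\perm^i]\}$ toward smaller indices — and then carefully combining that monotonicity with the discrete jump in $j^*$ from Lemma \ref{obs:vpos}, since the two do not move continuously together.
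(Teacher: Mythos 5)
Your proposal follows essentially the same route as the paper's appendix proof: condition on the superstar's index, note that under first-free the picked candidate is busy exactly when all $j^*$ window candidates are busy (and under first-busy, free exactly when all are free), and then separate the fixed-window accuracy effect — which needs precisely the concentration/majorization property of $\{P[\perm^i]\}$ that the paper supplies via Definition \ref{def:accuracy} and Corollary \ref{cor:gRUMbelacc} — from the window-shrinkage effect of Lemma \ref{obs:vpos}; the two effects align for first-free and oppose for first-busy, which is exactly the paper's argument. One correction: your framing sentence for first-free (``the probability that the picked candidate is busy decreases'') contradicts both the theorem statement and your own analysis, which correctly concludes that both effects push that probability \emph{up}; the ``decreases'' wording appears only in an erroneous sentence of the paper's main text, while the theorem, the figure discussion, and the appendix proof all say it increases — so your mathematics is right but your stated reading of the claim is not. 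Two points where you differ usefully from the paper: your event-containment argument for the window effect (all of the top $k$ busy implies all of the top $j<k$ busy) is cleaner than the paper's explicit algebraic comparison of windows $k$ and $k-1$, and your plan to exhibit explicit parameter regimes realizing both the ``may increase'' and ``may decrease'' directions for first-busy is more constructive than the paper, which only establishes the two opposing effects and otherwise defers to simulations.
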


Figure \ref{fig:p_picked_free} illustrates this phenomenon with the RUM with Gumbel noise model. Specifically, both figures include colored, dashed lines showing the probability of a firm picking a busy candidate for every possible window length. Note that for both plots, as the accuracy increases, the colored lines increase: holding window length constant, increasing accuracy increases how frequently busy candidates are picked. For the left figure (first-free), as the window width increases, the probability of picking a busy candidate decreases, but for the right figure (first-busy), the probability decreases with window width: this is simply because shrinking the window size makes it less likely that each firm will find their preferred free/busy type within the window. The solid black lines in Figure \ref{fig:p_picked_free} shows how the firm's strategies change in response to the ranking tool's increase in accuracy. Recall from Lemma \ref{obs:vpos} that increased accuracy always shrinks window size. If a firm is using first-free, then increased accuracy causes the firm to move up and to the right along the colored lines, always increasing how often busy candidates are picked. However, if a firm is using first-busy, then increasing accuracy causes the firm's strategy to move down and to the right - and ultimately could increase or decrease the probability that a busy candidate is picked, depending on the degree to which the accuracy changes. 

\begin{figure}
\centering 
    \includegraphics[width=2.5in]{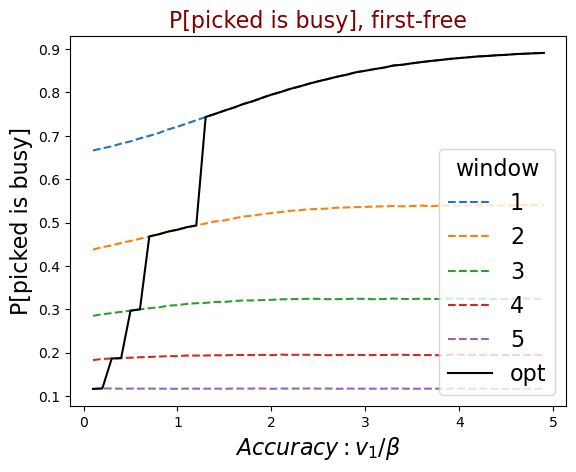}
    \includegraphics[width=2.5in]{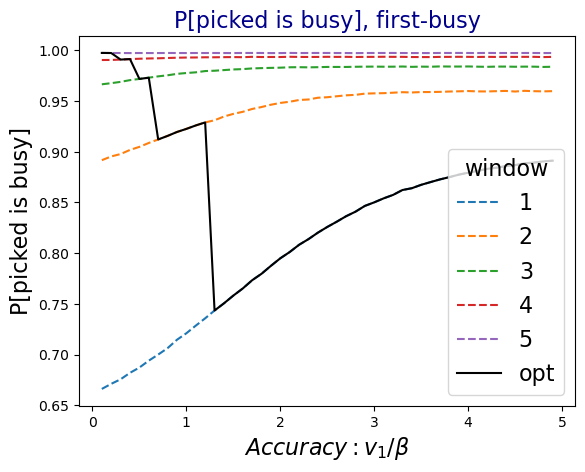}
    \caption{Figures showing how changing the accuracy of the tool changes how frequently the picked candidate is busy (fixed window sizes dashed, optimal window size solid). The top shows a firm using \enquote{first-free} strategy: note that increasing accuracy uniformly increases the probability that the picked candidate is busy. The bottom figure shows a firm using first-busy strategy: note that increasing accuracy could increase or decrease the probability that the picked candidate is busy. Parameters: $\val_1 = 1, \val_2=0, p_1 =0.1, p_2 = 0.4$, each point run with $10^6$ simulations. First-free strategy run with $\busypen = 10$, first-busy run with $\busypen = 1.6$.}
    \label{fig:p_picked_free}
    \vspace{-0.5cm}
\end{figure}

\subsection{Strategic use of ranking tool (welfare of ranking company):}
Finally, we discuss the impact of increased accuracy on how often firms end up using the ranking tool as it was designed - specifically, on how often firms select the top-ranked candidate.

In light of Theorem~\ref{thrm:increaseddup}, one might be tempted to assume that increased accuracy leads to higher ``congestion at the top," in the form of firms being more likely to select highly-ranked candidates even if they are busy.  While it is true that this can occur, the relationship between accuracy and firm selection can also be more subtle: there are instances in which increasing the accuracy of the ranking tool actually \emph{decreases} the likelihood of picking the top-ranked candidate.

Such a non-monotonic relationship between tool accuracy and usage has potentially important implications for tool design.  For example, a firm that uses metrics like the click rate of top-ranked candidates might naturally (but incorrectly) infer from a reduced click rate that a supposed tool improvement led to reduced performance and welfare. Additionally, picking the top ranked item could be viewed as a direct positive for the ranking company, if, for example, they obtain higher advertising revenue when this occurs. The importance of studying how often the top candidate is picked is especially salient when increasing the accuracy of the ranking tool involves a heavy investment in increased data collection, increased algorithmic sophistication, or increased engineering effort. However, as we now show, an improved ranking tool can provide improved welfare for the firm (as guaranteed by Theorem~\ref{thrm:welfarefirm}) while simultaneously causing users to select top-ranked candidates less frequently. 
Specifically, Theorem \ref{thrm:increasedacctopclicknew} shows that increasing the accuracy of the ranking tool can either \emph{increase} or \emph{decrease} the probability that a firm selects the top ranked candidate. 

\begin{restatable}{theorem}{increasedacctopclicknew}\label{thrm:increasedacctopclicknew}
Increasing the accuracy of the ranking tool could \emph{increase} or \emph{decrease} how frequently the firm picks the top-ranked candidate. 
\end{restatable}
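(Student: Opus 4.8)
The plan is to prove this non-monotonicity by exhibiting two explicit constructions: one family of parameters/accuracy changes under which the probability of picking the top-ranked candidate increases, and one under which it decreases. Since the statement is of the form ``can increase or decrease,'' it suffices to produce a single clean example in each direction, and the natural setting is the superstar model of Section~\ref{sec:optstrat} with a small number of candidates (say $\nitem = 2$ or $3$) and Plackett--Luce permutations with Gumbel noise parameter $\beta$.

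For the \emph{increase} direction, I would work in a parameter regime where the firm uses a pure follow-the-ranking strategy throughout (for instance $R$ close enough to $1$, which by the definition in Algorithm~\ref{algo:superstar} forces window size $j^* = 1$ for all accuracy levels in the relevant range). Then the firm always picks the top-ranked candidate regardless of free/busy status, so the probability of picking the top-ranked candidate is identically $1$; to get a strict increase I would instead take a regime just barely inside a $k$-free or $k$-busy window at low accuracy and show that increasing accuracy shrinks $j^*$ to $1$ (using Lemma~\ref{obs:vpos}), so the firm transitions from occasionally skipping the top candidate to always taking it — a strict increase. This direction is essentially a corollary of Lemma~\ref{obs:vpos} combined with the observation that follow-the-ranking picks the top candidate with probability one.

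For the \emph{decrease} direction — which I expect to be the main obstacle — the window size must stay constant (so Lemma~\ref{obs:vpos} doesn't help) while the two remaining effects identified before Theorem~\ref{thrm:increaseddup} work against picking index $1$. I would fix a first-busy firm with window size $j^* = 2$ held constant over a small accuracy interval, and compute the probability that the picked candidate is the top-ranked one. Under first-busy with window $2$, the firm picks index $1$ exactly when index $1$ is busy, i.e. with probability $1 - \pfree_1$ over the free/busy draw — but this is where the subtlety enters: one must condition on the realized status vector and permutation jointly, since the firm picks index $1$ iff $\avail_1 = 0$, OR ($\avail_1 = 1$ and $\avail_2 = 0$ is false... ) — more carefully, with $j^*=2$ the firm picks index $1$ iff index $1$ is busy, and picks index $2$ iff index $1$ is free and index $2$ is busy, and picks index $1$ (the fallback) iff both are free. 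So the top candidate is picked iff $\avail_1 = 0$ or ($\avail_1 = 1$ and $\avail_2 = 1$), whose probability is $(1-\pfree_1) + \pfree_1 \pfree_2$. Crucially, $\pfree_1$ here is the \emph{marginal} probability that whichever candidate lands in index $1$ is free, which is a mixture $\sum_i P[\perm^i \text{ has superstar in pos }1]\cdot(\ldots)$ — wait, in the superstar model the free-probabilities are attached to true identities, so $\pfree_1^{\text{index}} = P[\perm^1]\cdot \pfree_{\text{high}} + (1 - P[\perm^1])\cdot \pfree_{\text{low}}$. As accuracy increases, $P[\perm^1]$ increases, and since $\pfree_{\text{high}} < \pfree_{\text{low}}$ this pushes $\pfree_1^{\text{index}}$ down, i.e. makes index $1$ more likely busy — which \emph{increases} the chance of picking index $1$ under first-busy. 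So to get a decrease I actually need window size $j^*$ to shrink after all, or I need to use a first-\emph{free} firm.

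Reconsidering: the cleaner route to the decrease is a first-free firm. Under first-free with window $j^* = 2$, the firm picks index $1$ iff $\avail_1 = 1$, picks index $2$ iff $\avail_1 = 0, \avail_2 = 1$, and picks index $1$ (fallback) iff both busy. So the top candidate is picked with probability $\pfree_1^{\text{index}} + (1-\pfree_1^{\text{index}})(1-\pfree_2^{\text{index}})$. Increasing accuracy decreases $\pfree_1^{\text{index}}$ (as above) and also decreases $\pfree_2^{\text{index}}$ if $P[\perm^2]$ decreases; the net derivative of $\pfree_1^{\text{index}} + (1-\pfree_1^{\text{index}})(1-\pfree_2^{\text{index}}) = 1 - (1-\pfree_1^{\text{index}})\pfree_2^{\text{index}}$ with respect to an accuracy increase has sign $-\partial[(1-\pfree_1^{\text{index}})\pfree_2^{\text{index}}]$, and since $1-\pfree_1^{\text{index}}$ increases while $\pfree_2^{\text{index}}$ may decrease, this can be of either sign. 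So the plan is: pick $\nitem = 3$, $\val_2 = 0$, concrete $\pfree_{\text{high}}, \pfree_{\text{low}}, \busypen$ so that a first-free window of $2$ is optimal across a short range of $\beta$ (verifiable from Algorithm~\ref{algo:superstar}'s threshold), then choose the numbers so that as $\beta$ increases over that range the quantity $(1-\pfree_1^{\text{index}})\pfree_2^{\text{index}}$ strictly increases — forcing a strict decrease in the top-pick probability — while for a second choice of $\busypen$ (first-busy regime, or first-free with window $1$) it strictly increases. I would carry out the two numerical instantiations explicitly, verify the window-size constants from the algorithm, and present both computations; the work is in choosing parameters that simultaneously satisfy the window-optimality constraint and the desired monotonicity of the mixture, which I'd nail down with a short direct calculation or a figure analogous to Figure~\ref{fig:p_picked_free}.
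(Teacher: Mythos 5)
Your overall plan mirrors the paper's: treat each strategy family separately, exploit the fact that a shrinking window (down to follow-the-ranking) can only increase the chance of taking the top slot for the \emph{increase} direction, and get the \emph{decrease} direction from a first-free firm whose window stays fixed while accuracy rises. That matches the paper's proof, which writes the top-pick probability for first-free with window $k$ as $p_2 + (1-p_2)^k + (p_2-p_1)\bigl((1-p_2)^{k-1}P[\perm^{1\le i\le k}] - P[\perm^1]\bigr)$ (and the analogous expression for first-busy), shows the window-shrink effect is always upward, and observes that the fixed-window accuracy effect can have either sign.

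However, there is a concrete error in your central computation. You write the top-pick probability for a window-$2$ firm as a function of index-level marginals, e.g.\ $1-(1-\pfree_1^{\text{index}})\pfree_2^{\text{index}}$ for first-free, where $\pfree_i^{\text{index}}$ is a mixture over which candidate lands in rank $i$. But the free/busy statuses at ranks $1$ and $2$ are \emph{not} independent given only these marginals: they are coupled through the unobserved permutation, since in the superstar model the high-value candidate can occupy at most one of the two slots. The correct joint is, e.g., $P[\text{rank 1 busy, rank 2 free}] = P[\perm^1](1-p_1)p_2 + P[\perm^2](1-p_2)p_1 + (1-P[\perm^1]-P[\perm^2])(1-p_2)p_2$, which differs from the product of the two mixture marginals whenever $p_1\neq p_2$ (already for $\nitem=2$ with a uniform ranking the product overstates $P[\text{both free}]=p_1p_2$). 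Since the whole decrease-direction argument hinges on the \emph{sign} of a delicate difference, carrying out your ``short direct calculation'' on the factorized formula would not establish the claim; you need to condition on the location of the high-value candidate as the paper does. The same independence slip underlies your earlier assertion that under first-busy a busier rank-1 slot automatically increases the top-pick probability -- the ``all top $k$ free'' fallback term also moves with accuracy, and the paper's footnote notes this inference can fail for extreme $p_2$ and very noisy rankings. With the corrected joint expressions your two-example strategy does go through, so the gap is reparable, but as written the key formula is wrong.
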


The full proof gives an exact condition for when increasing accuracy increases the chance that the top candidate is picked. For intuition, a smaller window size always \emph{increases} the chances that the top candidate is picked: this should make intuitive sense, as the firm is considering fewer alternatives. However, holding window size constant, increasing accuracy could \emph{increase} or \emph{decrease} the probability that the top candidate is picked. Generally, if the firm is using a first-busy strategy, increasing accuracy increases the probability that the top candidate is busy, and then increases the chance that that candidate is picked. Conversely, if the firm is using a first-free strategy, increasing accuracy generally decreases the probability that the firm picks the top ranked candidate because the firm is actively avoiding busy candidates\footnote{Interestingly, the full story is somewhat more nuanced than this: there exist cases where increasing accuracy can \emph{decrease} the probability the top candidate is picked with first-busy, and \emph{increase} the probability with first-free, typically with extreme values of $p_2$ and very inaccurate rankings: see Appendix \ref{app:welfare} for more analysis.}. Figure \ref{fig:increasedecreasetop} in the Appendix gives a corresponding figure to Figure \ref{fig:p_picked_free} illustrating how the probability of the top-ranked item being picked could increase or decrease with accuracy. At a high level, Theorem \ref{thrm:increasedacctopclicknew} illustrates how, at the most basic level, increased accuracy of a ranking tool could lead to counterintuitive effects around how often firms end up picking the candidate suggested by the ranking. 

\section{Beyond superstar}\label{sec:beyondsuperstar}

In this section, we further relax the model and consider cases where the $\nitem$ items can come in more than two types of values and probabilities of being available. First, we show that very similar welfare properties hold even in the beyond-the-superstar setting, assuming that firms use a strategy similar to \enquote{first-free/first-busy} where they create a window size (shrinking with accuracy) and preferentially pick either the top-ranked free or busy candidate within that window. Secondly, we analyze \emph{when} such a strategy is approximately optimal. While we show that the strategy space is much more complex in the beyond-the-superstar setting, we show that under reasonable noise models, a similar strategy achieves low expected error.

\subsection{Welfare}\label{sec:beyondwelfare}
First, we will discuss the impacts of increased accuracy on measures of societal welfare, focusing on extending results from Section \ref{sec:welfare}. Throughout, we will assume that firms are using first-free or first-busy: specifically, that firms are preferentially picking either free or busy candidates within some window, and that the size of this window only shrinks with increased accuracy. Note that this is a very general condition, and we believe that this encompasses a wide range of natural strategies by the firm. 

First, Theorem \ref{thrm:beyondwelfare} generalizes Theorem \ref{thrm:increaseddup}, giving conditions for when the probability of the picked candidate being busy will increase or decrease as accuracy of the permutation increases:  

\begin{restatable}{theorem}{beyondwelfare}\label{thrm:beyondwelfare}
For the beyond-the-superstar setting, if the firm is using a first-free or first-busy strategy with a window size that weakly shrinks with increased accuracy, the following properties hold: 1) If the firm is using first-free, then increased accuracy increases the probability that the picked candidate is busy. 2) If the firm is using first-busy, then increased accuracy \emph{may decrease} the probability that the picked candidate is busy. 
\end{restatable}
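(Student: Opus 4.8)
The plan is to mirror the two parts of Theorem~\ref{thrm:increaseddup}, but decompose the probability that the picked candidate is busy into the same three effects identified in Section~\ref{sec:welfarecandidates}: the change in the distribution of item values over ranking positions, the induced change in the distribution of status vectors, and the change in the firm's window size. For part~(1) I would fix a firm using first-free and condition on the window size $k$. Holding $k$ fixed, I want to show that increasing accuracy weakly increases the probability that the first-free candidate within the top-$k$ window is busy (equivalently, that no free candidate appears in the top $k$, forcing a default pick, or that the first free one is preceded by busy ones that shift which item gets picked toward higher value). Since higher-value items are less available ($\val_i \ge \val_j \Rightarrow \pfree_i \le \pfree_j$) and since increased accuracy stochastically pushes higher-value items toward the top, a coupling/stochastic-dominance argument should give that, position-by-position, the top-ranked candidates become ``more likely high value hence more likely busy.'' Then, because Lemma~\ref{obs:vpos}'s analogue is assumed (window weakly shrinks), I note that shrinking $k$ for a first-free firm makes it \emph{harder} to find a free candidate, which again weakly increases the probability of picking a busy one. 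Both monotone effects point the same direction, so their composition does too; that is part~(1).

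For part~(2) I only need an example: a beyond-superstar instance, a first-busy firm, and an accuracy increase under which the probability of a busy pick goes down. The cleanest route is to reuse the superstar construction behind Theorem~\ref{thrm:increaseddup} — take an instance with, say, two high-value items and the rest low, or simply perturb the superstar instance slightly so it is formally ``beyond superstar'' — choose parameters so that the accuracy increase forces the first-busy window to shrink enough that the firm frequently fails to find any busy candidate in-window and defaults to the top-ranked (now more-often-free) item. I would verify numerically (as the paper does for Figure~\ref{fig:p_picked_free}) that the window-shrinkage effect dominates the two status-distribution effects, then present the closed-form inequality certifying it.

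The main obstacle is part~(1): with heterogeneous values and availabilities, the event ``picked candidate is busy'' is no longer a clean function of a single index as in the superstar case, so I cannot just compare $P[\perm^i]$ terms. I need an \emph{order}-level stochastic-dominance statement: that increasing accuracy induces a coupling under which the value profile read down the ranking dominates (position-wise in expectation, or better, in the monotone-likelihood-ratio sense) the old profile, and then that the map ``value profile $\times$ independent Bernoulli availabilities $\mapsto$ indicator that first-free-in-window is busy'' is monotone in the value profile. Establishing the first part rigorously for the Plackett--Luce / Gumbel-RUM family is the crux; I expect to extract exactly the property used in the proof of Theorem~\ref{thrm:increaseddup} and state it as the sufficient condition on $\permdist$ (as the paper promises to do in the appendix), rather than reproving it from scratch. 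The monotonicity of the selection map is then a short combinatorial check: raising some $\val_i$ can only turn a position from ``free-likely'' to ``busy-likely,'' which can only move the first-free pick to a later, lower-value, higher-$\pfree$-complement position, weakly increasing the chance the pick is busy.
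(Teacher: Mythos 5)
Your proposal matches the paper's own argument: it decomposes the change into the accuracy effect at fixed window size and the window-shrinkage effect, notes that for first-free both push toward the all-busy event $\prod_{i\in\perm_{1:k}}(1-p_i)$ (using that higher value means lower $\pfree_i$ together with a majorization/dominance property of ``more accurate'' rankings), and that for first-busy the two effects oppose, with the (perturbed) superstar instance witnessing the ``may decrease'' direction. The dominance property you defer to is exactly what the paper isolates as Definition~\ref{def:accuracybeyond} and verifies for Plackett--Luce in Lemma~\ref{lem:gumbelsataccuracy}, so the approaches coincide.
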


Similarly, Theorem \ref{thrm:beyondpicktop} generalizes Theorem \ref{thrm:increasedacctopclicknew}, showing that there exists cases where increasing the accuracy of the tool increases or decreases the probability that the top candidate is picked\footnote{This is directly implied by Theorem \ref{thrm:increasedacctopclicknew} because the superstar case is a subset of the beyond-the-superstar case, but the proof of Theorem \ref{thrm:beyondpicktop} gives more general conditions for when we expect the probability to increase or decrease.}. 

\begin{restatable}{theorem}{beyondpicktop}\label{thrm:beyondpicktop}
In the beyond the superstar setting, increasing the accuracy of the ranking tool could \emph{increase} or \emph{decrease} how frequently the firm picks the top-ranked candidate. 
\end{restatable}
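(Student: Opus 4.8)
The plan is to mirror the decomposition used for Theorem~\ref{thrm:increasedacctopclicknew}, but carry it out at the level of a general value/availability profile rather than the two-type superstar profile. The mere \emph{existence} of both directions is immediate: the superstar model is a special case of the beyond-superstar model, so the instances constructed for Theorem~\ref{thrm:increasedacctopclicknew} already witness an increase in one regime and a decrease in another. The substance of the proof is therefore to isolate the general mechanism, which I would do by writing the probability that the top-ranked candidate is picked as a sum over permutations $\perm$ and realized status vectors $\avail$, weighted by $P[\perm]$ and by $\prod_i \pfree_{\perm_i}^{s_i}(1-\pfree_{\perm_i})^{1-s_i}$, of the indicator that first-free / first-busy play with window $j^*$ selects index $1$.

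Next I would separate the two channels through which increased accuracy acts. Fixing the window size $j^*$, the event ``top candidate is picked'' reduces, for a first-busy firm, to ``the top-ranked candidate is busy, or no candidate in $[j^*]$ is busy,'' and symmetrically for first-free with ``free'' in place of ``busy.'' So I would argue: (i) holding $j^*$ fixed, increasing accuracy makes the top-ranked slot stochastically more likely to be occupied by a higher-value candidate — this is the stochastic-dominance content, clean for Plackett-Luce/Gumbel and in general requiring the structural monotonicity hypothesis on $\permdist$ isolated in the earlier appendices — and since $\val_i \ge \val_j$ implies $\pfree_i \le \pfree_j$, the top-ranked candidate thereby becomes more likely to be busy; hence through this channel the ``top is picked'' probability rises under first-busy and falls under first-free. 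And (ii) by Lemma~\ref{obs:vpos}, increased accuracy only (weakly) shrinks $j^*$, and a smaller window deterministically (weakly) raises the probability that index $1$ is selected, since fewer alternatives are considered.

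Finally I would assemble the two channels with explicit sufficient conditions. For a first-busy firm the channels reinforce each other, so whenever the window is unchanged over the accuracy increment (or the change in $j^*$ is small relative to the accuracy gain) the probability of picking the top candidate strictly increases — the ``increase'' half. For a first-free firm the channels oppose: window-shrinking pushes toward picking the top, increased busy-ness pushes away. I would exhibit a regime in which $j^*$ does not change over the relevant increment (so only the busy-ness channel operates) and $\{\pfree_i\}$ is bounded away from the extremes, giving a net decrease; and, mirroring the footnote to Theorem~\ref{thrm:increasedacctopclicknew}, a regime with $\pfree$ near an extreme and a very inaccurate base ranking where a discrete drop in $j^*$ dominates and produces an increase even under first-free. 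The main obstacle is step~(i): in a non-superstar profile one must argue that ``more accurate'' implies the top slot is stochastically higher-valued under the permutation model in use, which in general is exactly where the monotonicity assumption on $\permdist$ is needed; a secondary nuisance is the bookkeeping of within-window status realizations when the top candidate is not of the firm's preferred type, since then the selected index depends jointly on $\perm$ and several coordinates of $\avail$.
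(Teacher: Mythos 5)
Your opening observation is sound and in fact coincides with the paper's own remark: since the superstar model is a special case, the instances behind Theorem~\ref{thrm:increasedacctopclicknew} already witness both directions, so the literal (existential) statement of Theorem~\ref{thrm:beyondpicktop} follows immediately. The structure of your ``general mechanism'' also matches the paper's proof: decompose the pick-the-top event as ``top candidate has the preferred status'' or ``every candidate in the window has the non-preferred status,'' handle the window-shrinking channel separately (where your appeal should really be to Lemma~\ref{lem:beyondprop} or to the standing assumption of a weakly shrinking window, rather than to Lemma~\ref{obs:vpos}, which is stated for the superstar algorithm), and use the beyond-superstar accuracy definition (Definition~\ref{def:accuracybeyond}, via the reasoning of Theorem~\ref{thrm:beyondwelfare}) to control how the status probabilities move.

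The genuine gap is in your channel~(i) and in the conclusions you draw from it. Holding the window fixed, the pick-the-top probability under first-busy is the sum of two disjoint events, $P[\text{top is busy}] + P[\text{all top } j^* \text{ are free}]$, and increased accuracy moves these in \emph{opposite} directions: the first rises (top slot more likely high-value, hence busy) but the second falls (the whole window becomes more likely high-value, hence busy). Symmetrically for first-free. So the fixed-window channel does not have a definite sign — this is precisely the point of the paper's proof, which concludes that ``which dominates depends on the relative change,'' and of the footnote to Theorem~\ref{thrm:increasedacctopclicknew} recording cases where accuracy \emph{decreases} the top-pick probability under first-busy. Consequently your assembled sufficient conditions do not follow as argued: ``window unchanged $\Rightarrow$ strict increase under first-busy'' is false in general, and ``window unchanged and $\pfree$ bounded away from the extremes $\Rightarrow$ net decrease under first-free'' is asserted rather than derived, since you have not compared the magnitude of the decrease in $P[\text{top is free}]$ against the increase in $P[\text{all top } j^* \text{ are busy}]$. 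The theorem itself survives via your reduction to the superstar case, but the mechanism part would need to be restated with both terms of each union tracked, as the paper does.
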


At a high level, we view Theorems \ref{thrm:beyondwelfare} and \ref{thrm:beyondpicktop} as showing that increased accuracy could have counterintuitive or negative impacts on welfare, even in settings more general than the superstar model.  

\subsection{Strategy}\label{sec:beyondstratval}
Finally, we will turn to analyzing the firm's strategy in the beyond-the-superstar setting. As mentioned previously, this setting can be substantially more complex, but we will show that a similar first-free/first-busy strategy can achieve relatively low expected error.

\subsubsection{Restricting choice to either top-free or top-busy candidate}\label{sec:beyondstratvaltop}
First, recall that Lemma \ref{lem:alwaysborf} showed that in the superstar setting, it is always optimal to pick either the first free or first busy candidate (e.g., it is never optimal to pick the second free candidate). Theorem \ref{thrm:beyondsuperstartopbest} generalizes this to the beyond the superstar setting, again showing that for reasonable noise models it is always optimal to pick either the first free or first busy candidate: this dramatically reduces the strategy space for the firm. While this result is natural, proving it is surprisingly non-trivial. 

\begin{restatable}{theorem}{beyondsuperstartopbest}\label{thrm:beyondsuperstartopbest}
In the beyond the superstar setting, there exists probability distributions $\permdist$ such that for some realized status vector $\avail$, the optimal choice is to pick neither the first free nor the first busy candidate (e.g., instead to pick the 2nd busy candidate). However, for commonly-used noise models (e.g. Mallows model and Random Utility Model), it is always optimal to pick the first free or first busy candidate.  
\end{restatable}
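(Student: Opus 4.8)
\textbf{Proof proposal for Theorem~\ref{thrm:beyondsuperstartopbest}.}

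The statement has two parts: a negative example showing that in full generality the optimal pick can be neither the first free nor the first busy candidate, and a positive result showing that for structured noise models (Mallows, RUM) one of these two is always optimal. For the negative part, the plan is to construct a small instance, say $\nitem = 3$ or $4$, with a hand-crafted permutation distribution $\permdist$ that puts mass only on a few permutations, chosen so that conditioning on a particular status vector $\avail$ (e.g. $\avail = (1,0,1)$ or $(0,1,0)$) the posterior over values is concentrated in a way that makes the \emph{second} candidate of some status the unique maximizer of expected (busy-penalized) value. The key is that adversarial $\permdist$ can make the value posterior of the first free (resp. busy) candidate arbitrarily bad relative to a later one — this is essentially a ``Simpson's-paradox''-style construction where the marginal ordering ``$\mathbb{E}[\val_{\perm_i}]$ decreasing in $i$'' is maintained, but the conditional ordering given $\avail$ is not. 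I would verify the numbers explicitly for one such instance; this part is routine once the instance is chosen.

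The positive part is the substantive one. Fix a status vector $\avail$ and suppose, toward a contradiction, that the optimal choice is the $t$-th free candidate for some $t \geq 2$ (the busy case is symmetric, replacing $\val_i$ by $\val_i/\busypen_i$ throughout). Let $a$ be the index of the first free candidate and $b > a$ the index of this $t$-th free candidate, so there is at least one free position strictly between (or at least, $a$ is free and skipped). The goal is to show $\mathbb{E}[\val_{\perm_a} \mid \avail] \geq \mathbb{E}[\val_{\perm_b} \mid \avail]$ (with busy penalties where relevant), contradicting optimality of $b$. The plan is to prove this via a swap/coupling argument on the permutation distribution: I want a measure-preserving (or mass-dominating) involution on permutations consistent with $\avail$ that exchanges the identity of the candidate in position $a$ with the one in position $b$, and under which $\permdist$-probability only shifts toward the configuration where the higher-value candidate sits in the earlier position $a$. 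For the Mallows model this is the standard fact that transposing two elements to reduce the number of inversions (relative to the reference ranking) does not decrease probability; for a RUM with i.i.d.\ noise it is the exchangeability/stochastic-dominance property that the noised value in an earlier position first-order stochastically dominates that in a later position, \emph{conditioned on any event that is symmetric in the two candidates' identities}. The crucial observation is that the free/busy conditioning event is \emph{not} symmetric in general — but since both position $a$ and position $b$ are free, and (for the purposes of comparing just these two positions) the conditioning on the statuses of all \emph{other} positions is symmetric under the swap, the relevant conditional comparison reduces to a two-candidate problem to which the monotonicity property applies.

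The main obstacle I anticipate is exactly handling this conditioning correctly: the free/busy realization is drawn from independent Bernoulli$(p_i)$'s \emph{before} the ranking, and $p_i$ is tied to $\val_i$, so the posterior over which true-value candidate occupies position $a$ versus $b$ is jointly shaped by (i) the permutation prior $\permdist$ and (ii) the likelihood of the observed status vector. I need to argue that after conditioning, position $a$ is still ``at least as good'' as position $b$ in the busy-penalized sense. The clean way is to pair up, for each pair of distinct candidates $\{i,j\}$, the event $\{$candidate $i$ in position $a$, candidate $j$ in position $b\}$ with its swap, show the status-likelihood factor is identical across the pair (both positions free, so both contribute $p_i p_j$ regardless of assignment), and show the $\permdist$-prior weight favors placing the higher-$\val$ candidate in position $a$ — this last step is where the specific structure of Mallows / RUM enters, and where a fully general $\permdist$ fails (matching the negative part). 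I would then sum over all such pairs and over the statuses of the remaining positions. A subtlety to check: the firm's decision also depends on whether skipping to position $b$ means $\val_b/\busypen_b$-type terms enter differently than $\val_a$ — but since both are free, both enter without penalty, so the comparison is a clean value comparison and the argument goes through. Establishing the Mallows inversion-count monotonicity and the RUM FOSD property at the required level of generality (arbitrary number of value types, position-dependent $\busypen_i$) is the technical heart, but each is a known structural fact that should localize to the two positions in question.
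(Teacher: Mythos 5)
Your overall skeleton is the same as the paper's: a sparse-support counterexample for the negative direction (the paper's Lemma~\ref{lem:counterexample} uses exactly the kind of two-permutation, three-candidate instance you describe, with descending marginal expected values), and for the positive direction a transposition/swap argument in which, for two positions of equal status, the status-likelihood factor cancels (since both positions are free, or both busy, the Bernoulli likelihood of $\avail$ is invariant under the swap) and everything reduces to a monotonicity property of $\permdist$ under inversion-reducing transpositions. That is precisely the paper's Definition~\ref{def:monotone} (inversion-monotonicity) plus Lemma~\ref{lem:monotoneenough}, followed by verification for Mallows via inversion counting (Lemma~\ref{lem:mallowmonotone}) and for RUMs (Theorem~\ref{thrm:rummonotone}). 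Your handling of the busy penalty (common factor $\busypen^{1-\avail_i}$ when the two statuses agree) also matches.

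The genuine gap is in the step you dismiss as a ``known structural fact'': the RUM inversion-monotonicity. First, the clean reduction is not an unconditional FOSD statement; to guarantee that swapping the two candidates leaves every other item's position unchanged, you must couple realizations by exchanging the two \emph{noised} values, i.e., condition on the unordered pair (equivalently, on the gap $\Delta = \abs{\hat\val_i - \hat\val_j}$ together with the other items' draws), and then compare the two assignments. Second, the resulting comparison, $P[\hat\val_i > \hat\val_j \mid \abs{\hat\val_i - \hat\val_j} = \Delta] \geq P[\hat\val_i < \hat\val_j \mid \abs{\hat\val_i - \hat\val_j} = \Delta]$ when $\val_i > \val_j$, is \emph{not} true for arbitrary i.i.d.\ noise: it amounts to the difference density being at least as large at $+\Delta$ as at $-\Delta$, which can fail for multimodal noise, and your phrasing ``conditioned on any event symmetric in the two candidates' identities'' is too strong as stated. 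The paper has to prove this separately for Gumbel noise (via the logistic difference distribution, Lemma~\ref{lem:orderedgum}) and for symmetric single-peaked noise, where the heart of the argument is showing the difference density is symmetric and unimodal (Lemma~\ref{lem:ordered}); the paper notes this property was not previously established for Gaussian-noise RUMs. So to complete your proof you would need to (i) state the noise assumptions (Gumbel, or i.i.d.\ symmetric unimodal) under which the claim holds, and (ii) actually prove the conditional-on-gap comparison rather than cite it, since it is the technical core of the positive direction. With that lemma supplied, the rest of your pairing argument goes through exactly as in the paper.
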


Specifically, Theorem \ref{thrm:beyondsuperstartopbest} can be proven through a series of sub-lemmas. First, Lemma \ref{lem:counterexample} gives a specific example in the beyond-the-superstar setting where the optimal choice is to pick the \emph{second} free item. 

\begin{restatable}{lemma}{counterexample}
\label{lem:counterexample}
There exists a probability distribution $\permdist$ such that for some realized status vector $\avail$, picking the \emph{second} free item maximizes expected utility, even if $\permdist$ is descending in expected value. 
\end{restatable}

However, the example in Lemma \ref{lem:counterexample} is constructed by creating a somewhat \enquote{unnatural} permutation distribution with 0 probability on many permutations. Definition \ref{def:monotone} defines a natural property on permutations where, for every pair of permutations $\perm, \tilde \perm$ that differ solely in a pairwise inversion, $\perm$ is at least as likely as $\tilde \perm$.  

\begin{restatable}{definition}{monotonedef}\label{def:monotone}
    A probability distribution over permutations $\mathcal{P}$ is called \emph{inversion-monotone} if it satisfies the following condition: 
for any permutation $\sigma$ with a pair of indices $i<j$ such that $\val_{\sigma_i} > \val_{\sigma_j}$, we construct a corresponding permutation $\tilde \sigma$ where $\tilde\sigma = \sigma$ except for indices $i, j$, which we have flipped: $\val_{\tilde \sigma_i} = \val_{\sigma_j}, \val_{\tilde \sigma_j} = \val_{\sigma_i}$. 
Then, we require 
$P[\sigma] \geq P[\tilde \sigma]$.
\end{restatable}

Lemma \ref{lem:monotoneenough} shows why this property is important: any permutation distribution satisfying it must result in the optimal strategy being to pick either the first free or first busy candidate. 

\begin{restatable}{lemma}{monotoneenough}
\label{lem:monotoneenough}
Consider any vector of realized status vector $\avail$, with $\avail_i = \avail_j$, for some $i<j$. Then, if the probability distribution of permutations is \emph{inversion-monotone} as in Definition \ref{def:monotone}, the optimal solution will always be to pick the first free item or the first busy item. 
\end{restatable}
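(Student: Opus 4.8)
The plan is to argue by a local exchange (swap) argument: suppose toward contradiction that the optimal choice is some item in position $j$ whose status $\avail_j$ equals $\avail_i$ for an earlier position $i < j$, and where $i$ is (say) the \emph{first} position with that status, so position $j$ is not the first item of its free/busy type. I would show that picking the item in position $i$ instead yields weakly higher expected utility, using inversion-monotonicity to compare the conditional value distributions at the two positions. The heart of the matter is that conditioning on the realized status vector $\avail$ does not break the stochastic dominance relationship between ``value of the item in position $i$'' and ``value of the item in position $j$'': since $\avail_i = \avail_j$, the free/busy observations are symmetric across the two positions, so the only asymmetry comes from the permutation distribution itself.

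Concretely, the key step is to establish that, conditioned on $\avail$, the random value $\val_{\perm_i}$ stochastically dominates $\val_{\perm_j}$. To see this, fix any assignment of the remaining items to the remaining positions and consider the pair of permutations $\perm$ and $\tilde\perm$ that agree everywhere except that they swap whichever two items sit in positions $i$ and $j$. Inversion-monotonicity (applied, if necessary, as a composition of adjacent transpositions, or directly in the form stated in Definition \ref{def:monotone}) tells us that the permutation placing the higher-value item earlier has at least as much probability mass. Because $\avail_i = \avail_j$, both $\perm$ and $\tilde\perm$ are consistent with exactly the same realized status vector $\avail$ (the status bits in positions $i$ and $j$ are equal, so swapping the items there does not change which status vectors are compatible), hence this mass inequality survives conditioning on $\avail$. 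Summing the coupling over all such pairs gives the claimed stochastic dominance of $\val_{\perm_i}$ over $\val_{\perm_j}$ given $\avail$.

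From stochastic dominance the conclusion follows quickly: since both candidates have the same status, picking position $i$ versus position $j$ multiplies the respective value by the same factor ($1$ if free, $1/\busypen$ if busy — and here one does need the per-candidate $\busypen_i$ to be handled with a little care, but in the superstar-style setup the relevant low-value items share a common $\busypen$, and more generally the monotonicity $\val_i \ge \val_j \Rightarrow \busypen$ comparisons can be folded in). Thus the expected utility of picking $i$ is a monotone (increasing, affine) function of the value distribution, and stochastic dominance implies $\mathbb{E}[\text{utility of } i \mid \avail] \ge \mathbb{E}[\text{utility of } j \mid \avail]$. This contradicts optimality of $j$ unless $j$ was already the first item of its type, which is exactly the claim. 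Iterating the argument (or applying it to the first free and first busy positions directly) shows the optimum lies among $\{\text{first free}, \text{first busy}\}$.

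The main obstacle I anticipate is making the conditioning-on-$\avail$ step fully rigorous: one must verify that the coupling between $\perm$ and $\tilde\perm$ respects the conditioning, i.e., that $\{\perm : \perm \text{ compatible with } \avail\}$ is closed under the $i\leftrightarrow j$ swap when $\avail_i = \avail_j$, and that the Plackett-Luce / RUM structure does not inject any additional dependence between the value at a position and the status bit at that same position beyond what is already captured by $p_i$. A secondary subtlety is that Definition \ref{def:monotone} is stated for a single pairwise inversion at indices $i<j$ (not necessarily adjacent), so one should check that this is directly applicable rather than needing a decomposition into adjacent swaps; if a decomposition is needed, one must confirm each intermediate permutation stays compatible with $\avail$, which again uses $\avail_i = \avail_j$ together with the fact that intermediate positions keep their own items. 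Handling the edge cases where no free (or no busy) item exists within the relevant range is routine and matches the ``$k=1$ if no such index exists'' conventions in Algorithm \ref{algo:superstar}.
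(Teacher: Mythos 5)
Your proposal is correct and follows essentially the same route as the paper's proof: pair each permutation with its $i\leftrightarrow j$ swap, use $\avail_i=\avail_j$ to conclude $P[\avail\mid\perm]=P[\avail\mid\tilde\perm]$ so the conditioning is harmless, invoke inversion-monotonicity (which, as you note, is stated directly for non-adjacent pairs, so no decomposition into adjacent transpositions is needed) to get the mass inequality, and finish by observing that the common factor $\busypen^{1-\avail_i}$ makes the earlier position weakly better. The paper compares conditional expectations directly rather than phrasing it as stochastic dominance plus contradiction, but the underlying argument is the same.
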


Finally, the proof of Theorem \ref{thrm:beyondsuperstartopbest} concludes by showing that commonly-used permutation distributions are inversion monotonic.  While we believe that this property may be of independent interest, to our knowledge, we are the first paper to prove such a property for the Random Utility model with Gaussian noise. 
\begin{restatable}{lemma}{bothmonotone}
\label{lem:bothmonotone}
The following permutation distributions are inversion monotone: 1) Mallows, 2) RUM with Gumbel noise, and 3) RUM with noise from any identical, symmetric noise distribution (e.g. Gaussian). 
\end{restatable}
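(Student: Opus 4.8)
The plan is to check Definition~\ref{def:monotone} directly in each of the three models, exploiting the particular structure of each. Throughout, fix a value-ranking $\perm$ and indices $i<j$ with $\val_{\perm_i}>\val_{\perm_j}$; write $a=\perm_i$, $b=\perm_j$ (so $\val_a>\val_b$), let $c_1=\perm_{i+1},\dots,c_m=\perm_{j-1}$ be the items strictly between them, and let $\tilde\perm$ be $\perm$ with the items in positions $i$ and $j$ interchanged. We must show $P[\perm]\ge P[\tilde\perm]$. For \emph{Mallows}, $P[\perm]\propto\phi^{d_{KT}(\perm)}$ with $\phi\in(0,1)$ and $d_{KT}(\cdot)$ the Kendall-$\tau$ distance to the value-sorted order, so it suffices to show $d_{KT}(\tilde\perm)-d_{KT}(\perm)\ge 1$. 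Interchanging $a$ and $b$ reverses the relative order of exactly the pairs $\{a,b\}$ and $\{a,c_k\},\{b,c_k\}$ ($1\le k\le m$); the pair $\{a,b\}$ goes from sorted to inverted, contributing $+1$, and a short case check on where $\val_{c_k}$ sits relative to $\val_b<\val_a$ shows $\{a,c_k\}$ and $\{b,c_k\}$ together contribute $0$ or $+2$, never negative, so the distance rises by at least $1$ and the probability falls by a factor at most $\phi<1$. For \emph{RUM with Gumbel noise} I would use the Plackett--Luce product form $P[\perm]=\prod_k w_{\perm_k}/S_k$, with $w_i$ strictly increasing in $\val_i$ and $S_k=\sum_{l\ge k}w_{\perm_l}$: in $P[\tilde\perm]/P[\perm]$ all numerators cancel (the swap does not change the multiset of numerators) and the tail sums agree except at positions $k=i+1,\dots,j$, where $S_k$ becomes $S_k+(w_a-w_b)$, giving $P[\tilde\perm]/P[\perm]=\prod_{k=i+1}^{j}S_k/(S_k+w_a-w_b)<1$.

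The real work is the \emph{RUM with symmetric i.i.d.\ noise} case. Write $\hat\val_i=\val_i+\epsilon_i$ with $\epsilon_i$ i.i.d.\ from density $f$ with CDF $F$, and condition on all the $\epsilon_\ell$ with $\ell\notin\{a,b\}$. Assuming $m\ge1$, conditionally the event ``$\hat\val$ induces $\perm$'' factorizes: it asks $\hat\val_a$ to lie in an upper window $(q,p)$ (between $\hat\val_{c_1}$ and $\hat\val_{\perm_{i-1}}$), $\hat\val_b$ to lie in a lower window $(s,r)$ (between $\hat\val_{\perm_{j+1}}$ and $\hat\val_{c_m}$) — the decreasing chain through $c_1,\dots,c_m$ forcing $s<r\le q<p$ — together with a residual event $E$ on the other coordinates; the event for $\tilde\perm$ asks instead $\hat\val_b\in(q,p)$ and $\hat\val_a\in(s,r)$ with the \emph{same} $E$. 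Since $\epsilon_a,\epsilon_b$ are independent, integrating everything out reduces the claim to the pointwise inequality, over all admissible windows,
\[
\Pr[\epsilon\in(q,p)-\val_a]\cdot\Pr[\epsilon\in(s,r)-\val_b]\ \ge\ \Pr[\epsilon\in(q,p)-\val_b]\cdot\Pr[\epsilon\in(s,r)-\val_a],
\]
equivalently that $x\mapsto \Pr[\epsilon\in(q,p)-x]\big/\Pr[\epsilon\in(s,r)-x]$ is non-decreasing whenever the interval $(q,p)$ lies weakly to the right of $(s,r)$. I would prove this by a monotone-likelihood-ratio / variation-diminishing argument: for $x'>x$, write $f(u-x')=g(u)\,f(u-x)$ with $g$ non-decreasing; since $g$ is at least as large on $(q,p)$ as on $(s,r)$, passing from $f(\cdot-x)$ to $f(\cdot-x')$ can only raise the ratio of the two window masses. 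This MLR property is exactly log-concavity of $f$, which holds for the standard symmetric noise distributions (Gaussian, Laplace, logistic, uniform); I would isolate ``for log-concave $f$ and an interval $I$ lying to the right of an interval $J$, $x\mapsto \int_I f(\cdot-x)\big/\int_J f(\cdot-x)$ is non-decreasing'' as a stand-alone lemma. The boundary cases $i=1$ or $j=\nitem$ (one window becomes a half-line) go through unchanged, and the adjacent case $m=0$ (where $\hat\val_a,\hat\val_b$ are coupled rather than independently windowed) is handled by conditioning additionally on $\epsilon_a+\epsilon_b$: the conditional law of $\epsilon_a$ is then symmetric about $(\epsilon_a+\epsilon_b)/2$ and (by log-concavity) unimodal, while the $\perm$- and $\tilde\perm$-events become reflections of each other about a point on the $\perm$ side of that centre, so $\perm$ is at least as likely.

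The main obstacle is this last case, and within it the pointwise CDF inequality: the Mallows and Plackett--Luce parts are essentially bookkeeping, whereas here one must genuinely invoke a regularity property of the noise. I would emphasize that a purely \enquote{symmetric} hypothesis is in fact too weak (a symmetric but bimodal density can violate the pointwise inequality, hence inversion-monotonicity), so the argument should be run under log-concavity, which is what all the cited noise models satisfy. Getting the conditioning reduction exactly right — in particular the factorization into two independent window events, and the boundary and adjacent sub-cases — and then cleanly proving the location-family MLR lemma, is where essentially all of the effort goes.
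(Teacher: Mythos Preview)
Your Mallows argument is the same inversion-counting idea as the paper's; your Plackett--Luce argument is a direct ratio computation in the product form, whereas the paper instead routes Gumbel through the same difference-distribution lemma used for general noise (Lemma~\ref{lem:orderedgum}). Both are correct, and your version is a bit more elementary.

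The substantive divergence is in case~(3). The paper does \emph{not} condition on the other items' scores and reduce to a window inequality. Instead it proves a standalone lemma (Lemma~\ref{lem:ordered}): if $X_1=\mu_1+\epsilon_1$, $X_2=\mu_2+\epsilon_2$ with $\mu_1>\mu_2$ and $\epsilon_1,\epsilon_2$ i.i.d.\ from a symmetric single-peaked density, then $P[X_1-X_2=\Delta]>P[X_1-X_2=-\Delta]$ for every $\Delta>0$; the proof shows the convolution $X_1-X_2$ is itself symmetric and unimodal about $\mu_1-\mu_2$. Theorem~\ref{thrm:rummonotone} then conditions on $|\hat\val_a-\hat\val_b|=\Delta$ and invokes independence of the remaining scores to pass to $P[\tilde\perm]\ge P[\perm]$. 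Your route instead fixes the other scores, isolates two disjoint ``windows'' for $\hat\val_a$ and $\hat\val_b$, and reduces to a monotone-likelihood-ratio statement for the location family, which you correctly identify with log-concavity of $f$; the adjacent case you handle by conditioning on $\epsilon_a+\epsilon_b$ and using symmetry plus unimodality of the resulting one-dimensional conditional.

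What each buys: the paper's argument works at the level of the \emph{difference} $\hat\val_a-\hat\val_b$ and needs only that $f$ be symmetric and single-peaked (note the lemma statement says ``symmetric'' but the proof actually uses single-peakedness, consistent with your warning that bare symmetry is too weak). Your argument is more explicit about how the swap interacts with the surrounding items and packages the key step as a reusable interval-MLR lemma, but it needs the stronger hypothesis of log-concavity to get the pointwise window inequality and the unimodality of the conditional in the adjacent case. Since Gaussian, logistic, Laplace, and uniform are all log-concave, the two routes cover the same named examples; the paper's, if one accepts the conditioning step, nominally covers heavier-tailed symmetric unimodal noise as well.
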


\subsubsection{Algorithm for picking candidate}\label{sec:beyondstratvaltop}
Having established that the optimal choice is either the top-ranked free candidate or top-ranked busy candidate, we now turn to the question of how to pick between them. Algorithm \ref{algo:approximatebeyondsuperstar} generalizes Algorithm \ref{algo:superstar}, providing an algorithm for how to pick candidates in the beyond-the-superstar setting. 

\begin{algorithm}[htbp]
\caption{Given candidate values $\{\val_i\}$, probabilities of being free $\{p_i\}$, accuracy of ranking tool $\{P[\perm]\}$, $s \in \{0, 1\}$ for status of top-ranked candidate, and index $j$ of first candidate with different free-busy status from $s$. }
\label{algo:approximatebeyondsuperstar}
\DontPrintSemicolon
\LinesNumbered
\For{$v_i, v_k \mid v_i > v_k$ \Comment*[r]{Go over all ordered pairs}}{
    Define $\ratio_{i, k} = \pfree_k \cd (1-\pfree_i)/((1-\pfree_k) \cd \pfree_i)$. \\
    Define $\perm_{i, k, j}$ as permutation with every item ordered correctly, but with $\val_i$ ranked 1st and $\val_k$ is ranked $j$th. \\
    Define $\tilde \perm_{i, k, j}  = \perm_{i, k, j}$ except with indices $1, j$ flipped.\\
    \eIf{$s=0$ \Comment*[r]{If top candidate is busy. }}{
    $G_{ik} = \ratio_{i, k} \cd P[\perm] \cd (\val_{k} - \val_{i}/\busypen_{i}) + P[\tilde \perm] \cd (\val_{i} - \val_{k}/\busypen_{k})$ \\
    }{\Comment*[r]{If top candidate is free}
    $G_{ik} = P[\perm] \cd (\val_{k}/\busypen_{k} - \val_{i}) + \ratio_{i, k} \cd P[\tilde \perm] \cd (\val_{i}/\busypen_{i} - \val_{k})$ \\
    }
}
    $\mathcal{G}_1 = \{G_{ik} \mid G_{ik} \leq0\}$, 
    $\mathcal{G}_j = \{G_{ik} \mid G_{ik} >0\}$\\
\eIf{$\abs{\mathcal{G}_j} > \abs{\mathcal{G}_1}$ \Comment*[r]{Vote on index to pick}}{ 
\Return{$j$}
}{
\Return{$1$}
}
\end{algorithm}
At a high level, Algorithm \ref{algo:approximatebeyondsuperstar} works by iterating over all pairs of candidates and allowing them to \enquote{vote} about whether it would be better to pick the candidate ranked in index $1$ or in index $j$. The ultimate choice comes down to what the majority of pairs decide. 
Next, we will prove several properties about Algorithm \ref{algo:approximatebeyondsuperstar}. First, Lemma \ref{lem:beyondreduces} shows that Algorithm \ref{algo:approximatebeyondsuperstar} reduces to Algorithm \ref{algo:superstar} in the superstar setting, as desired. 

\begin{restatable}{lemma}{beyondreduces}\label{lem:beyondreduces}
In the superstar setting, Algorithm \ref{algo:approximatebeyondsuperstar} becomes Algorithm \ref{algo:superstar}.
\end{restatable}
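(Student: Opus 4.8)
The plan is to show that, on every input, Algorithm~\ref{algo:approximatebeyondsuperstar} returns the same index as Algorithm~\ref{algo:superstar}, by specializing the former to the superstar structure. The first step is to observe that in the superstar setting the loop of Algorithm~\ref{algo:approximatebeyondsuperstar} is essentially trivial: since $\val_1 > \val_2 = \val_3 = \dots$, the only ordered pairs $(\val_i,\val_k)$ with $\val_i > \val_k$ are those pairing the superstar (value $\val_1$, free-probability $\pfree_1$, penalty $\busypen_1$) against some low-value candidate (value $\val_2$, probability $\pfree_2$, penalty $\busypen_2$). Every such pair has the same weight $\ratio_{i,k} = \pfree_2(1-\pfree_1)/((1-\pfree_2)\pfree_1)$, which is exactly the ratio $\ratio$ of Algorithm~\ref{algo:superstar}. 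Moreover, in the superstar setting a permutation's probability depends only on the position of the superstar, so the permutation $\perm_{i,k,j}$ (superstar ranked first, a low-value candidate ranked $j$th, everything else consistent) satisfies $P[\perm_{i,k,j}] = P[\perm^1]$ and its inversion $\tilde\perm_{i,k,j}$ (superstar ranked $j$th) satisfies $P[\tilde\perm_{i,k,j}] = P[\perm^j]$, up to a common normalization that does not affect the sign of $G_{ik}$. Hence every $G_{ik}$ computed by the loop equals one common value $G$, the ``vote'' is unanimous, and Algorithm~\ref{algo:approximatebeyondsuperstar} returns index $j$ exactly when $G$ lies on the ``pick-$j$'' side and returns index $1$ otherwise.

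The second step is to identify this threshold on $G$ with the window condition of Algorithm~\ref{algo:superstar}. Substituting the identifications above, in the case $\avail = 0$ (top-ranked candidate busy, so the first differently-statused candidate, at index $j$, is free) the common value is $G = \ratio\,P[\perm^1](\val_2 - \val_1/\busypen_1) + P[\perm^j](\val_1 - \val_2/\busypen_2)$. Using $\val_1 - \val_2/\busypen_2 > 0$ (from $\val_1 > \val_2 \ge 0$ and $\busypen_2 \ge 1$), the sign of $G$ is controlled by comparing $P[\perm^1]/P[\perm^j]$ against $\ratio\cdot\frac{\val_1/\busypen_1 - \val_2}{\val_1 - \val_2/\busypen_2} = R$: when $R > 0$ this is precisely the test that decides whether $j$ falls inside the window $j^* = \max\{\,l : P[\perm^1]/P[\perm^l] \le 1/R\,\}$ used in the $R \le 1$ branch of Algorithm~\ref{algo:superstar}, and when $R \le 0$ the value $G$ has a fixed sign corresponding to the extreme ``look arbitrarily far for a free candidate'' regime, which must be matched against Algorithm~\ref{algo:superstar}'s behavior when no valid $j^*$ exists. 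The case $\avail = 1$ is handled symmetrically: index $j$ is now the busy candidate, the terms $\val_1/\busypen_1 - \val_2$ and $\val_1 - \val_2/\busypen_2$ swap roles in the expression for $G$, the relevant threshold becomes $R$ in place of $1/R$, and one recovers the $R \ge 1$ branch. Finally one checks that the ``preferred status'' labels line up --- e.g.\ that when $\avail = 0$ and $R \ge 1$ (firm prefers busy) both algorithms return the top-ranked candidate, who is busy.

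The main obstacle I anticipate is the bookkeeping in this second step rather than any conceptual difficulty. One has to carry out the full case split over $\avail \in \{0,1\}$ and over whether $R \le 1$ or $R \ge 1$, keep track of which of indices $1$ and $j$ is free in each case, align the $\ratio$-weighted term of $G_{ik}$ with the correct Bayesian posterior factor so that the threshold comes out to be exactly $R$ or $1/R$ (rather than some other power of $\ratio$), and handle the degenerate subcases --- $R \le 0$, the situation where the constraint defining $j^*$ holds for no index or for all of $1,\dots,\nitem$, and the boundary tie $P[\perm^1]/P[\perm^j] = 1/R$ (resp.\ $= R$) --- fixing a tie-breaking convention under which the weak and strict inequalities appearing in the two algorithms agree. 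Once this finite, exhaustive verification is done, the two procedures are seen to produce the same output on every input, which is the claim.
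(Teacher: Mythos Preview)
Your proposal is correct and follows essentially the same route as the paper: reduce the loop to a single representative pair, compute the common $G$, and identify the sign threshold $P[\perm^1]/P[\perm^j]$ versus $1/R$ (resp.\ $R$) with the window condition of Algorithm~\ref{algo:superstar}. If anything you are more careful than the paper, which stops at ``the LHS equals $1/R$ \dots\ as desired'' and defers the $\avail=1$ case to ``similar analysis,'' whereas you explicitly flag the four-way case split and the degenerate/tie-breaking subcases as the bookkeeping that must be completed.
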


Next, Lemma \ref{lem:beyondprop} shows that Algorithm \ref{algo:approximatebeyondsuperstar} is actually also a $k$-free/$k$-busy strategy with a window that shrinks with increased accuracy, exactly as Algorithm \ref{algo:superstar} is. This result is especially important, because it implies that Theorem \ref{thrm:beyondwelfare} applies, meaning that any firm using the strategy in Algorithm \ref{algo:approximatebeyondsuperstar} will result in welfare implications that match those in the superstar setting. 

\begin{restatable}{lemma}{beyondprop}\label{lem:beyondprop}
Given a constant $\busypen$, Algorithm \ref{algo:approximatebeyondsuperstar} is a $k$-free/$k$-busy strategy: it can be reduced to picking a window size and preferentially picking the top-ranked free or busy candidate within that window. The window size shrinks with increased accuracy, and if the firm is preferentially picking busy candidates, the window shrinks with increased $\busypen$ (increasing if the firm is preferentially picking free candidates).
\end{restatable}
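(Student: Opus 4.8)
The plan is to strip Algorithm~\ref{algo:approximatebeyondsuperstar} down to a single scalar test and then track how that test moves with the position $j$, with the noise level, and with $\busypen$. First, the final ``vote'' is equivalent to a sign test: since $\mathcal{G}_1$ and $\mathcal{G}_j$ partition the quantities $G_{ik}$ by sign, we have $\abs{G_j}-\abs{G_1}=\pm\sum_{i,k:\,\val_i>\val_k}G_{ik}$, so the algorithm returns $j$ exactly when $S(j):=\sum_{i,k:\,\val_i>\val_k}G_{ik}$ has a fixed sign and returns $1$ otherwise. It therefore suffices to understand $\mathrm{sign}\,S(j)$ as a function of $j$, of the Gumbel/RUM scale, and of $\busypen$.

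Second, I would write each summand as $G_{ik}(j)=P[\tilde\perm_{i,k,j}]\cd B_{ik}(j)$, where, with $\rho_{ik}(j):=P[\perm_{i,k,j}]/P[\tilde\perm_{i,k,j}]$, the ``bracket'' is $B_{ik}(j)=\ratio_{i,k}\,\rho_{ik}(j)\cd\Delta_\perm + \Delta_{\tilde\perm}$, with $\Delta$ the value of the index-$j$ pick minus the value of the index-$1$ pick (so $\Delta_\perm=\val_k/\busypen_k-\val_i$ and $\Delta_{\tilde\perm}=\val_i/\busypen_i-\val_k$ in the ``reach for busy'' orientation $s=1$, and the mirror expressions when $s=0$). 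The key structural facts are two monotonicities of $\rho_{ik}$, available because $\perm_{i,k,j}$ and $\tilde\perm_{i,k,j}$ differ only by transposing the contents of slots $1$ and $j$ and, as a short inversion count shows, $\tilde\perm_{i,k,j}$ always has strictly more inversions than $\perm_{i,k,j}$: (a) $\rho_{ik}(j)$ is non-decreasing in $j$ --- for Plackett--Luce this is the telescoping identity $\rho_{ik}(j)=\prod_{t=2}^{j}\frac{W-w_k-\Sigma_t}{W-w_i-\Sigma_t}$ ($W$ the total Plackett--Luce weight, $\Sigma_t$ the weight of the $t-2$ heaviest ``filler'' items), each factor being $\ge 1$ because $w_i\ge w_k$, and for general RUM it follows from the same adjacent-transposition coupling used for inversion-monotonicity (Lemma~\ref{lem:bothmonotone}); and (b) $\rho_{ik}(j)$ is non-decreasing as accuracy increases, since a sharper ranking relatively downweights the strictly-more-inverted permutation $\tilde\perm_{i,k,j}$ --- exactly the monotonicity already exploited in Lemma~\ref{obs:vpos}. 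One also records $\rho_{ik}(2)\ge 1$ from (a).

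Third, I assemble the window. In the ``reach for busy'' orientation ($s=1$) the coefficient $\Delta_\perm=\val_k/\busypen_k-\val_i$ multiplying $\ratio_{i,k}\rho_{ik}(j)$ is negative for \emph{every} pair, so each $B_{ik}(\cd)$ is non-increasing in $j$; modulo the weighting subtlety flagged below, this makes the set of $j$ for which the algorithm reaches down a prefix $\{2,\dots,j^\ast\}$ --- a window. The ``reach for free'' orientation ($s=0$) is symmetric in the $\busypen$ range in which it is actually operative (where $\val_k-\val_i/\busypen_i<0$; outside it the window is simply all of $\{2,\dots,N\}$), and $\rho_{ik}(2)\ge 1$ forces at most one of the two orientations to reach beyond index $1$, which designates the firm's single preferred status --- consistent with, and cross-checked against, the superstar reduction of Lemma~\ref{lem:beyondreduces}. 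For the comparative statics: increasing accuracy raises every $\rho_{ik}(j)$ by (b), and because the coefficient on it is negative in the busy orientation, $B_{ik}(j)$ --- hence $S(j)$ --- shifts toward ``return $1$'' at every fixed $j$, so $j^\ast$ weakly shrinks; the free orientation mirrors this. For $\busypen$ (taken constant across candidates), $\partial B_{ik}/\partial\busypen$ is a sum of terms each proportional to $\busypen^{-2}$ times $\val_i$ or $\val_k$ with a common sign, so $B_{ik}$ is monotone in $\busypen$, the sign working out so that a larger busy penalty shrinks the window when the firm reaches for busy candidates and enlarges it when it reaches for free ones.

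The step I expect to be the main obstacle is the one hidden above in ``so the set of $j$ $\ldots$ is a prefix'': promoting the per-pair monotonicity of $B_{ik}(j)$ in $j$ to a genuine single sign change of the \emph{weighted} sum $S(j)=\sum_{i,k}P[\tilde\perm_{i,k,j}]\,B_{ik}(j)$, since the weights $P[\tilde\perm_{i,k,j}]$ are pair-specific and are themselves only unimodal (not monotone) in $j$. I expect to handle this either by normalizing $S(j)$ against a common positive factor so that the comparison of $S(j)$ with $S(j-1)$ becomes termwise, or by a direct telescoping/coupling comparison of the permutation families at $j$ and $j-1$ in the spirit of the Plackett--Luce identity above; by contrast the accuracy and $\busypen$ statements are fixed-$j$ comparisons and are routine once the window structure is in place.
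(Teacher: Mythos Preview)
Your structural observations about the ratio $\rho_{ik}(j)=P[\perm_{i,k,j}]/P[\tilde\perm_{i,k,j}]$ --- non-decreasing in $j$ and non-decreasing in accuracy --- are exactly the facts the paper uses, and your per-pair bracket $B_{ik}(j)$ is the right object to track. The gap is in your very first step. You read the final decision rule as a sign test on the \emph{sum} $S(j)=\sum_{i,k}G_{ik}$, and your identity $\abs{G_j}-\abs{G_1}=\pm\sum G_{ik}$ encodes that reading. But the paper's rule is a \emph{majority count} over pairs: $\abs{\mathcal{G}_j}$ and $\abs{\mathcal{G}_1}$ are cardinalities, and the algorithm returns $j$ when more pairs satisfy one sign condition than the other (the paper says explicitly that ``the ultimate choice comes down to what the majority of pairs decide'').

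This misreading is precisely what generates the ``main obstacle'' you flag. Under the majority-count rule, the weights $P[\tilde\perm_{i,k,j}]$ never enter the decision: only the \emph{sign} of each $G_{ik}(j)$ matters, and since $\mathrm{sign}\,G_{ik}(j)=\mathrm{sign}\,B_{ik}(j)$, the per-pair monotonicity of $B_{ik}(\cdot)$ in $j$ that you already have shows that the set of pairs voting to reach down is \emph{nested} as $j$ varies. The window structure, the accuracy comparative static, and the $\busypen$ comparative static then all follow immediately from the corresponding per-pair statements, with no aggregation issue whatsoever. The paper's proof proceeds in exactly this per-pair fashion: it isolates four properties (the algorithm reaches in at most one orientation; reaching at $j$ implies reaching at $j-1$; accuracy shrinks the window; $\busypen$ moves the window in the stated direction), and establishes each one pair-by-pair --- Property~1 via a short case analysis showing $G_{ik}>0$ in one orientation forces $\tilde G_{ik}<0$ in the other, and Properties~2--4 by the monotonicity of $\rho_{ik}$ that you identified. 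No comparison of weighted sums across $j$ is ever needed.
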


Finally, Lemma \ref{lem:beyondstratopterr} analyzes the sub-optimality of Algorithm \ref{algo:approximatebeyondsuperstar}. While Algorithm \ref{algo:approximatebeyondsuperstar} does satisfy several nice properties (as in Lemma \ref{lem:beyondprop}), it is an approximation to directly calculating the optimal posterior choice, and as such as some degree of error. In particular, two sources of error are 1) making a decision based on majority \enquote{vote} across pairs of candidates, which leads to error from candidates who \enquote{voted} the opposite direction, and 2) the approximation in $G_{ik}$, which uses permutations $\perm, \tilde \perm$ that make it as easy as possible to select the top candidate.  

\begin{restatable}{lemma}{beyondstratopterr}\label{lem:beyondstratopterr}
For a particular index $1, j$, denote $\mathcal{G}_+$ as $\{v_i, v_j\mid G_{i, k} > 0\}$, $\mathcal{G}_-$ as $\{v_i, v_j\mid G_{i, k} < 0\}$, $\mathcal{I}_{1j}(i,k)$ as the event that items $\{i, k\}$ are in indices $\{1, j\}$, and $\tilde G_{i, k}$ as $G_{ik}$ calculated in Algorithm \ref{algo:approximatebeyondsuperstar}, but with $\perm_{i, k, j}', \tilde \perm_{i, k, j}'$ instead defined as permutations where every item is \emph{incorrectly} ordered. Then, if Algorithm \ref{algo:approximatebeyondsuperstar} picks a candidate in index $j$, the error is upper bounded by: 
$$\sum_{i, k \in \mathcal{G}_-} P[\mathcal{I}_{1j}(i,k)] \cd (\val_i - \val_k) $$
and if Algorithm \ref{algo:approximatebeyondsuperstar} picks a candidate in index $1$, the error is upper bounded by: 
$$\sum_{i, k \in \mathcal{G}_+} P[\mathcal{I}_{1j}(i,k)] \cd (\val_i - \val_k) + \sum_{i, k \in \mathcal{G}_-}P[\mathcal{I}_{1j}(i,k)] \cd \max(-\tilde G_{i, k}, 0) $$
\end{restatable}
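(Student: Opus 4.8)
The plan is to track, pair by pair, how much expected utility Algorithm~\ref{algo:approximatebeyondsuperstar} can lose relative to the true optimal choice between index $1$ and index $j$, using the fact that the only mistake the algorithm can make is to pick the ``wrong'' one of these two indices (since, by Lemma~\ref{lem:monotoneenough} and the inversion-monotonicity established in Lemma~\ref{lem:bothmonotone}, the true optimum is always one of indices $1$ or $j$). Fix the realized status bit $s$ of the top candidate and the index $j$ of the first candidate with the opposite status. The true posterior gain from picking index $j$ over index $1$ is a sum, over all ordered value-pairs $(v_i, v_k)$ with $v_i > v_k$, of terms of the form $P[\mathcal{I}_{1j}(i,k)]$ times a signed utility difference; the sign of each term's contribution is exactly the sign of $G_{ik}$ as defined in the algorithm, \emph{if} one replaces the single representative permutations $\perm_{i,k,j}, \tilde\perm_{i,k,j}$ with the full conditional probabilities. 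So the first step is to write down this exact posterior comparison and observe that $G_{ik}$ is a one-permutation proxy for the true pairwise contribution, and that $\mathcal{G}_1$ (resp.\ $\mathcal{G}_j$) collects the pairs whose proxy favors index $1$ (resp.\ index $j$).

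Next I would bound the two error cases separately. \textbf{Case: the algorithm returns $j$.} Then $|\mathcal{G}_j| > |\mathcal{G}_1|$ (the majority vote went to $j$), but the true optimum might be index $1$. The loss is at most the total true-posterior weight of the pairs that ``voted'' for index $1$ while the algorithm picked $j$ --- i.e.\ the pairs in $\mathcal{G}_- = \{(v_i,v_k) : G_{ik} < 0\}$ --- where each such pair can contribute at most its maximum possible utility swing, namely the probability $P[\mathcal{I}_{1j}(i,k)]$ that $\{i,k\}$ actually land in indices $\{1,j\}$ times the value gap $v_i - v_k$ (the largest difference in realized value these two items can produce across the two candidate choices). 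Summing gives the first displayed bound. \textbf{Case: the algorithm returns $1$.} Here two sources of error appear, matching the two listed in the lemma statement. First, the pairs in $\mathcal{G}_+$ that voted for $j$ but were overruled contribute at most $P[\mathcal{I}_{1j}(i,k)](v_i - v_k)$ each, exactly as above. Second, there is the \emph{approximation} error within the $G_{ik}$ proxy itself: for pairs in $\mathcal{G}_-$ (which voted for index $1$), the proxy used the most-favorable permutations for picking the top candidate, so the true contribution could still be negative by as much as the value of the same expression evaluated on the \emph{least}-favorable permutations, i.e.\ $\tilde G_{i,k}$; this is capped by $\max(-\tilde G_{i,k}, 0)$ per pair. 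Adding these two sums yields the second displayed bound.

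The key technical point to get right is the \emph{direction and magnitude of the per-pair worst case}: one must check that, conditioned on items $i$ and $k$ occupying indices $1$ and $j$ in some order, and conditioned on the observed status pattern being consistent, the difference in the agent's realized utility between ``pick index~$1$'' and ``pick index~$j$'' is sandwiched --- its best case for the algorithm is captured by $G_{ik}$ (built from $\perm_{i,k,j}$, the fully-correctly-ordered permutation, which maximizes the conditional probability that the correctly-placed high item is really there) and its worst case by $\tilde G_{i,k}$ (built from the fully-incorrectly-ordered permutation). This requires the inversion-monotonicity of $\permdist$ to guarantee that $\perm_{i,k,j}$ and $\tilde\perm_{i,k,j}$ genuinely bracket the relevant conditional likelihood ratios, and care in handling the conditioning on the other items' statuses (which factorizes by the independence assumption on the Bernoulli draws, so it does not interfere with the pairwise comparison). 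I expect this bracketing argument --- arguing cleanly that a single representative permutation upper-bounds, and another lower-bounds, the true conditional pairwise gain --- to be the main obstacle; the rest is bookkeeping over the vote counts and the triangle-inequality-style aggregation of per-pair losses.
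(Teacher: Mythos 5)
Your overall skeleton is the same as the paper's: write the exact posterior comparison between indices $1$ and $j$ as a sum over ordered value pairs $(\val_i,\val_k)$ of contributions weighted by $P[\mathcal{I}_{1j}(i,k)]$, treat $G_{ik}$ as a one-sided proxy for each pair's true contribution, and then cap the loss pair-by-pair (value gap for pairs overruled by the vote, and the gap to the fully-inverted proxy $\tilde G_{i,k}$ for the approximation error), exactly as the paper does.

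The genuine gap is in the step you yourself flag as the main obstacle: the bracketing claim that $\perm_{i,k,j}$ (others correctly ordered) and the fully-inverted permutation extremize the relevant likelihood ratio over \emph{all} permutation pairs in which items $i,k$ sit in indices $1,j$. You attribute this to inversion-monotonicity, but that property only compares a permutation with its \emph{own} $1\leftrightarrow j$ swap (giving $P[\perm]/P[\tilde\perm]\geq 1$); it says nothing about which member of the family $\mathcal{P}_{i,k,j}$ maximizes or minimizes $P[\perm]/P[\tilde\perm]$, which is precisely what is needed to conclude that $G_{ik}>0$ forces \emph{every} pair in $\mathcal{P}_{i,k,j}$ to contribute positively (hence zero error from $\mathcal{G}_+$ when the algorithm returns $j$) and that $\tilde G_{i,k}$ caps the residual error from $\mathcal{G}_-$ when it returns $1$. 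The paper does not get this from inversion-monotonicity: it derives it from the explicit Plackett--Luce ratio computed in the proofs of Lemmas \ref{lem:gumbelsataccuracy} and \ref{lem:beyondprop}, where each factor has the form $\bigl(\exp(\val_{\perm_1}/\beta)+S\bigr)/\bigl(\exp(\val_{\perm_j}/\beta)+S\bigr)$ and is monotone in the tail sums $S$, so ordering the remaining items correctly maximizes the ratio and fully inverting them minimizes it. Without that (noise-model-specific) argument your per-pair caps are unsupported; for a general inversion-monotone $\permdist$ the claimed bracketing can fail. A second, smaller omission: the exact posterior condition carries the status-likelihood factors ($\ratio_{i,k}$ and $P[\avail\mid\perm]/P[\avail]$), which the paper tracks explicitly (and then absorbs into a max-likelihood-ratio coefficient); your sketch waves at independence but never shows how these factors collapse into the per-pair form on which the vote and the error caps are based.
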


Note that the error of this algorithm is high when there is a large chance of observing candidates in indices $1, j$ where the relative benefits of being busy versus free are the \emph{opposite} of candidates in general. However, this error bound is very natural: in such a setting, there is no consistent signal in value related to a candidate being free or busy, and thus we would not expect any algorithm preferentially selecting candidates based on status to have good performance. Figure \ref{fig:simsstrat} (bottom) in the Appendix illustrates the expected error of Algorithm \ref{algo:approximatebeyondsuperstar} in the beyond-the-superstar setting, showing that in many parameter regimes, Algorithm \ref{algo:approximatebeyondsuperstar} has low or 0 error. 

\section{Discussion}\label{sec:discuss}

In this paper, we have proposed a simple model to study cases where strategic, self-interested agents use algorithmically generated rankings in the presence of side information, specifically a free-busy status vector for different candidates. Our results show that even this relatively simple model can show surprisingly complex phenomena.

\noindent \textbf{Firm's strategy:} First, we began by analyzing the superstar model, where a single candidate has higher value to the company than all other candidates. In Section \ref{sec:optstrat} we show that the firm's strategy can admit a surprisingly simple form: decide whether to preferentially seek out free or busy candidates, and then decide how far down the list to look for them based on the accuracy of the ranking tool. As the accuracy of the ranking tool increases, the firm only shrinks its \enquote{window size} and the firm's welfare only increases. 

\noindent \textbf{Welfare of candidates and ranking company:} When firms select preferentially based on the free-busy (e.g. employment) status of candidates, this immediately leads to differences in selection rate based on employment status. Preferentially selecting busy candidates or free candidates could have positive or negative effects (e.g. impacts on unemployment rate or competing offers). In Section \ref{sec:welfare}, we showed that there are scenarios where increasing the accuracy of the ranking tool could \emph{increase} how frequently the firm selects a busy candidate. 

Often the incentives of ranking companies, their clients, and societal welfare are assumed to be at least weakly aligned: producing higher-quality ranking tools should increase the value to both companies and clients. However, our results show that their incentives may be misaligned: in particular, a more accurate ranking tool can reduce measures of social welfare and fairness, and can even reduce the welfare of the ranking company itself by decreasing how frequently the top candidate is selected. Our results suggest that improving overall welfare may require more precise interventions than simply increasing the accuracy of tools. 

\noindent \textbf{Beyond superstar:} Finally, in Section \ref{sec:beyondsuperstar}, we discuss how our results generalize beyond the Superstar model. Specifically, we show that reasoning about the strategy of agents using ranking tools becomes substantially more subtle, and we prove properties about permutation models that may be of independent interest. However, given reasonable assumptions, we obtain similar results for strategy and welfare as in the Superstar setting, suggesting that our results generalize beyond a specific model.

\noindent \textbf{Implications:} While we focus on a specific motivating example of an agent with side information using an algorithmic ranking tool, more broadly we are interested in how algorithmic tools are used as part of broader systems. In general, many algorithmic tools are used not in isolation, but in conjunction with other systems or sources of information. Our motivation in this paper is to show that even in a relatively simple human-algorithmic system, increased algorithmic accuracy can have non-intuitive impacts. In general, we wish to illustrate the importance of understanding such impacts and motivate wider study of such settings. 

\noindent \textbf{Extensions: } There are multiple fascinating extensions to our model. For example, in this work we assumed that job candidates are honest about their free-busy status, but if firms preferentially select candidates based on this status, job candidates may be incentivized to mis-represent whether or not they are free. An interesting direction for future work would be to explore the extent to which such settings would impact welfare of firms and candidates. Separately, one policy intervention may be to deliberately hide their free/busy status from companies: this could potentially be in pursuit of increasing fairness in selection rates among candidates with different free/busy status. If firms are aware that busy candidates are less likely to accept offers, but are unaware of which candidates are busy, the optimal strategy of how firms select candidates changes. Additionally, these results generally assume that firms perfectly know certain parameters such as the $\ratio, \busypen$ (which could be viewed as measuring the efficiency of the market and the attractiveness of any given firm, respectively). In reality, firms only have access to estimates of these values, and in particular they may systematically mis-estimate these values, for example, assuming that their firm is more attractive than it really is. In this case, firms may use sub-optimal strategies, which may lead to decreased welfare for multiple agents. Studying how each of these extensions change the overall relationship between increased accuracy and welfare would be interesting avenues for future work. 

\subsection*{Acknowledgments}
This work was partially supported by a MIT METEOR Fellowship, and was partially completed while Kate Donahue was at an internship at Microsoft Research. We wish to thank the following individuals and groups for their very helpful comments: Yeganeh Alimohammadi, Keegan Harris, John Horton, Meena Jagadeesen, Sloan Neitert, Manish Raghavan, Michelle Si, Kiran Tomlinson, and Nicolas Wu, and attendees of the following events: Boston College Computer Science Colloquium, Simons Collaboration on the Theory of Algorithmic Fairness Annual Meeting, Themis Science Talk at Amazon, the AI and Science Group at Meta, Harvard Economics and Computer Science Colloquium, INFORMS Annual Meeting, and the AImpactCenter at UIUC for their valuable discussions and insightful suggestions.
We also thank the anonymous reviewers for their constructive feedback.

\clearpage 
\bibliographystyle{plain}
\bibliography{sample-bibliography}

\clearpage

\appendix
\begin{figure} 
\centering 
    \includegraphics[width=3in]{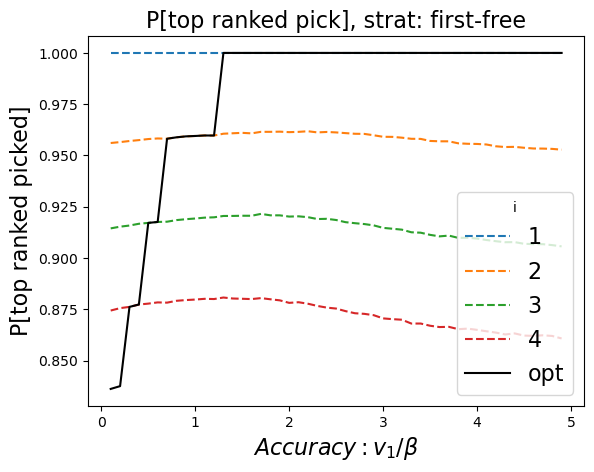}
    \includegraphics[width=3in]{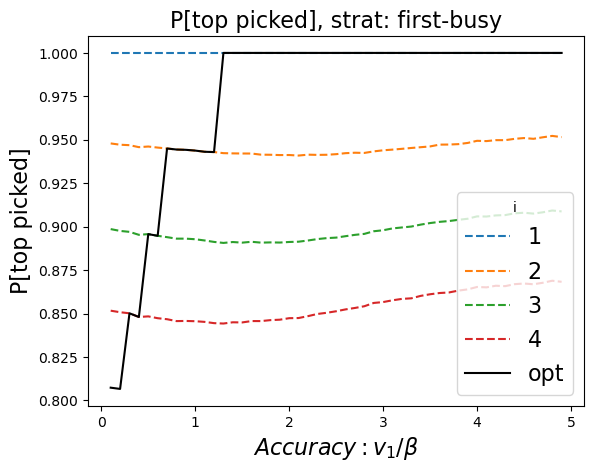}
    \caption{Figure showing examples where increasing accuracy could lead to \emph{increased} probability of picking the top ranked candidate for first-free strategies (top figure) or decreased for first-busy (bottom). In both cases, the phenomenon occurs for very low accuracy (far left part of plot). Both figures show $\nitem=10$ with only lines shown for $i\in [1, 4]$ to show detail, with RUM with Gumbel noise. Top has $p_1 = 0.01< p_2 = 0.05$, bottom has $p_1 = 0.9, p_2 = 0.95$. Numerical simulations with $10^6$ simulations per point. }\label{fig:increasedecreasetop}
\end{figure}
\begin{figure}
    \centering
    \includegraphics[width=0.6\linewidth]{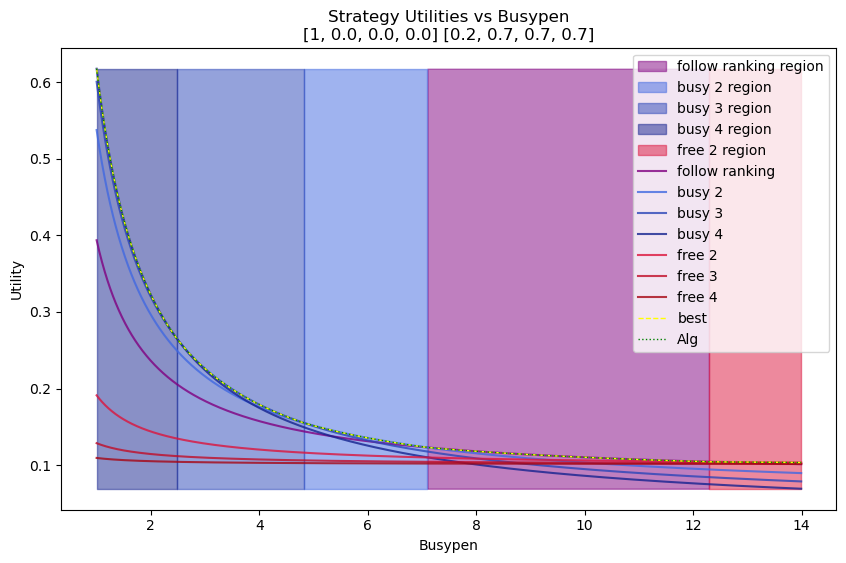}
    \includegraphics[width=0.6\linewidth]{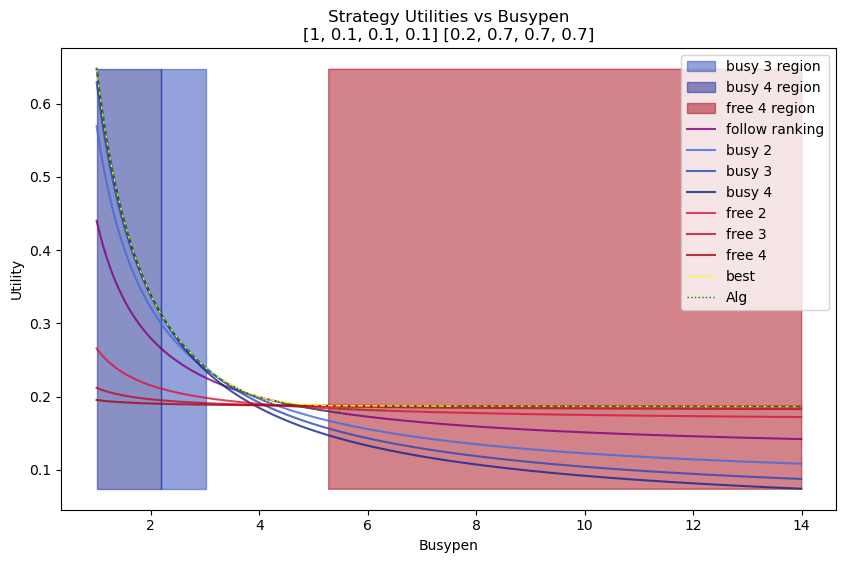}
    \includegraphics[width=0.6\linewidth]{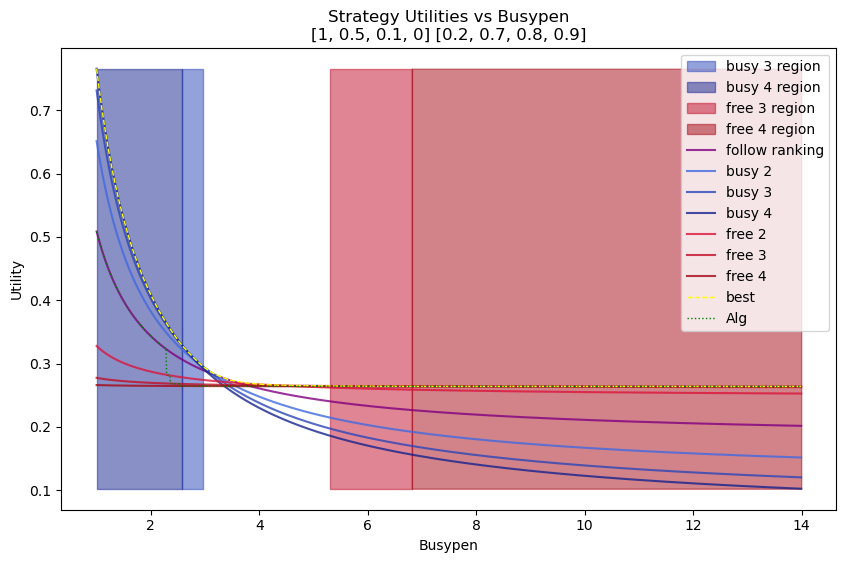}
    \caption{Simulation of performance of strategies with $\nitem=4$. Each line gives the expected utility of different strategies, where \enquote{best} gives the Bayes-optimal posterior strategy and Alg gives the performance of Algorithm \ref{algo:approximatebeyondsuperstar} (which reduces to Algorithm \ref{algo:superstar} in the superstar setting). The shaded regions illustrate where different named strategies happen to also be optimal: for example, the purple region is one where \enquote{follow the ranking} is exactly optimal. Unshaded regions are those where the optimal strategy cannot be described as $k$-free or $k$-busy. However, note that the error (gap between yellow and green lines) is often extremely small. The top and middle are both superstar settings, while the bottom is beyond-the-superstar. Note that the top has 0 error always because $\val_2 = 0$ (Theorem \ref{thrm:vposntwoaccstrat}), while the middle has nonzero error. The bottom (beyond-the-superstar) has regions with non-zero error, but largely follows the optimal strategy.}
    \label{fig:simsstrat}
\end{figure}

\newpage
\onecolumn
\section{Numerical extensions}\label{app:sim}
In Section \ref{sec:modelassump}, we presented our theoretical model. Here, we numerically explore a setting endogenizing our model. Specifically, the setting operates as follows: 
\begin{itemize}
    \item At time $t=0$, all candidates are free. 
    \item At each time step, a firm obtains a noisy permutation over candidates, governed by candidate values. Suppose that the firm selects the highest-ranked free candidate. 
    \item Every busy candidate has some probability of (independently) becoming free. 
\end{itemize}
Given this setting, high-value candidates will be more likely to be at the top of the ranking, and thus more likely to be busy. Depending on the parameters of the setting, it may be beneficial for firms to instead follow other strategies, such as those in Algorithm \ref{algo:superstar} or Algorithm \ref{algo:approximatebeyondsuperstar}. 

In Figures \ref{fig:sim1} and \ref{fig:sim2} we explore these settings. Note that these figures differ solely in the busy penalty $\busypen$. In both, the probability of observing the high value candidate in index $i$ is given by: 
$$[0.3704, 0.2507, 0.1632, 0.1011, 0.0589, 0.0317, 0.0153, 0.0063, 0.002, 0.0004]$$
Which means that the normalized version  $P[\perm^1]/P[\perm^i]$ is given by:
$$[1.477, 2.27, 3.663, 6.285, 11.682, 24.225, 58.902, 185.379, 981.489]$$
Given these parameters, if at each time step a firm picks the top-ranked free candidate, then Figure \ref{fig:sim0} shows that, on average, the probability that the high value candidate is free is around 0.37 and the probability that any given low value candidate is free is around 0.8. Given these parameters, $\ratio  = \frac{p_2/(1-p_2)}{p_1/(1-p_1)} \approx 6.5$. In order to calculate the \enquote{optimal} strategy according to Algorithm \ref{algo:superstar}, we only need to know $\busypen$.  

\begin{figure}
\centering
    \includegraphics[width=0.6\linewidth]{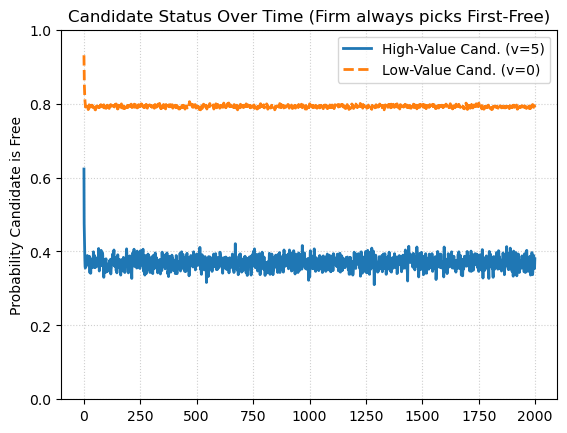}
    \caption{Simulation with $T=2000$ time steps, averaged over 1000 simulations, showing the average probability that the high value and low value candidates are free. }
    \label{fig:sim0}
\end{figure}

For $\busypen = 1.5$, as in Figure \ref{fig:sim1}, we have $R = 4.3$. By Algorithm \ref{algo:superstar}, this means that the firm should follow \enquote{first-busy}, and given the normalized values $P[\perm^1]/P[\perm^i]$, the window size should be $k=4$ (because $k=4$ is the maximum index such that $P[\perm^1]/P[\perm^i] < 4.3$. Figure \ref{fig:sim1} numerically simulates the expected utility for every top-k strategy, both first-free and first-busy, showing that first-busy with window size $k=3$ is, in fact, optimal!

\begin{figure}
    \includegraphics[width=0.5\linewidth]{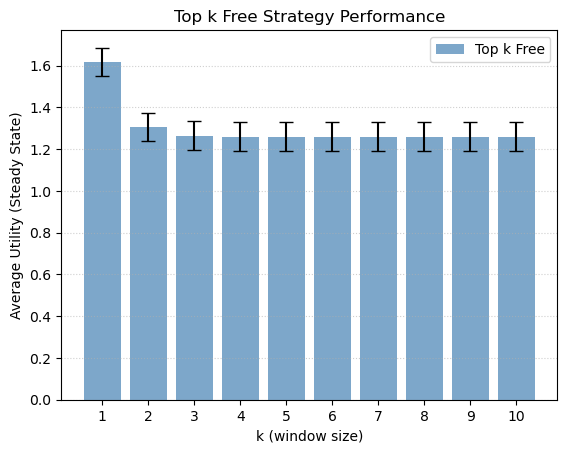}
    \includegraphics[width=0.5\linewidth]{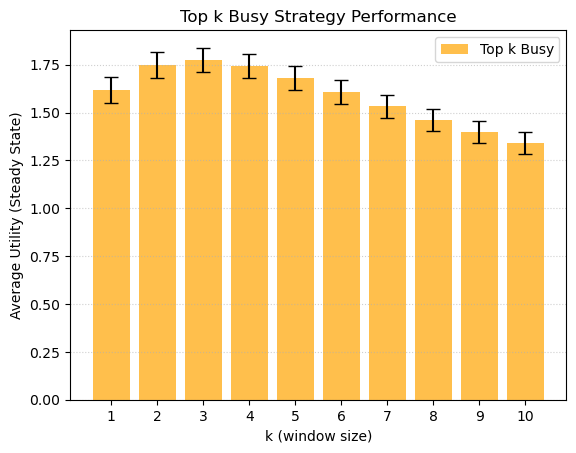}
    \caption{Simulation with $\nitem =10$ candidates, in superstar setting with $\val_1 = 5 > \val_2 = 0$. Ranking given by Plackett-Luce with $\beta = 3$. Each candidate has probability 0.4 independently of becoming free, if busy. Penalty for picking a busy candidate is $\busypen = 1.5$. }
    \label{fig:sim1}
\end{figure}

Next, for $\busypen = 15$, as in Figure \ref{fig:sim2}, we end up with $R =0.43$, which by Algorithm \ref{algo:superstar} results in a first-free strategy with window size $k=2$. Figure \ref{fig:sim2} shows that for this setting, the simulation similarly shows that this strategy is optimal.  

\begin{figure}
    \includegraphics[width=0.5\linewidth]{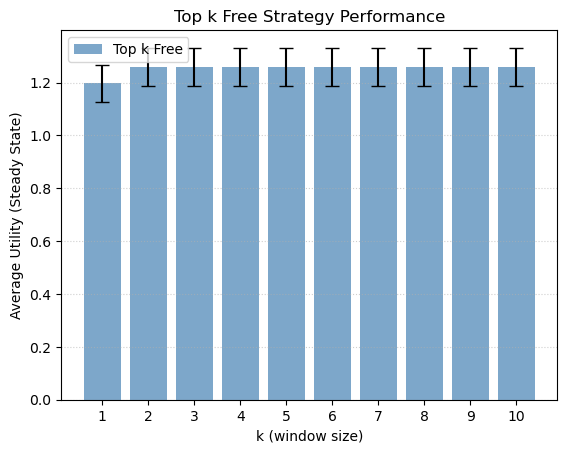}
    \includegraphics[width=0.5\linewidth]{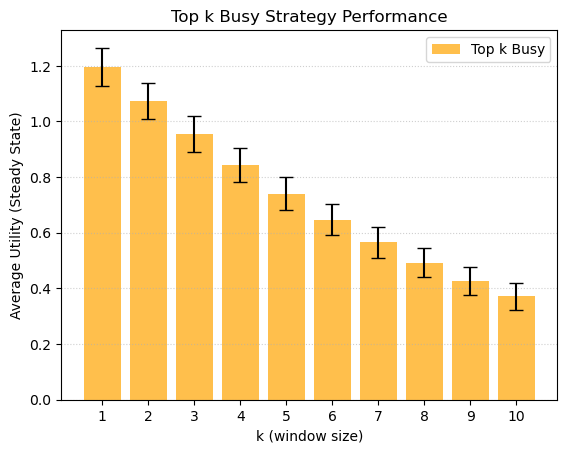}
    \caption{Same as Figure \ref{fig:sim1}, but with $\busypen = 15$. }
    \label{fig:sim2}
\end{figure}

Finally, for $\busypen = 4$, as in Figure \ref{fig:sim3}, we end up with $R =1.63$, which by Algorithm \ref{algo:superstar} results in a first-busy strategy with window size $k=1$ - which is equivalent to \enquote{follow-the-ranking}. Figure \ref{fig:sim3} shows that for this setting, the simulation similarly shows that this strategy is optimal.

\begin{figure}
    \includegraphics[width=0.5\linewidth]{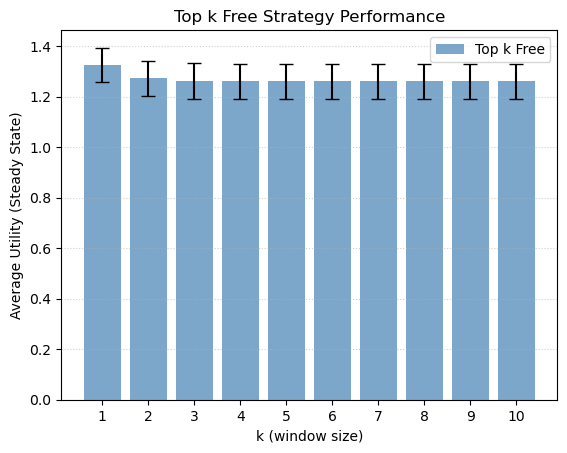}
    \includegraphics[width=0.5\linewidth]{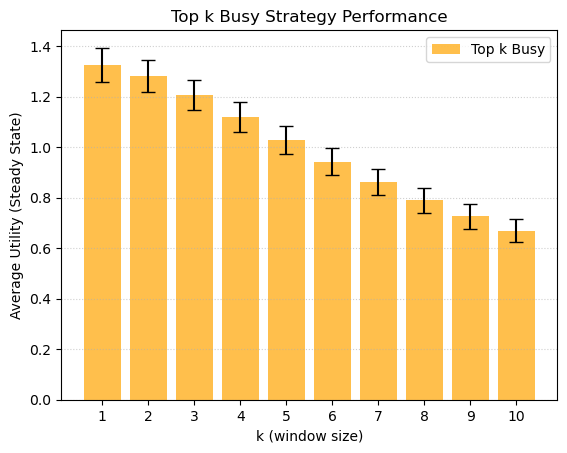}
    \caption{Same as Figure \ref{fig:sim1}, but with $\busypen = 4$. }
    \label{fig:sim3}
\end{figure}

\subsection{Variant with competing strategic firms}
In the previous section, we studied the best-response strategy, assuming that existing firms are preferentially picking the highest-ranked free candidate. However, what happens if existing firms are also picking strategically? Specifically, consider this variant: 

\begin{itemize}
    \item At time $t=0$, all candidates are free. 
    \item At each time step, a firm obtains a noisy permutation over candidates, governed by candidate values. \textbf{Suppose that the firm selects the highest-ranked busy candidate with a $k=3$ window size. }
    \item Every busy candidate has some probability of (independently) becoming free. 
\end{itemize}

In this setting, the existing firm is using the optimal strategy for $\busypen = 1.5$, as in Figure \ref{fig:sim1}. How does this change the optimal best-response strategy? 

Figure \ref{fig:sim4} shows the results of this simulation. Note first that the probability of the high value candidate being free drops substantially, to around 0.25, and the probability of low value candidates increases, to around 0.9. This is because having the other firm preferentially pick busy candidates makes it more likely that the high value candidate is busy. Because these probabilities change, this also changes the best-response strategy. Note that in the lowermost plot, the strategy that maximizes the welfare of the best-responding firm is still first-busy, but now with a larger window size (roughly $k=5$). At a high level, because there's a stronger correlation between value and free/busy status, it becomes more \enquote{worth it} for firms to hunt for busy candidates, even if they're further down the ranking (placing less weight on the algorithmic tool). This could be viewed as a motivation for herding. 

\begin{figure}
\centering 
    \includegraphics[width=0.5\linewidth]{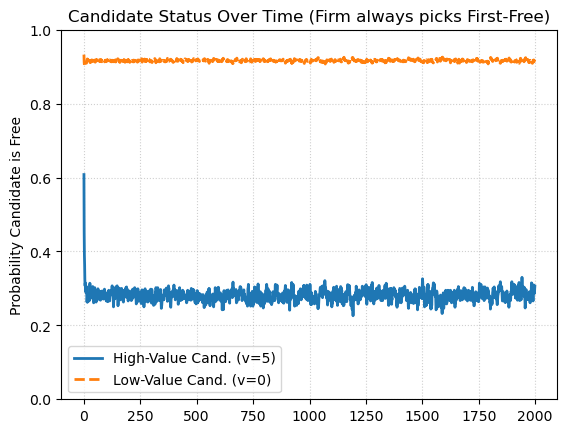}
    \includegraphics[width=0.5\linewidth]{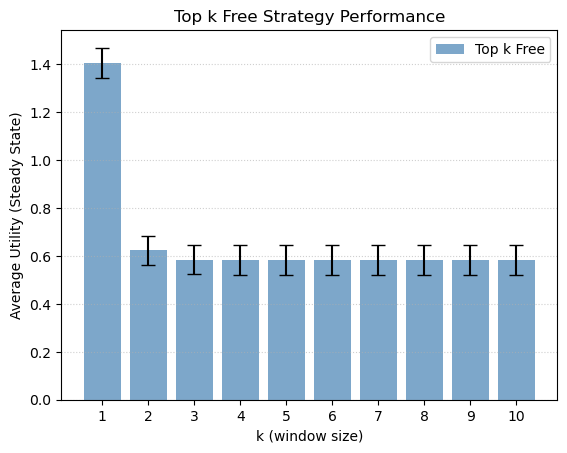}
    \includegraphics[width=0.5\linewidth]{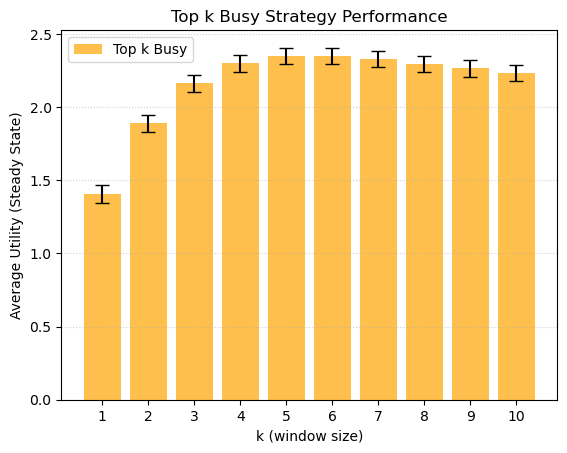}
    \caption{Same as Figure \ref{fig:sim1}, but where existing firm is picking first-busy with window size. }
    \label{fig:sim4}
\end{figure}

\section{Proofs for Section \ref{sec:optstrat}}\label{app:optstrat}
First, we will present general results on properties of distributions over rankings (\Cref{app:superstarpermprop}) and then prove results given in Section \ref{sec:optstrat}. 

\subsection{Properties of ranking permutations}\label{app:superstarpermprop}

\begin{restatable}{definition}{accuracy}\label{def:accuracy}[Accuracy: superstar]
A ranking $\{P[\perm^i]\}'$ is \emph{more accurate} than $\{P[\perm^i]\}$ if both conditions hold: 
\begin{enumerate}
     \item Ratio of probabilities: The relative probability of the high value candidate being ranked more highly always increases: $\frac{P[\perm^i]'}{P[\perm^j]'} \geq \frac{P[\perm^i]}{P[\perm^j]} \ \forall i< j$ with the inequality strict for at least one $i, j$ pair. 
   \item Monotonic increase: if the probability of the high value item being found at index $k$ decreases, then it also decreases for all $j>k$, or: if $P[\perm^k]'<P[\perm^k]$, then $P[\perm^j]'<P[\perm^j]$ for all $j>k$. Note that an immediate corollary of this property is majorization: $\sum_{i=1}^jP[\perm^i]' \geq \sum_{i=1}^jP[\perm^i]$ for all $j$. 
\end{enumerate}

\end{restatable}

\begin{restatable}{theorem}{gumbeltrick}\label{thrm:plackettluce}[From \cite{plackett1975analysis}, \cite{luce1959individual}]
Consider sets of items with values $v_i$, and a Random Utility Model with i.i.d. Gumbel noise $G(\mu, \beta)$ is added to each value. Then, the probability of observing the ordering $\perm$ is given by: 
$$\prod_{j=1}^{\nitem}\frac{\exp(\val_{\perm_j}/\beta)}{\sum_{\ell=j}^{\nitem}\exp(\val_{\perm_{\ell}}/\beta)}$$
\end{restatable}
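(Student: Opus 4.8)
The plan is to compute the ordering probability $\Pr[\hat\val_{\perm_1} > \hat\val_{\perm_2} > \cdots > \hat\val_{\perm_\nitem}]$ directly as a nested integral and show it telescopes into the claimed product. I would first record the one structural fact that makes everything work: writing the Gumbel CDF of the noised value $\hat\val_i = \val_i + \epsilon_i$ as $F_i(x) = \exp(-c_i e^{-x/\beta})$ with $c_i := \exp((\val_i+\mu)/\beta)$, and its density as $f_i(x) = \tfrac{c_i}{\beta} e^{-x/\beta} F_i(x)$, one has that a \emph{product} of such CDFs is again of the same form, $\prod_{\ell \in S} F_{\ell}(x) = \exp\!\bigl(-(\textstyle\sum_{\ell\in S} c_\ell)\, e^{-x/\beta}\bigr)$.

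Next I would set up, using independence of the $\hat\val_i$,
\[
\Pr[\text{order}=\perm] \;=\; \int_{-\infty}^{\infty} f_{\perm_1}(x_1)\int_{-\infty}^{x_1} f_{\perm_2}(x_2)\cdots\int_{-\infty}^{x_{\nitem-1}} f_{\perm_\nitem}(x_\nitem)\,dx_\nitem\cdots dx_1,
\]
and prove the key one-line lemma: for every $t$ and every $j$,
\[
\int_{-\infty}^{t} f_{\perm_j}(x)\prod_{\ell>j} F_{\perm_\ell}(x)\,dx \;=\; \frac{c_{\perm_j}}{\sum_{\ell\ge j} c_{\perm_\ell}}\;\prod_{\ell\ge j} F_{\perm_\ell}(t).
\]
This follows because the integrand equals $\tfrac{c_{\perm_j}}{\beta}\,e^{-x/\beta}\exp\!\bigl(-(\sum_{\ell\ge j}c_{\perm_\ell})e^{-x/\beta}\bigr)$, whose antiderivative (via the substitution $u = e^{-x/\beta}$) is exactly $\tfrac{c_{\perm_j}}{\sum_{\ell\ge j}c_{\perm_\ell}}\exp\!\bigl(-(\sum_{\ell\ge j}c_{\perm_\ell})e^{-x/\beta}\bigr)$.

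Then I would apply this lemma from the innermost integral ($j=\nitem$) outward, as an induction on $j$: each step peels off precisely the factor $\tfrac{c_{\perm_j}}{\sum_{\ell\ge j}c_{\perm_\ell}}$ and hands the term $\prod_{\ell\ge j} F_{\perm_\ell}(\cdot)$ to the next integral, so after $\nitem$ steps (the outermost integral ranging over all of $\mathbb{R}$, where $\prod_\ell F_{\perm_\ell}(\infty)=1$) we obtain $\prod_{j=1}^{\nitem} \tfrac{c_{\perm_j}}{\sum_{\ell\ge j} c_{\perm_\ell}}$. Since $c_i = e^{\mu/\beta}\,e^{\val_i/\beta}$, the common factor $e^{\mu/\beta}$ cancels in every ratio, leaving $\prod_{j=1}^{\nitem} \tfrac{\exp(\val_{\perm_j}/\beta)}{\sum_{\ell=j}^{\nitem}\exp(\val_{\perm_\ell}/\beta)}$, as claimed.

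Since this is a classical identity (it is cited), there is no genuinely hard step; the only thing to be careful about is bookkeeping — writing the nested integral with the correct limits (strict versus non-strict inequalities are a measure-zero matter) and phrasing the telescoping as a clean induction on $j$ rather than hand-waving it. As an alternative I would mention the ``Gumbel-max'' route: show that the maximum of independent Gumbels with common scale $\beta$ is again Gumbel, that $\Pr[\arg\max = k] = \exp(\val_k/\beta)/\sum_\ell \exp(\val_\ell/\beta)$, and that the argmax is independent of the max value; applying this recursively to $\{\perm_1,\dots,\perm_\nitem\},\{\perm_2,\dots,\perm_\nitem\},\dots$ yields the same product. That approach is conceptually appealing but needs a little extra care to justify the conditioning at each stage, so the direct integral is the cleaner write-up.
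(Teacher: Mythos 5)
Your proposal is correct. Note, however, that the paper does not prove this statement at all: it is imported as a known result, cited to Plackett and Luce, and used as a black box in the proof of Proposition~\ref{prop:gRUMbel}. So there is no paper proof to compare against; what you have written is a self-contained derivation of the classical identity, and it checks out. The structural fact that a product of Gumbel CDFs with common scale $\beta$ is again of the form $\exp\p{-C e^{-x/\beta}}$ is exactly what makes the telescoping lemma work: your antiderivative $\frac{c_{\perm_j}}{\sum_{\ell\geq j} c_{\perm_\ell}}\exp\p{-\p{\sum_{\ell\geq j}c_{\perm_\ell}}e^{-x/\beta}}$ differentiates back to the integrand, vanishes as $x\to-\infty$, and equals the stated constant times $\prod_{\ell\geq j}F_{\perm_\ell}(t)$ at the upper limit; the base case $j=\nitem$ is handled by the empty product, and the outermost integral over all of $\mathbb{R}$ uses $F(\infty)=1$. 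The location parameter $\mu$ cancels in every ratio as you say, and ties are a measure-zero event for independent continuous noise, so the strict-inequality bookkeeping is harmless. The alternative Gumbel-max route you sketch would also work, but as you note it requires justifying independence of the argmax and the max at each recursive stage, so the nested-integral induction is the cleaner choice for a written proof.
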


\begin{restatable}{proposition}{gumbeltrickus}
\label{prop:gRUMbel}
Consider the superstar setting (1 item of value $v_1>v_2\geq 0$, and $n-1$ elements of item $\val_2$. If i.i.d. Gumbel noise (with scale parameter $\beta$) is added to each value, then the probability of the high value item being in index $i$ is given by: 
$$P[\perm^i] = \frac{\exp((\val_1-\val_2)/\beta)}{\exp((\val_1-\val_2)/\beta) + (n-1)} \cd \prod_{k=1}^{i-1} \frac{1}{1 + \frac{\exp((\val_1-\val_2)/\beta)-1}{n-k}}$$
\end{restatable}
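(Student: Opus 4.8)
The plan is to derive the formula for $P[\perm^i]$ in the superstar setting directly by specializing the Plackett--Luce formula from Theorem~\ref{thrm:plackettluce} (Proposition~\ref{prop:gRUMbel} being exactly this specialization). In the superstar setting, a permutation $\perm^i$ is one in which the single high-value item $\val_1$ sits in position $i$ and all $n-1$ copies of $\val_2$ fill the remaining positions; since those copies are interchangeable in value, the Plackett--Luce probability depends only on $i$. So I would write out $\prod_{j=1}^{\nitem}\frac{\exp(\val_{\perm_j}/\beta)}{\sum_{\ell=j}^{\nitem}\exp(\val_{\perm_\ell}/\beta)}$ for this particular $\perm$ and simplify term by term.

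First I would observe that the factors for positions $j > i$ all contribute exactly $1/(n-j+1)$ to the product: once the high-value item has been placed (at position $i$), every remaining position is filled by an item of value $\val_2$, so the numerator $\exp(\val_2/\beta)$ cancels against one term of the denominator sum $\sum_{\ell=j}^{\nitem}\exp(\val_2/\beta) = (n-j+1)\exp(\val_2/\beta)$, leaving $1/(n-j+1)$. These factors multiply to $\prod_{j=i+1}^{n} \frac{1}{n-j+1} = \frac{1}{(n-i)!}$. Then I would handle the factors for positions $j \le i$. For $j < i$, the item at position $j$ has value $\val_2$, while the denominator sum still contains the one high-value term, so the factor is $\frac{\exp(\val_2/\beta)}{\exp(\val_1/\beta) + (n-j)\exp(\val_2/\beta)}$, and for $j = i$ the factor is $\frac{\exp(\val_1/\beta)}{\exp(\val_1/\beta) + (n-i)\exp(\val_2/\beta)}$.

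Next I would normalize by dividing numerator and denominator of each factor by $\exp(\val_2/\beta)$, writing $a := \exp((\val_1-\val_2)/\beta)$. The $j=i$ factor becomes $\frac{a}{a + (n-i)}$, each $j<i$ factor becomes $\frac{1}{a + (n-j)}$, and the tail contributes $\frac{1}{(n-i)!}$. Collecting: $P[\perm^i] = \frac{a}{a+(n-i)} \cdot \prod_{j=1}^{i-1}\frac{1}{a+(n-j)} \cdot \frac{1}{(n-i)!}$. The remaining task is purely algebraic: reindex the product (let $k = j$ run $1$ to $i-1$, or equivalently rewrite in terms of $n-k$), pull out a factor of $(n-1)!$-type terms, and match against the claimed closed form $\frac{a}{a + (n-1)} \cdot \prod_{k=1}^{i-1}\frac{1}{1 + \frac{a-1}{n-k}}$. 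Concretely, $\frac{1}{1+\frac{a-1}{n-k}} = \frac{n-k}{a-1+n-k} = \frac{n-k}{a+(n-k)-1}$, so the telescoping identity to verify is that $\frac{a}{a+(n-i)}\cdot\prod_{j=1}^{i-1}\frac{1}{a+n-j}\cdot\frac{1}{(n-i)!}$ equals $\frac{a}{a+n-1}\cdot\prod_{k=1}^{i-1}\frac{n-k}{a+n-k-1}$; shifting the index in one of the products by one shows both sides are $\frac{a \cdot (n-1)!/(n-i)!}{\prod_{m=n-i}^{n-1}(a+m)}$ after cancellation, up to the $\frac{1}{(n-i)!}$ bookkeeping. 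I expect the only mild obstacle to be getting the index ranges on the two products to line up so that the $(n-i)!$ from the tail cancels correctly against the $(n-k)$ numerators in the target form; once the substitution $\frac{1}{1+(a-1)/(n-k)} = \frac{n-k}{a+n-k-1}$ is in hand this is a routine telescoping check and the claimed identity follows.
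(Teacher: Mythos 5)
Your factor-by-factor simplification of the Plackett--Luce product is correct, but there is a genuine gap at the final step: the quantity you compute is the probability of one \emph{specific} full ordering of the $n$ distinct candidates (a named low-value candidate in each remaining slot), whereas $P[\sigma^i]$ in the proposition is the probability of the \emph{event} that the high-value item lands in position $i$, i.e.\ the sum over all $(n-1)!$ arrangements of the low-value candidates. Every one of your factors prices a single named item: at slots $j<i$ you use numerator $\exp(v_2/\beta)$ (giving $1/(a+n-j)$) and your tail factors are $1/(n-j+1)$, so you never account for the multiplicity. Consequently the identity you propose to verify is false as stated: your left-hand side equals $a/\bigl((n-i)!\,\prod_{m=n-i}^{n-1}(a+m)\bigr)$, while the claimed closed form equals $a\,(n-1)!/\bigl((n-i)!\,\prod_{m=n-i}^{n-1}(a+m)\bigr)$; a quick check at $n=3$, $i=1$ gives $a/(2(a+2))$ versus the correct $a/(a+2)$. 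No reindexing of the two products can recover the missing $(n-1)!$, and the hoped-for cancellation of the tail $1/(n-i)!$ against the numerators $n-k$ only works once that factor is present, since $\prod_{k=1}^{i-1}(n-k)=(n-1)!/(n-i)!$.

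The fix is exactly the step the paper makes explicit: multiply the single-ordering probability by $(n-1)!$ (the number of ways to arrange the low-value candidates below/around the high one) and then use $(n-1)!/(n-i)!=\prod_{k=1}^{i-1}(n-k)$ to produce the numerators $n-k$ in the target form. Equivalently, you can avoid the bookkeeping entirely by computing the event probability directly from the sequential PL sampling: at each stage $j<i$ the probability that \emph{some} remaining low-value item is chosen is $(n-j)/(a+n-j)$, and at stage $i$ the high item is chosen with probability $a/(a+n-i)$, whose product telescopes immediately to the claimed formula. With either repair your argument coincides with the paper's proof (specialize Theorem~\ref{thrm:plackettluce}, simplify, and telescope); as written, it is off by the factor $(n-1)!$ and does not close.
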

\begin{proof}
We can prove this by relying on the Plackett-Luce model (Theorem \ref{thrm:plackettluce}), which says that the probability of observing any permutation $\perm$ is given by: 
$$\prod_{j=1}^{\nitem}\frac{\exp(\val_{\perm_j}/\beta)}{\sum_{\ell=j}^{\nitem}\exp(\val_{\perm_{\ell}}/\beta)}$$ 
Denote $\perm^k$ as a permutation with the high-value candidate in index $k$. Note that there are $(\nitem-1)!$ such permutations, because there are $(\nitem-1)!$ different ways to arrange the $\nitem-1$ low-value candidates. Applying Theorem \ref{thrm:plackettluce} in the superstar setting gives: 
$$\prod_{j=1}^{k-1}\frac{\exp(\val_2/\beta)}{\exp(\val_1/\beta) + (\nitem-j) \cd \exp(\val_2/\beta)} \cd \frac{\exp(\val_1/\beta)}{\exp(\val_1/\beta) + (\nitem-k) \cd \exp(\val_2/\beta)}\cd \prod_{j=k}^{\nitem-1}\frac{\exp(\val_2/\beta)}{\exp(\val_2/\beta) \cd (\nitem-j)}$$
$$=\prod_{j=1}^{k-1}\frac{1}{\exp((\val_1-\val_2)/\beta) + (\nitem-j)}  \cd \frac{\exp((\val_1-\val_2)/\beta)}{\exp((\val_1-\val_2)/\beta) + (\nitem-k)}\cd \frac{1}{\prod_{j=k}^{\nitem-1}(\nitem-j)}$$
$$ = \frac{\exp((\val_1 - \val_2)/\beta)}{\exp((\val_1 - \val_2)/\beta) + \nitem-1} \cd \prod_{j=2}^{k}\frac{1}{\exp((\val_1 - \val_2)/\beta) + \nitem-j} \cd \frac{1}{\prod_{j=k}^{\nitem-1}(\nitem-j)}$$
Reindexing: 
$$ = \frac{\exp((\val_1 - \val_2)/\beta)}{\exp((\val_1 - \val_2)/\beta) + \nitem-1} \cd \prod_{j=1}^{k-1}\frac{1}{\exp((\val_1 - \val_2)/\beta) + \nitem-j-1} \cd \frac{1}{\prod_{j=k}^{\nitem-1}(\nitem-j)}$$
Next, we multiply this by $(N-1)!$ to get the total probability $P[\perm^j]$ (probability of any permutation with high value item in index $j$). We note that: 
$$\frac{(\nitem-1)!}{\prod_{j=k}^{\nitem-1}(\nitem-j)}  = \prod_{j=1}^{k-1} (\nitem-j)$$

which means that the total probability can be written as
$$ \frac{\exp((\val_1 - \val_2)/\beta)}{\exp((\val_1 - \val_2)/\beta) + \nitem-1} \cd \prod_{j=1}^{k-1}\frac{\nitem-j}{\exp((\val_1 - \val_2)/\beta) + \nitem-j-1} $$
$$= \frac{\exp((\val_1 - \val_2)/\beta)}{\exp((\val_1 - \val_2)/\beta) + \nitem-1} \cd \prod_{j=1}^{k-1}\frac{1}{1 + \frac{\exp((\val_1 - \val_2)/\beta)-1}{\nitem-j}} $$
as desired. 
\end{proof}

From Proposition \ref{prop:gRUMbel} we can immediately observe that there is a multiplicative relationship between each of the permutations: 

\begin{observation}\label{obs:gRUMbel}
In the RUM with Gumbel noise and the superstar setting, for any $i<j$, we have: 
$$\frac{P[\perm^i]}{P[\perm^j]} = \prod_{k=i}^{j-1}\p{1 + \frac{\exp((\val_1-\val_2)/\beta)-1}{n-k}}$$ 
\end{observation}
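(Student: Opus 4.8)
\textbf{Proof proposal for Observation~\ref{obs:gRUMbel}.}

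The plan is to derive the multiplicative relationship directly from the closed-form expression for $P[\perm^i]$ already established in Proposition~\ref{prop:gRUMbel}. Write $\alpha = \exp((\val_1-\val_2)/\beta)$ for brevity. Proposition~\ref{prop:gRUMbel} gives
\[
P[\perm^i] = \frac{\alpha}{\alpha + n - 1} \cd \prod_{k=1}^{i-1} \frac{1}{1 + \frac{\alpha - 1}{n-k}}.
\]
First I would form the ratio $P[\perm^i]/P[\perm^j]$ for $i < j$. The common prefactor $\frac{\alpha}{\alpha+n-1}$ cancels, and the remaining quotient of products is
\[
\frac{\prod_{k=1}^{i-1}\p{1 + \frac{\alpha-1}{n-k}}^{-1}}{\prod_{k=1}^{j-1}\p{1 + \frac{\alpha-1}{n-k}}^{-1}}
= \prod_{k=1}^{j-1}\p{1 + \frac{\alpha-1}{n-k}} \Big/ \prod_{k=1}^{i-1}\p{1 + \frac{\alpha-1}{n-k}}.
\]
Since $i - 1 \le j - 1$, the denominator's factors are a prefix of the numerator's, so they telescope away, leaving exactly $\prod_{k=i}^{j-1}\p{1 + \frac{\alpha-1}{n-k}}$, which is the claimed identity after substituting back $\alpha = \exp((\val_1-\val_2)/\beta)$.

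There is essentially no obstacle here: the statement is an immediate algebraic consequence of Proposition~\ref{prop:gRUMbel}, and the only thing to be careful about is the index bookkeeping in the telescoping (that the product over $k=1,\dots,i-1$ is genuinely a sub-product of the one over $k=1,\dots,j-1$, which holds precisely because $i<j$). One could alternatively note that Proposition~\ref{prop:gRUMbel} already exhibits $P[\perm^i]$ as a partial product of the sequence $1, \frac{1}{1+(\alpha-1)/(n-1)}, \frac{1}{1+(\alpha-1)/(n-2)}, \dots$ scaled by a constant, so consecutive ratios $P[\perm^i]/P[\perm^{i+1}] = 1 + \frac{\alpha-1}{n-i}$ are read off directly, and the general ratio follows by chaining these consecutive ratios together — but the direct cancellation above is the cleanest route and is what I would write.
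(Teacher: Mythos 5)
Your proposal is correct and matches the paper's treatment: the paper presents Observation~\ref{obs:gRUMbel} as an immediate consequence of the closed form in Proposition~\ref{prop:gRUMbel}, which is exactly the ratio-and-telescope computation you carry out. The index bookkeeping and the cancellation of the common prefactor are right, so nothing is missing.
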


\begin{restatable}{corollary}{gRUMbelacc}\label{cor:gRUMbelacc}
For the RUM with Gumbel noise, decreasing $\beta$ induces a distribution $\{P[\perm^i]'\}$ that is more accurate than $\{P[\perm^i]\}$ according to Definition \ref{def:accuracy}. 
\end{restatable}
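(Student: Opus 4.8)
The plan is to verify the two conditions of Definition \ref{def:accuracy} directly using the closed-form ratio from Observation \ref{obs:gRUMbel}. Write $f(\beta) = \exp((\val_1-\val_2)/\beta)$, so that for any $i<j$ we have $\frac{P[\perm^i]}{P[\perm^j]} = \prod_{k=i}^{j-1}\p{1 + \frac{f(\beta)-1}{n-k}}$. Since $\val_1 \geq \val_2 \geq 0$, the quantity $(\val_1-\val_2)/\beta$ is nonnegative and strictly decreasing in $\beta$ whenever $\val_1 > \val_2$; hence $f(\beta) \geq 1$ and $f$ is (weakly) decreasing in $\beta$, strictly so when $\val_1 > \val_2$. (If $\val_1 = \val_2$ the ranking is completely uninformative at every $\beta$ and there is nothing to prove, so assume $\val_1 > \val_2$.)

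First I would check condition 1 (ratio of probabilities). Decreasing $\beta$ to $\beta' < \beta$ increases $f$, i.e. $f(\beta') > f(\beta) \geq 1$. Each factor $1 + \frac{f(\beta)-1}{n-k}$ is strictly increasing in $f$ (since $n-k \geq 1 > 0$ for the relevant range $k \leq j-1 \leq n-1$), so every factor in the product strictly increases, and a product of positive terms each of which strictly increases is itself strictly larger. Therefore $\frac{P[\perm^i]'}{P[\perm^j]'} > \frac{P[\perm^i]}{P[\perm^j]}$ for every $i<j$, which is stronger than what Definition \ref{def:accuracy} requires (it only asks for a weak inequality with strictness for at least one pair).

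Next I would check condition 2 (monotonic increase). Using Proposition \ref{prop:gRUMbel}, $P[\perm^k] = \frac{f(\beta)}{f(\beta)+n-1}\prod_{j=1}^{k-1}\frac{1}{1 + \frac{f(\beta)-1}{n-j}}$. The leading factor $\frac{f}{f+n-1} = \frac{1}{1 + (n-1)/f}$ is strictly increasing in $f$, hence strictly decreasing in $\beta$; so decreasing $\beta$ strictly increases $P[\perm^1] = \frac{f}{f+n-1}$, meaning $P[\perm^1]$ never decreases. For $k \geq 2$, each additional factor $\frac{1}{1+(f-1)/(n-j)}$ is strictly decreasing in $f$. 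The key observation is that $\frac{P[\perm^{k+1}]}{P[\perm^k]} = \frac{1}{1 + (f(\beta)-1)/(n-k)}$, which is strictly decreasing in $f$ and therefore strictly increasing in $\beta$; equivalently, decreasing $\beta$ makes each successive ratio smaller. Combined with the fact that we can write $P[\perm^k] = P[\perm^1]\prod_{j=1}^{k-1}\frac{P[\perm^{j+1}]}{P[\perm^{j}]}$, I would argue that if $P[\perm^k]' < P[\perm^k]$ for some $k$, then the cumulative product of ratios up to index $k$ must have decreased enough to overcome the increase in $P[\perm^1]$; since every subsequent ratio factor only decreases further under $\beta' < \beta$, the product up to any $j > k$ decreases by strictly more, so $P[\perm^j]' < P[\perm^j]$ as well. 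The main obstacle is making this last inductive step fully rigorous: one must carefully track that $P[\perm^1]' \geq P[\perm^1]$ while the tail product strictly shrinks, so the "deficit" is monotone once it appears — a clean way is to show $P[\perm^j]'/P[\perm^j] = \big(P[\perm^1]'/P[\perm^1]\big)\prod_{k=1}^{j-1}\frac{1 + (f(\beta)-1)/(n-k)}{1 + (f(\beta')-1)/(n-k)}$, note this ratio is nonincreasing in $j$ (each new factor is $< 1$), so the index set where it drops below $1$ is an up-set, which is exactly condition 2. The majorization corollary then follows as noted in Definition \ref{def:accuracy}.
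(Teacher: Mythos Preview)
Your proof is correct and follows essentially the same approach as the paper: both arguments verify condition~1 directly from Observation~\ref{obs:gRUMbel} by noting each factor in the product is increasing in $f(\beta)=\exp((\val_1-\val_2)/\beta)$, and both handle condition~2 via the product decomposition $P[\perm^j]=P[\perm^1]\prod_{k=1}^{j-1}q_k$ with $q_k$ decreasing under increased accuracy. Your formulation of condition~2 --- tracking the ratio $P[\perm^j]'/P[\perm^j]$ and observing it is nonincreasing in $j$ because each successive factor is below $1$ --- is in fact a bit cleaner than the paper's inductive difference argument, but the underlying idea is the same.
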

\begin{proof}
First, we will prove the first property: that the relative probability of the high value candidate being ranked more highly always increases: $\frac{P[\perm^i]'}{P[\perm^j]'} \geq \frac{P[\perm^i]}{P[\perm^j]} \ \forall i< j$ with the inequality strict for at least one $i, j$ pair. This comes almost immediately from Observation \ref{obs:gRUMbel}. Note that increasing $(\val_1-\val_2)/\beta$ causes the term within Observation \ref{obs:gRUMbel} to increase, thus increasing the ratio of probabilities $P[\perm^i]/P[\perm^j]$. 

Second, we will prove the second property: if the probability of the high value item being found at index $k$ decreases, then it also decreases for all $j>k$, or: if $P[\perm^k]'<P[\perm^k]$, then $P[\perm^j]'<P[\perm^j]$ for all $j>k$. 

First, we note that from Proposition \ref{prop:gRUMbel} we can write: 
$$P[\perm^i] = P[\perm^1] \cd \prod_{k=1}^{i-1}q_k$$
for $q_k<1$, with $q_k$ shrinking with increased accuracy ($(\val_1 - \val_2)/\beta)$). We can write: 
$$P[\perm^i]' - P[\perm^i] = P[\perm^1]' \cd \prod_{k=1}^{i-1}q_k' - P[\perm^1] \cd \prod_{k=1}^{i-1}q_k$$
Suppose that $P[\perm^i]' - P[\perm^i]<0$ for some $i$: we will aim to show that this must also hold for $i+1$. 

We can observe that the model in Proposition \ref{prop:gRUMbel} satisfies $P[\perm^1]' > P[\perm^1]$, so it must be that $\prod_{k=1}^{i-1}q_k' < \prod_{k=1}^{i-1}q_k$. Because $q_i'\leq q_i < 1$, we must have that $\prod_{k=1}^{i}q_k' < \prod_{k=1}^{i}q_k$, which implies that 
$$P[\perm^1]' \cd \prod_{k=1}^{i-1}q_k' - P[\perm^1] \cd \prod_{k=1}^{i-1}q_k = P[\perm^i]' - P[\perm^i] <0$$
as desired. 
\end{proof}

The most common noise distribution with the Random Utility Model is Gaussian noise, while our theoretical results (Proposition \ref{prop:gRUMbel}) uses Gumbel noise as in the Placket-Luce distribution (\cite{plackett1975analysis, luce1959individual}). Here, we show that empirical results for Gaussian noise closely mimic theoretical results for Gumbel noise. Figure \ref{fig:pdfs} shows PDFS for Gumbel and Gaussian distributions, Figure \ref{fig:psis} shows the $P[\perm^i]$ values for the superstar model, and Figure \ref{fig:ps1psis} shows the ratio $P[\sigma^1]/P[\sigma^i]$, both for the Gumbel and Gaussian noise distribution. In all settings, the $\beta$ and standard deviation values are set to be equal. 

\begin{figure}
\centering 
    \includegraphics[width=2.5in]{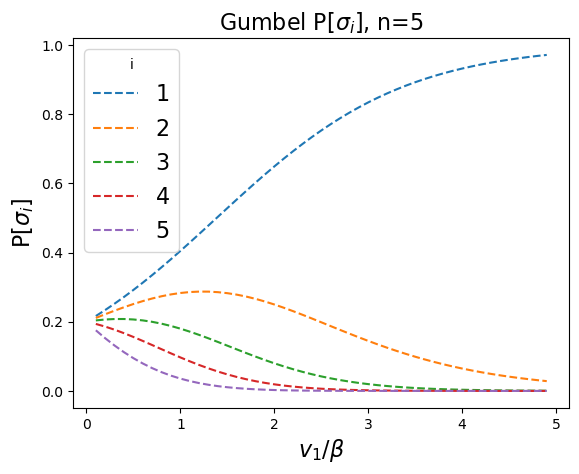}
    \includegraphics[width=2.5in]{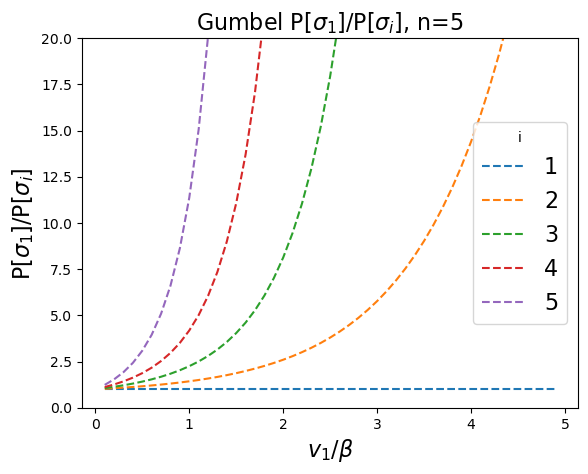}
    \caption{Left: Figure showing the probability of the high-value item being in index $i$, for $i \in [1, 5]$ and $n=5$, as a function of accuracy with the RUM with Gumbel noise. Note that if the line decreases for $i$, it also decreases for $j>i$ (second part of Definition \ref{def:accuracy}). Right: Figure showing ratio of probability of high value item being ranked first over probability of it being in index $i$. Note that this ratio always increases (first part of Definition \ref{def:accuracy}).}
\end{figure}

\begin{figure}
\centering 
    \includegraphics[width=3in]{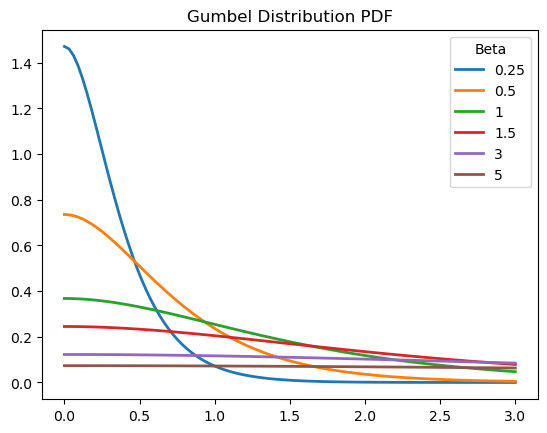}
    \includegraphics[width=3in]{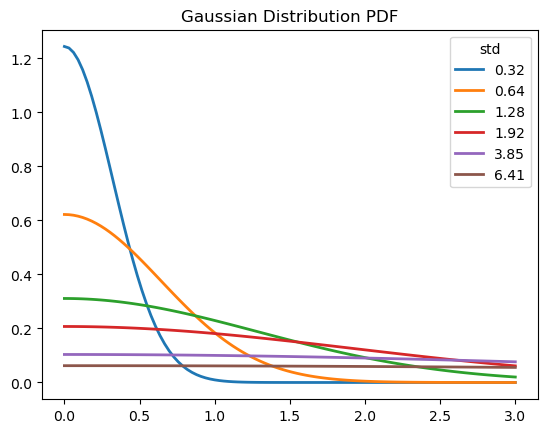}
    \caption{Figure of PDFs of Gumbel and Gaussian noise, respectively.}\label{fig:pdfs}
\end{figure}

\begin{figure}
\centering 
    \includegraphics[width=3in]{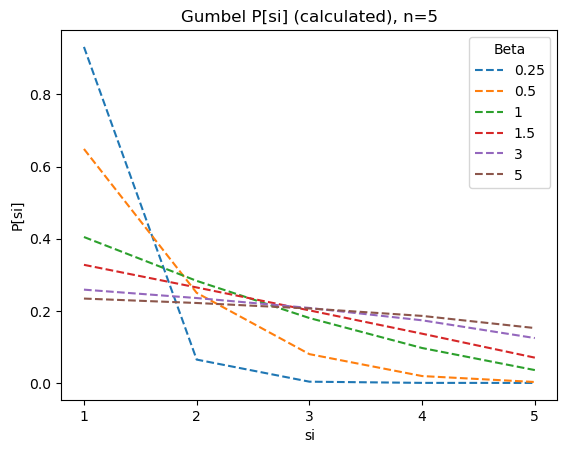}
    \includegraphics[width=3in]{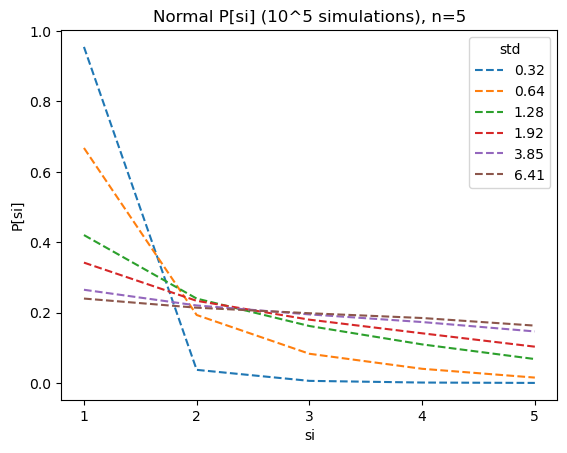}
    \caption{Figure of $P[\perm^i]$ for RUM with Gumbel and Gaussian noise, respectively. Gumbel is calculated exactly while Gaussian is simulated numerically.}\label{fig:psis}
\end{figure}

\begin{figure}
\centering 
    \includegraphics[width=3in]{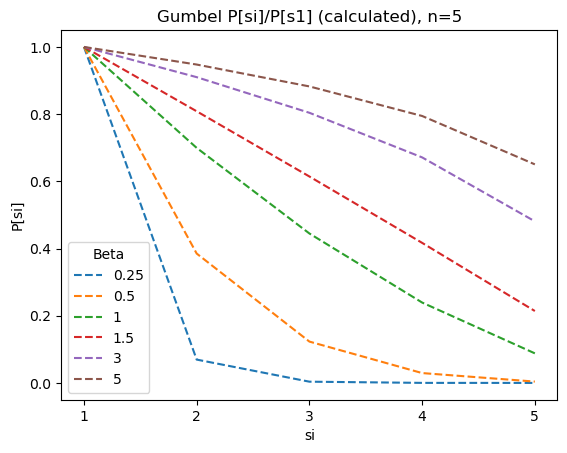}
    \includegraphics[width=3in]{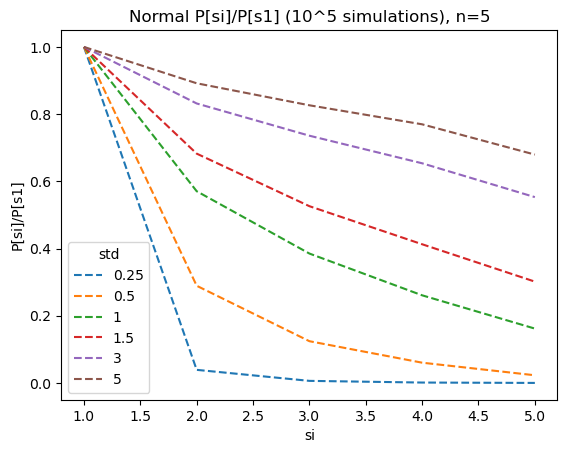}
    \caption{Figure of $\frac{P[\perm^1]}{P[\perm^i]}$ for RUM with Gumbel and Gaussian noise, respectively. Gumbel is calculated exactly while Gaussian is simulated numerically. Note that $P[\sigma^1]/P[\sigma^i]$ is strictly increasing as noise decreases for both Gumbel and Gaussian noise.}\label{fig:ps1psis}
\end{figure}

\subsection{General proofs for Section \ref{sec:optstrat}}\label{app:superstarstratproof}

\alwaysborf*
\begin{proof}
In order to prove this, we will prove that the expected value of items given the status vector $\avail$ is descending, or: 
$$\mathbb{E}_{\perm \sim \permdist}[\val_{\perm_i} \mid \avail] \geq \mathbb{E}_{\perm \sim \permdist}[\val_{\perm_i} \mid \avail] 
 \quad  i <j$$
 If this condition holds, then for any two $i, j$ given $\avail_i = \avail_j$, we know that the lower ranked of the two has higher utility, and thus no optimal strategy could every pick the higher ranked item. 

 Next, we will show that the superstar setting always satisfies this property. 
The expected value of the $i$th entry is given by: 
$$\mathbb{E}[\val_{\perm_i} \mid \avail] = \sum_{\perm \sim \permdist}P[\perm \mid \avail] \cd \val_{\perm_i} = \sum_{k = 1}^{\nitem} \frac{P[\avail \mid \perm^k ] \cd P[\perm^k]}{P[\avail]} \cd \val_{\perm^k_i}$$
where we have used that in the superstar setting, each of the permutations can be identified by the index of the high-value item. 
By identical reasoning, the expected value of the $j$th entry is given by: 
$$\mathbb{E}[\val_{\perm_j} \mid \avail] = \sum_{\perm \sim \permdist}P[\perm \mid \avail] \cd \val_{\perm_j} = \sum_{k = 1}^{\nitem} \frac{P[\avail \mid \perm^k ] \cd P[\perm^k]}{P[\avail]} \cd \val_{\perm^k_j}$$
The condition we wish to show is: 
$$\sum_{k = 1}^{\nitem} \frac{P[\avail \mid \perm_k ] \cd P[\perm_k]}{P[\avail]} \cd \val_{\perm^k_i} \geq \sum_{k = 1}^{\nitem} \frac{P[\avail \mid \perm_k ] \cd P[\perm_k]}{P[\avail]} \cd \val_{\perm^k_j}$$
Or: 
$$\sum_{k = 1}^{\nitem} P[\avail \mid \perm_k ] \cd P[\perm_k] \cd \p{\val_{\perm^k_i}- \val_{\perm^k_j}}\geq 0 $$
Because we are in the superstar setting, we know that $\val_{\perm^k_i}= \val_{\perm^k_j}$ (both items are low value) unless $k \in \{i, j\}$. Dropping all but these values of $k$ and using $\val_1 \geq \val_2$ for the high and low values respectively gives:
$$P[\avail \mid \perm^i] \cd P[\perm^i] \cd (\val_1 - \val_2) + P[\avail \mid \perm^j] \cd P[\perm^j] \cd (\val_2 - \val_1) \geq 0$$
which is positive whenever: 
$$(\val_1 - \val_2) \cd (P[\avail \mid \perm^i] \cd P[\perm^i] - P[\avail \mid \perm^j] \cd P[\perm^j]) \geq 0$$
By assumption, $\val_1 \geq \val_2$, and $\avail_i = \avail_j$. This latter assumption tells us that $P[\avail \mid \perm^i] = P[\avail \mid \perm^j]$: because the status in index $i, j$ are identical, they are equally likely to occur if the high-value item is in index $i$ or $j$. Finally, we require that the expected value of the distribution $\mathbb{E}{\perm \sim \permdist}[\val_{\perm_i}]$ is always descending in $i$, which for the superstar setting implies that the high-value item is always (weakly) more likely to be at lower indices ($P[\perm^i] \geq P[\perm^j]$). Taken together, this implies that the above condition reduces to:
$$(\val_1 - \val_2) \cd P[\avail \mid \perm^i]  \cd ( P[\perm^i] - P[\perm^j]) \geq 0$$
which always holds. 
\end{proof}

\counter*
\begin{proof}
Consider a setting where $\nitem=3$ and we have a completely uninformative ranking: $P[\perm^1] = P[\perm^2] = P[\perm^3]$. Then, our posterior will be completely unrelated to the ranking and depend solely on the status vector. 

We will consider our posterior after two different status vectors, and shown that the optimal strategy described by both of these shows that neither first-free nor first-busy can be optimal, because there exist settings where the firm would preferentially pick either free or busy candidates. 

\noindent \textbf{Case 1: [0, 0, 1]}\\
From our analysis of Theorem \ref{thrm:vposntwoaccstrat}, picking the bottom ranked (free) candidate is optimal when Equation \ref{eq:firstbusyeq} is satisfied: 
$$P[\perm^1] \cd \ratio \cd (\val_1/\busypen_1 - \val_2) \leq P[\perm^j] \cd (\val_1 -  \val_2/\busypen_2)  + \ratio \cd \val_2 \cd (1-1/\busypen_2) \cd P[\perm^{2\leq i < j}]+ \val_2 \cd (1-1/\busypen_2) \cd \sum_{i=j+1}^{\nitem} \ratio^{1-\avail_i} \cd P[\perm^i]$$
which in our setting with $\nitem=3$ and $P[\perm^1] = P[\perm^2] = P[\perm^3]$ simplifies to: 
$$\ratio \cd (\val_1/\busypen_1 - \val_2) \leq (\val_1 - \val_2/\busypen_2)  + \ratio \cd \val_2 \cd (1-1/\busypen_2)$$
Simplifying gives: 
$$\ratio \cd (\val_1/\busypen_1  - \val_2 - \val_2 \cd (1-1/\busypen_2))\leq \val_1 - \val_2/\busypen_2 $$
$$\ratio \cd (\val_1/\busypen_1 - \val_2 \cd (2-1/\busypen_2))\leq \val_1 -  \val_2/\busypen_2$$
Given $\val_1/\busypen_1  - \val_2 \cd (2-1/\busypen_2)>0$, this implies: 
$$\ratio \leq \frac{\val_1 - \val_2/\busypen_2 }{\val_1 /\busypen_1 - \val_2 \cd (2-1/\busypen_2)}$$

\noindent \textbf{Case 2: [1, 1, 0]}\\
From our analysis of Theorem \ref{thrm:vposntwoaccstrat}, picking the last ranked (busy) candidate is optimal when Equation \ref{eq:firstfreeeq} is satisfied: 

$$P[\perm^1] \cd (\val_1 - \val_2/\busypen_2) + \val_2 \cd (1-1/\busypen_2) \cd P[\perm^{2\leq i<j}] + (1-1/\busypen_2) \cd \val_2 \cd \sum_{i=j+1}^{\nitem} \ratio^{1-\avail_i} \cd P[\perm^i] \leq P[\perm^j] \cd \ratio \cd ( \val_1/\busypen_1 - \val_2)$$
which in our setting with $\nitem=3$ and $P[\perm^1] = P[\perm^2] = P[\perm^3]$ simplifies to: 

$$(\val_1 - \val_2/\busypen_2) + \val_2 \cd (1-1/\busypen_2) \leq \ratio \cd (1/\busypen_1 \cd \val_1 - \val_2)$$
Simplifying gives: 
$$\val_1 + \val_2 \cd (1-2/\busypen_2) \leq \ratio \cd (\val_1/ \busypen_1 - \val_2)$$
Assuming $\val_1/ \busypen_1 - \val_2>0$, this implies: 
$$\frac{\val_1 + \val_2 \cd (1-2 /\busypen_2)}{\val_1 /\busypen_1 - \val_2}\leq \ratio $$

\noindent \textbf{Analysis:}
Taken together, Case 1 and Case 2 imply that the optimal strategy given observed status vectors $[0, 0, 1]$ and $[1, 1, 0]$ is always to pick the bottom candidate when the condition below holds: 
$$\frac{\val_1 + \val_2 \cd (1-2/\busypen_2)}{\val_1 \cd/\busypen_1 - \val_2} < \ratio < \frac{\val_1 -  \val_2/\busypen_2}{\val_1/\busypen_1  - \val_2 \cd (2-1/\busypen_2)} $$
Note that if $\val_2=0$ this evaluates to: 
$$\busypen_1 < \ratio < \busypen_1$$
which is never satisfied. However, with $\val_2>0$, it is possible to set parameters such that this evaluates to a positive term: consider $\val_1 = 1, \val_2 = 0.3, \busypen_1 = \busypen_2 = 2$, where the bounds evaluate to: 
$$7.25 \leq \ratio \leq 17$$
In this setting, we have shown that the optimal strategy can \emph{never} be approximated by \enquote{first-free, first-busy}. However, this is also a setting where the ranking tool is extremely uninformative, and so we shouldn't have expected to have this be a reasonable strategy. Effectively, this shows that if $\val_2>0$, we can only hope for approximate optimality of strategies like this. 
\end{proof}

\vposntwoaccstrat*
\begin{proof}
We will begin with the setting where $\avail_{i<j} = 0, \avail_j= 1$ (the \textbf{highest-ranked candidate is busy}). We pick the first candidate whenever: 
$$\mathbb{E}_{\perm \sim \permdist}[\val_{\perm_1} \mid \avail]/ \gamma \geq  \mathbb{E}_{\perm \sim \permdist}[\val_{\perm_j} \mid \avail] $$
Using similar analysis to the proof of Lemma \ref{lem:alwaysborf}, we can apply Bayes rule to rewrite this as: 
$$\sum_{\ell = 1}^{\nitem} \frac{P[\avail \mid \perm^\ell ] \cd P[\perm^\ell]}{P[\avail]} \cd \val_{\perm^\ell_1}/\gamma_{\perm^\ell_1} \geq \sum_{\ell = 1}^{\nitem} \frac{P[\avail \mid \perm^\ell ] \cd P[\perm^\ell]}{P[\avail]} \cd \val_{\perm^\ell_j} $$
$$\sum_{\ell = 1}^{\nitem} P[\avail \mid \perm^\ell ] \cd P[\perm^\ell]  \cd \val_{\perm^\ell_1} /\gamma_{\perm^\ell_1} \geq \sum_{\ell = 1}^{\nitem} P[\avail \mid \perm^\ell ] \cd P[\perm^\ell]\cd \val_{\perm^\ell_j} $$
Because we are in the superstar setting, we know that $\val_{\perm_i^\ell} = \val_2$ unless $i=\ell$, so we can rewrite the condition as: 
$$P[\avail \mid \perm^1]/\busypen_1  \cd P[\perm^1] \cd \val_1 +  1/\busypen_2\sum_{i=2}^{\nitem} P[\avail \mid \perm^i] \cd P[\perm^i] \cd \val_2 \geq P[\avail \mid \perm^j] \cd P[\perm^1] \cd \val_1 + \sum_{i\ne j} P[\avail \mid \perm^i] \cd P[\perm^i] \cd \val_2$$
Again using similar analysis to Lemma \ref{lem:alwaysborf}, we can reason about $P[\avail \mid \perm^i]$, for a general $i$. Specifically, we can show that: 

$$
P[\avail \mid \perm^i] = P[\avail \mid \perm^j]  \quad \avail_i = \avail_j = 1$$
$$P[\avail \mid \perm^i] = \ratio \cd P[\avail \mid \perm^j]  \quad \avail_i= 0 \ne \avail_j = 1$$
where the probability of observing $\avail$ is \emph{higher} if the high value candidate is in an index $i$ with $\avail_i=0$, because the high value candidate has $\pfree_1 < \pfree_2$ and is thus less likely to be free. Using this and the fact that $\avail_{i<j}=0$, we can rewrite the conditions as:
$$P[\avail \mid \perm^j] \p{\ratio \cd P[\perm^{1}] \cd \val_1/\busypen_1 +  \ratio \cd P[\perm^{2\leq i<j}] \cd \val_2/\busypen_2  + P[\perm^j] \cd \val_2/\busypen_2  +  \sum_{i=j+1}^{\nitem}  \ratio^{1-\avail_i} \cd P[\perm^i] \cd \val_2/\busypen_2}$$
$$ \geq P[\avail \mid \perm^j] \p{P[\perm^{j}] \cd \val_1 +  \ratio \cd P[\perm^{1\leq i<j}] \cd \val_2 + \sum_{i=j+1}^{\nitem}  \ratio^{1-\avail_i} \cd P[\perm^i] \cd \val_2}$$
Note the common $P[\avail \mid \perm^j]$ term we can drop. Next, we can strategically rearrange and collect common terms. 
\begin{equation}\label{eq:firstbusyeq}
P[\perm^1] \cd \ratio \cd (\busypen_1 \cd \val_1 - \val_2) \geq P[\perm^j] \cd (\val_1 - \val_2/\busypen_2 )  + \ratio \cd \val_2 \cd (1-1/\busypen_2) \cd P[\perm^{2\leq i < j}]+ \val_2 \cd (1-1/\busypen_2) \cd \sum_{i=j+1}^{\nitem} \ratio^{1-\avail_i} \cd P[\perm^i]
\end{equation}
Next, we will apply the condition in the statement of this theorem, which tells us when we will pick the first candidate. Note that this depends on whether $R$ is less than or greater than 1. If $R<1$, the firm is using the first-free strategy, and will pick the $j$th candidate if and only if: 
$$\frac{P[\sigma^1]}{P[\sigma^j]} \leq \frac{1}{R}  = \frac{\val_1 - \val_2 /\busypen_2}{(\val_1/\busypen_1 - \val_2) \cd \ratio}$$
which can be rewritten as: 
\begin{equation}\label{eq:firstfreestrateq}
P[\perm^1] \cd (\val_1 / \busypen_1 - \val_2) \cd \ratio \leq P[\perm^j] \cd \p{\val_1 - \val_2 / \busypen_2}
\end{equation}
Conversely, if $R>1$, the firm is using the first-busy strategy, and will pick the $j$th candidate if and only if: 
$$\frac{P[\sigma^1]}{P[\sigma^j]} \leq R  = \frac{(\val_1 /\busypen_1 - \val_2) \cd \ratio}{\val_1 - \val_2 / \busypen_2}$$
or rewritten, whenever: 
\begin{equation}\label{eq:firstbusystrateq}P[\perm^1] \cd (\val_1 - \val_2 /\busypen_2) \leq P[\perm^j] \cd (\val_1/ \busypen_1 - \val_2) \cd \ratio
\end{equation}
First, we note that if a firm is using first-free and Equation \ref{eq:firstfreestrateq} is satisfied, then Equation \ref{eq:firstbusyeq} \emph{never} satisfied, and so it is always optimal to pick the $j$th (free) candidate.

Conversely, if Equation \ref{eq:firstfreestrateq} is \emph{not} satisfied, then Equation \ref{eq:firstbusyeq} is satisfied with error up to: 
$$\frac{P[\avail \mid \perm^j]}{P[\avail]} \p{\ratio \cd \val_2 \cd (1-1/\busypen_2) \cd P[\perm^{2\leq i < j}]+ \val_2 \cd (1-1/\busypen_2) \cd \sum_{i=j+1}^{\nitem} \ratio^{1-\avail_i} \cd P[\perm^i]}$$
$$ \leq \frac{p_2}{p_1} \cd \ratio \cd \val_2 \cd (1-1/\busypen_2) \cd (1-P[\perm^1] - P[\perm^j]) $$
$$ = \frac{p_2}{p_1} \cd \frac{p_2/(1-p_2)}{p_1/(1-p_1)}\cd \val_2 \cd (1-1/\busypen_2) \cd (1-P[\perm^1] - P[\perm^j])$$
$$ = \frac{p_2}{p_1}  \cd (1-p_2)^{\nitem-2} \cd \val_2 \cd (1-1/\busypen_2) \cd (1-P[\perm^1] - P[\perm^j])$$
$$ \leq 0.25 \cd \frac{p_2}{p_1} \cd \val_2 \cd (1-1/\busypen_2) \cd (1-P[\perm^1] - P[\perm^j]) $$
where in the last step we have used that $x \cd (1-x) \leq 0.25$ and $1-p_1 < 1$, and in the first step we have used that: 
$$\frac{P[\avail \mid \perm^j]}{P[\avail]} = \frac{P[\avail \mid \perm^j]}{P[\avail \mid \perm^j]}{\sum_{i=1}^NP[\avail \mid \perm^i] \cd P[\perm^i]} \leq \max_{i \in [N]} \frac{P[\avail \mid \perm^j]}{P[\avail \mid \perm^i]} = \frac{p_2}{p_1}$$
where the last step comes from the fact that $\perm_i, \perm_j$ differ only in a single index.\\ 
Next, we note that if a firm is using first-busy and the top candidate is busy, the firm will always pick it. Specifically, this means that Equation \ref{eq:firstbusystrateq} is \emph{not} satisfied, which means: 
$$P[\perm^1] \cd (\val_1 - \val_2 / \busypen_2) \geq P[\perm^j] \cd (\val_1/ \busypen_1 - \val_2) \cd \ratio \quad \Rightarrow \quad P[\perm^1] \cd \ratio \cd (\val_1/\busypen_1 - \val_2) \geq P[\perm^j] \cd (\val_1 - 1/\busypen_2 \cd \val_2)$$
because in the setting where a firm is using first-busy, we know that $R>1$ or $\ratio \cd (\val_1 /\busypen_1 - \val_2) \geq \val_1 - \val_2/ \busypen_2$. Thus, Equation \ref{eq:firstbusyeq} is satisfied up to error $\epsilon^*$, by similar reasoning as above. 
\\

Next, we move to the setting where $\avail_{i<j} =1, \avail_j = 0$ (the \textbf{highest-ranking candidate is free}). We pick the first candidate whenever: 
$$\mathbb{E}_{\perm \sim \permdist}[\val_{\perm_1} \mid \avail] \geq  \mathbb{E}_{\perm \sim \permdist}[\val_{\perm_j} \mid \avail]/ \busypen$$
By our prior analysis, we know this can be rewritten as:
$$P[\avail \mid \perm^1] \cd P[\perm^1] \cd \val_1 + \sum_{i=2}^{\nitem} P[\avail \mid \perm^i] \cd P[\perm^i] \cd \val_2 \geq  P[\avail \mid \perm^j] \cd P[\perm^j] \cd \val_1/\busypen_1  + \sum_{i\ne j} P[\avail \mid \perm^i] \cd P[\perm^i] \cd \val_2/\busypen_2 $$
Next, we again reason about $P[\avail \mid \perm^i]$ for an arbitrary $i$. Note that the $\avail$ status vector is different than in the previous part of this proof, so we will have a different correspondence: 
$$
P[\avail \mid \perm^i] = P[\avail \mid \perm^1]  \quad \avail_i = \avail_1 = 1$$
$$P[\avail \mid \perm^i] = \ratio \cd P[\avail \mid \perm^1]  \quad \avail_i= 0 \ne \avail_1 = 1$$
This enables us to rewrite the condition as: 
$$P[\avail \mid \perm^1]  \cd \p{\val_1 \cd P[\perm^1] + P[\perm^{2\leq i <j}] \cd \val_2 + \ratio \cd P[\perm^j] \cd \val_2 + \sum_{i=j+1}^{\nitem}  \ratio^{1-\avail_i} \cd P[\perm^i] \cd \val_2} $$
$$\geq P[\avail \mid \perm^1] \cd \p{\busypen_1 \cd \ratio \cd P[\perm^j] \cd \val_1 + \val_2/ \busypen_2 \cd P[\perm^{1 \leq i <j}] + \sum_{i=j+1}^{\nitem}  \ratio^{1-\avail_i} \cd P[\perm^i] \cd \val_2/\busypen_2} $$
Dropping a common multiplicative term and strategically grouping related terms gives: 
\begin{equation}\label{eq:firstfreeeq}
P[\perm^1] \cd (\val_1 - 1/\busypen_2 \cd \val_2) + \val_2 \cd (1-1/\busypen_2) \cd P[\perm^{2\leq i<j}] + (1-1/\busypen_2) \cd \val_2 \cd \sum_{i=j+1}^{\nitem} \ratio^{1-\avail_i} \cd P[\perm^i] \geq P[\perm^j] \cd \ratio \cd (\val_1/\busypen_1  - \val_2)
\end{equation}
We will again apply the conditions within Equations \ref{eq:firstfreestrateq}, \ref{eq:firstbusystrateq} to give us guarantees for which candidates we should pick. First, we note that if a firm is using first-free and the top-ranked candidate is free, the firm will always pick them. In particular, this means that Equation \ref{eq:firstfreestrateq} is satisfied, then 
$$P[\perm^1] \cd (\val_1/ \busypen_1 - \val_2) \cd \ratio \leq P[\perm^j] \cd (\val_1 - \val_2/ \busypen_2) \quad \Rightarrow \quad P[\perm^j] \cd (\val_1/ \busypen_1 - \val_2) \cd \ratio \leq P[\perm^1] \cd (\val_1 - \val_2/ \busypen_2)$$
because $P[\perm^1] \geq P[\perm^j]$, which implies that Equation \ref{eq:firstfreeeq} is satisfied and it is always optimal to pick the first (free) candidate. \\
Next, we will turn to the case where a firm is using first-busy. If Equation \ref{eq:firstbusystrateq} is satisfied, then Equation \ref{eq:firstfreeeq} is satisfied up to error:
$$\leq \frac{P[\avail \mid \perm^1]}{P[\avail]} \cd \p{\val_2 \cd (1-1/\busypen_2) \cd P[\perm^{2\leq i<j}] + (1-1/\busypen_2) \cd \val_2 \cd \sum_{i=j+1}^{\nitem} \ratio^{1-\avail_i} \cd P[\perm^i]}$$
$$\leq \frac{p_2}{p_1} \cd \p{\val_2 \cd (1-1/\busypen_2) \cd P[\perm^{2\leq i<j}] + (1-1/\busypen_2) \cd \val_2 \cd \ratio\cd P[\perm^{i>j}]} $$
$$< \frac{p_2}{p_1} \cd \val_2 \cd (1-1/\busypen_2) \cd (1-P[\perm^1] - P[\perm^j]) = \epsilon^*$$
as desired. Finally, we consider the case where the firm is using first-busy, but Equation \ref{eq:firstbusystrateq} is \emph{not} satisfied, or: 
$$P[\perm^1] \cd (\val_1 - \val_2/ \busypen_2) \geq P[\perm^j] \cd (\val_1 /\busypen_1 - \val_2) \cd \ratio$$
which we note implies that Equation \ref{eq:firstfreeeq} is always satisfied, and thus has 0 error. 
\end{proof}

\section{Proofs for Section \ref{sec:welfare}}\label{app:welfare}

Note: all proofs in this section only rely on accuracy properties as stated in Definition \ref{def:accuracy}: 
\accuracy*

\welfarefirm*
\begin{proof}
First, Lemma \ref{lem:convertacc} shows that it is always possible to take some status vector $\avail$ generated by a more accurate permutation distribution $\{P[\rho^i]\}$ and convert it into a status vector $\avail'$ generated by a less accurate permutation distribution $\{P[\perm^i]\}$ by iteratively swapping pairs of candidates. We will also use Lemma \ref{lem:alwaysborf}, which showed that the top-ranked busy candidate always has higher expected value than any other busy candidate (and similarly the top-ranked free candidate always has higher expected value than any other free candidate). 

Consider the candidate picked from perturbed status vector $\avail'$ using the optimal strategy for the less-accurate distribution $\{P[\perm^i]\}$. If it is the same as the candidate picked from the original status vector $\avail$ with the more accurate ranking tool $\{P[\rho^i]\}$, then the utility is exactly the same as with the less accurate ranking. 

However, if a different candidate is picked under $\avail'$, then we know that this candidate corresponds to an element in $\avail$ lower expected value than the candidate picked under $\avail$ with the more accurate distribution $\{P[\rho^i]\}$: as a result, the firm responding optimally with less accurate distribution $\{P[\rho^i]\}$ only has lower utility, as desired. 
\end{proof}

\begin{lemma}\label{lem:convertacc}
Given any ranking distributions $\{P[\rho^i]\}$, $\{P[\perm^i]\}$ such that $\{P[\rho^i]\}$ is more accurate according to Definition \ref{def:accuracy} and a status vector $\avail$ generated by $\{P[\rho^i]\}$, it is always possible to permute $\avail$ such that it is equivalent to a sample from the less-accurate distribution $\{P[\perm^i]\}$. 
\end{lemma}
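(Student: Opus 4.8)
The plan is to show that, in the superstar setting, the law of the status vector $\avail$ depends on the ranking distribution only through the marginal position of the high-value candidate, and that passing from the more accurate distribution $\{P[\rho^i]\}$ to the less accurate one $\{P[\perm^i]\}$ transforms this marginal by a doubly stochastic map --- which, by the Birkhoff--von Neumann theorem, can be implemented by drawing a random permutation of the coordinates of $\avail$ \emph{independently of} $\avail$ itself.

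First I would record the generative structure. A permutation in the superstar setting is determined by the index $k$ of the high-value candidate, and the free/busy draws are made independently of the ranking; hence a status vector drawn under a ranking distribution $\{P[\sigma^i]\}$ is equivalent to the following two-step sample: draw $K\sim\{P[\sigma^i]\}$, then set the $K$-th coordinate $\sim Bern(p_1)$ and let the remaining $\nitem-1$ coordinates be i.i.d.\ $Bern(p_2)$ --- the order of the low-value candidates among those positions being irrelevant precisely because their statuses are i.i.d. Consequently the distribution of $\avail$ is a function of the distribution of $K$ alone, and --- this is the observation that makes the whole argument work --- applying \emph{any} fixed permutation $\pi$ to the coordinates of such a vector, chosen independently of $\avail$, again produces a vector of exactly this form, now with special position $\pi(K)$ and all other coordinates still i.i.d.\ $Bern(p_2)$.

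Next I would invoke Definition~\ref{def:accuracy}: since $\{P[\rho^i]\}$ is more accurate, $\sum_{i\le j}P[\rho^i]\ge\sum_{i\le j}P[\perm^i]$ for all $j$, with equality at $j=\nitem$; equivalently, $(P[\rho^1],\dots,P[\rho^{\nitem}])$ majorizes $(P[\perm^1],\dots,P[\perm^{\nitem}])$. By the Hardy--Littlewood--P\'olya theorem on majorization there is a doubly stochastic matrix $M$ with $P[\perm^j]=\sum_k P[\rho^k]\,M_{kj}$ for every $j$, and by Birkhoff--von Neumann $M=\sum_\pi \lambda_\pi Q_\pi$ is a convex combination of permutation matrices $Q_\pi$. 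The conversion is then: draw $\avail$ under $\{P[\rho^i]\}$, independently draw $\pi$ with probability $\lambda_\pi$, and output $\avail':=\pi(\avail)$. By the second paragraph, $\avail'$ has special position distributed as $\sum_k P[\rho^k]M_{k\cdot}=\{P[\perm^j]\}$, with $Bern(p_1)$ there and i.i.d.\ $Bern(p_2)$ elsewhere; that is, $\avail'$ is distributed exactly as a status vector generated by $\{P[\perm^i]\}$. Writing each $\pi$ as a product of transpositions gives the \enquote{iteratively swapping pairs of candidates} description, and since $\pi$ is an honest relabeling, one can also track where each individual candidate moved (as needed downstream).

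The main obstacle is framing rather than calculation: one must notice (i) that the status-vector law is a doubly stochastic image of the position marginal, so the \enquote{permute $\avail$} step can be applied blindly without knowing where the high-value candidate sits, and (ii) that the majorization clause of Definition~\ref{def:accuracy} is precisely the hypothesis of Hardy--Littlewood--P\'olya. A more elementary alternative would be to realize the majorization as a finite chain of \enquote{Robin Hood} mass transfers from a smaller index to a larger one, implementing each by a single randomized transposition of $\avail$; this works too but requires conditioning on the special position, so I would prefer the Birkhoff route.
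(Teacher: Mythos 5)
Your proposal is correct, and it reaches the same destination as the paper by a recognizably different route. The paper's proof is constructive: it uses the majorization consequence of Definition~\ref{def:accuracy} and then performs an explicit chain of randomized pairwise swaps of coordinates of $\avail$ (flip indices $\ell$ and $k+1$ with a fixed probability $\alpha_{k+1}$ chosen so that the resulting mixture matches $P[\perm^{k+1}]$, then iterate), i.e.\ it implements a sequence of T-transforms directly. You instead observe (correctly, and more explicitly than the paper does) that the law of $\avail$ in the superstar setting depends on the ranking distribution only through the position marginal of the high-value candidate and transforms covariantly under coordinate permutations drawn independently of $\avail$, and then you convert the prefix-sum dominance into majorization, apply Hardy--Littlewood--P\'olya to get a doubly stochastic map, and Birkhoff--von Neumann to realize it as a single independent random permutation. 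What your route buys is a one-shot, non-iterative argument and a cleanly isolated statement of the exchangeability fact that the paper leaves implicit; what the paper's route buys is elementarity (no appeal to HLP/Birkhoff) and explicit swap probabilities. Two small points: (i) your invocation of HLP tacitly needs both probability vectors to be sorted in decreasing order of index so that the prefix-sum condition of Definition~\ref{def:accuracy} is genuine majorization --- this holds, but only because of the standing model assumption $P[\perm^i]\geq P[\perm^{i+1}]$, which you should cite; (ii) your closing remark that the ``Robin Hood''/transposition alternative requires conditioning on the special position is mistaken --- a transposition of two coordinates of $\avail$ applied blindly with probability $\alpha$ already induces exactly the desired convex mixing of the two position probabilities (this is precisely what the paper does), so that route is just as unconditional as yours.
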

\begin{proof}
First, by Definition \ref{def:accuracy}, we know that increasing accuracy of a permutation causes majorization, or $\sum_{i=1}^jP[\rho^i] \geq \sum_{i=1}^jP[\perm^i]$ for all $j$. We can view this as saying that the cumulative density function for the more accuracy permutation is always higher: we will use this to guide how we transform the more accurate distribution into the less accurate one. 

First, we know that $\sum_{i=1}^{\nitem}P[\rho^i] = \sum_{i=1}^{\nitem}P[\perm^i] =1$ (the total probability sums up to 1). Consider the smallest index $k$ such that $\sum_{i=1}^{k}P[\rho^i] > \sum_{i=1}^{k}P[\perm^i]$: this is the smallest index such that the more accurate distribution has a greater CDF than the less accurate distribution. This implies that the probability of the high value candidate being in index $k+1$ is larger in the less accurate distribution: $P[\rho^{k+1}] < P[\perm^{k+1}]$. 

Similarly, define $\ell$ to be the smallest index such that the CDF is strictly higher with the more accurate distribution: $P[\rho^{\ell}] > P[\perm^{\ell}]$. 

Then, we can flip indices $\ell$ and $k+1$ with a fixed probability so as to create a new distribution with $P[\rho^{k+1}]' = P[\perm^{k+1}]$ and $P[\rho^{\ell}]' \geq P[\perm^{\ell}]$. Specifically, we define $\alpha_{k+1}$ such that:
$$P[\perm^{k+1}] = P[\rho^{k+1}] \cd (1-\alpha_{k+1}) + P[\rho^{\ell}] \cd \alpha_{k+1}$$
Note that flipping indices $\ell, k+1$ also induces a change in $P[\rho^{\ell}]$: if doing so would cause $P[\rho^{\ell}] < P[\perm^{\ell}]$, we instead set $\alpha_{k+1}$ to be the value such that $P[\rho^{\ell}]'= P[\perm^{\ell}]$: 
$$P[\perm^{\ell}] = P[\rho^{\ell}] \cd (1-\alpha_{k+1}) + P[\rho^{k+1}] \cd \alpha_{k+1}$$ 
If necessary, we then repeat the process with the new minimum index $\ell'$ such that $P[\rho^{\ell'}] > P[\perm^{\ell'}]$ until we arrive at $P[\rho^{k+1}]' = P[\perm^{k+1}]$. 

After this is completed, we move on to the next $j<k$ such that $\sum_{i=1}^{j}P[\rho^i] > \sum_{i=1}^{j}P[\perm^i]$, swapping permutations until we have $P[\rho^i]' = P[\perm^i]'$.
\end{proof}

\increaseddupnew*
\begin{proof}
First, we will consider the case where a firm is using the \enquote{first-free} strategy, for some window $k\geq 1$. \\
If this is the case, that the probability that a firm picks a busy candidate is given by the probability that every candidate in the top $k$ is busy. This occurs with probability: 
$$P[\text{top $k$  busy} \mid \text{High value in top $k$}] \cd P[ \text{High value in top $k$}] $$
$$+ P[\text{top $k$  are busy} \mid \text{High value NOT in top $k$}] \cd P[ \text{High value NOT in top $k$}] $$
$$ = P[\perm^{1\leq i\leq k}] \cd (1-p_1) \cd (1-p_2)^{k-1} + (1-P[\perm^{1\leq i\leq k}]) \cd (1-p_2)^{k} $$
$$ = (1-p_2)^{k-1} \cd \p{P[\perm^{1\leq i\leq k}] \cd (p_2-p_1) + (1-p_2)}$$
We know that $p_2>p_1$ and increased accuracy increases $P[\perm^{1\leq i\leq k}]$ (by Definition \ref{def:accuracy}). Thus, if the ranking becomes more accurate and the firm's strategy ($k$) is held fixed, the probability of the firm picking a busy candidate increases. \\

What happens if accuracy increases and the firm's strategy changes to some top $j<k$? We will establish that the probability of picking a busy candidate is always higher when the firm is using first-free with window size $j$ than window size $k$ for $j<k$ (holding accuracy constant).
What we want to show is: 
$$(1-p_2)^{k-1} \cd \p{P[\perm^{1\leq i\leq k}] \cd (p_2-p_1) + (1-p_2)} < (1-p_2)^{j-1} \cd \p{P[\perm^{1\leq i\leq j}] \cd (p_2-p_1) + (1-p_2)}$$
It suffices to show this for $j=k-1$ (because by induction we can show this for all $j<k$). 
$$(1-p_2)^{k-1} \cd \p{P[\perm^{1\leq i\leq k}] \cd (p_2-p_1) + (1-p_2)} < (1-p_2)^{k-2} \cd \p{P[\perm^{1\leq i\leq k-1}] \cd (p_2-p_1) + (1-p_2)}$$
$$(1-p_2) \cd \p{P[\perm^{1\leq i\leq k}] \cd (p_2-p_1) + (1-p_2)} <  P[\perm^{1\leq i\leq k-1}] \cd (p_2-p_1) + (1-p_2)$$
Pulling over terms gives: 
$$(p_2 - p_1) \cd \p{(1-p_2) \cd P[\perm^{1\leq i\leq k}]  - P[\perm^{1\leq i\leq k-1}]} < (1-p_2) \cd (1-1+p_2)$$
$$(p_2 - p_1) \cd \p{(1-p_2) \cd P[\perm^{1\leq i\leq k}]  - P[\perm^{1\leq i\leq k-1}]} < (1-p_2) \cd p_2$$
Note that if the term inside the parentheses is negative, then this is automatically satisfied because the righthand side is positive. If not, then we can upper bound the lefthand side by using $p_2 - p_1 < p_2$, which cancels on both sides. 
$$(1-p_2) \cd P[\perm^{1\leq i\leq k}]  - P[\perm^{1\leq i\leq k-1}] < 1-p_2$$
Note that $P[\perm^{1\leq i\leq k}] \leq 1$ and $P[\perm^{1\leq i\leq k-1}]\geq 0$, which suffices to prove the result. 

Next, we will consider the case where a firm is using the \enquote{k-busy} strategy, for some $k\geq 1$. \\
We will find it helpful to instead analyze the probability that the firm picks a free candidate (which is exactly the complement of when it picks a busy candidate). 
A firm using a $k$-busy strategy picks a free candidate whenever all candidates in the top $k$ are free: 
$$P[\perm^{1\leq i\leq k}] \cd p_1 \cd p_2^{k-1} + (1-P[\perm^{1\leq i\leq k})] \cd p_2^k$$
$$ =  p_2^{k-1} \cd \p{P[\perm^{1\leq i\leq k}] \cd (p_1 - p_2)  + p_2}$$
Note that if the accuracy increases while $k$ stays fixed, this probability decreases because $p_1 < p_2$: so if the accuracy increases, the probability of the firm hiring a busy candidate increases. \\
What happens if the accuracy increases and the firm's strategy changes? Similarly to the previous analysis, we will show that the probability of picking a free candidate is higher for top $k<k+1$, so as accuracy increases and the firm decreases $k$, it will instantaneously increase the probability of picking a free candidate (decreasing the probability of picking a busy candidate). \\
Specifically, this occurs when: 

$$ p_2^{k-1} \cd \p{P[\perm^{1\leq i\leq k}] \cd (p_1 - p_2)  + p_2} > p_2^{k} \cd \p{P[\perm^{1\leq i\leq k+1}] \cd (p_1 - p_2)  + p_2}$$

$$P[\perm^{1\leq i\leq k}] \cd (p_1 - p_2)  + p_2 > p_2 \cd \p{P[\perm^{1\leq i\leq k+1}] \cd (p_1 - p_2)  + p_2}$$
$$p_2 \cd (1-p_2) > (p_1 - p_2) \cd (p_2 \cd P[\perm^{1\leq i\leq k+1}] - P[\perm^{1\leq i\leq k}] )$$
Note that $p_1-p_2<0$, so if the term inside the parentheses is positive, then the entire righthand side is negative, and this quantity automatically holds. If the term inside the parentheses is negative, we can rearrange the righthand side: 
$$p_2 \cd (1-p_2) > (p_2 - p_1) \cd (P[\perm^{1\leq i\leq k}] - p_2 \cd P[\perm^{1\leq i\leq k+1}] )$$
We can note that in this setting, we have:
$$P[\perm^{1\leq i\leq k}] > p_2 \cd P[\perm^{1\leq i\leq k+1}]$$
even though $P[\perm^{1\leq i\leq k}] < P[\perm^{1\leq i\leq k+1}]$. We can use the fact that: 
$$P[\perm^{1\leq i\leq k}] - p_2 \cd P[\perm^{1\leq i\leq k+1}] \leq P[\perm^{1\leq i\leq k+1}] - p_2 \cd P[\perm^{1\leq i\leq k+1}] = P[\perm^{1\leq i\leq k+1}]  \cd (1-p_2)$$
Applying this to the righthand side of the original equation, we have: 
$$(p_2 - p_1) \cd (P[\perm^{1\leq i\leq k}] - p_2 \cd P[\perm^{1\leq i\leq k+1}] ) \leq (p_2-p_1) \cd P[\perm^{1\leq i\leq k+1}]  \cd (1-p_2)\leq p_2 \cd (1-p_2)$$
as desired. 
\end{proof}

\increasedacctopclicknew*
\begin{proof}
First, we note that if a firm is using \enquote{follow the ranking}, then it always picks the top candidate, and as accuracy increases it continues to always pick the top candidate. 

Next, we consider the case that a firm is using \textbf{first-free} with window size $k$. A firm picks the top candidate either when a) the first candidate is free, or b) all top $k$ candidates are busy. What is probability that this happens? 
$$P[\text{top candidate free}] + P[\text{all top $k$ busy}]$$
$$ = p_1 \cd P[\perm^1] + p_2 \cd (1-P[\perm^1]) + (1-p_1) \cd (1-p_2)^{k-1} \cd P[\perm^{1<i\leq k}] + (1-p_2)^{k} \cd (1-P[\perm^{1 < i \leq k}])$$
Collecting like terms gives us: 
$$p_2 + (1-p_2)^k - P[\perm^1] \cd (p_2 - p_1) + (1-p_2)^{k-1} \cd P[\perm^{1\leq i \leq k}] \cd (1-p_1  - 1 + p_2)$$
$$=p_2 + (1-p_2)^k - P[\perm^1] \cd (p_2 - p_1) + (1-p_2)^{k-1} \cd P[\perm^{1\leq i \leq k}] \cd (p_2-p_1)$$
$$ = p_2 + (1-p_2)^k + (p_2 - p_1) \cd ((1-p_2)^{k-1} \cd P[\perm^{1\leq i \leq k}] - P[\perm^1])$$
Note that this is decreasing if and only if a more accurate ranking results in \emph{decreasing} the following quantity: 
$$(1-p_2)^{k-1} \cd P[\perm^{1\leq i \leq k}] - P[\perm^1] = (1-p_2)^{k-1} \cd P[\perm^{2\leq i\leq k}] - (1-(1-p_2)^{k-1}) \cd P[\perm^1]$$ 
Recall that from Definition \ref{def:accuracy} increased accuracy increases $\sum_{i=1}^jP[\perm^i]$ for all $j$: thus, the above quantity could \emph{increase} or \emph{decrease} the above quantity. See Figure \ref{fig:increasedecreasetop} for an example where this could lead to increased chances of picking the top candidate. 

Next, we will show that if the firm looks at the top $j<k$ then it always has a higher chance of picking the top candidate: this should be intuitive because there is a higher chance that all of the top $j$ candidates are busy than all of the top $k$ candidates are busy. This occurs when: 
$$p_2 + (1-p_2)^k + (p_2 - p_1) \cd ((1-p_2)^{k-1} \cd P[\perm^{1\leq i \leq k}] - P[\perm^1])$$
$$< p_2 + (1-p_2)^{k-1} + (p_2 - p_1) \cd ((1-p_2)^{k-2} \cd P[\perm^{1\leq i \leq k-1}] - P[\perm^1])$$
$$(1-p_2)^{k-1} \cd (1-(1-p_2)) > (p_2 - p_1) \cd ((1-p_2)^{k-1} \cd P[\perm^{1\leq i\leq k}] - (1-p_2)^{k-2} \cd P[\perm^{1\leq i \leq k-1}])$$
$$(1-p_2) \cd p_2 > (p_2 - p_1) \cd (1-p_2) \cd P[\perm^{1\leq i\leq k}] - P[\perm^{1\leq i \leq k-1}])$$
This is satisfied because $p_2> p_2-1$ and 
$$(1-p_2) \geq (1-p_2) \cd P[\perm^{1\leq i\leq k}] \geq (1-p_2) \cd P[\perm^{1\leq i\leq k}] - P[\perm^{1\leq i \leq k-1}])$$

Next, we consider the case where a firm is using \textbf{first-busy} with parameter $j$. The firm picks the top candidate either when a) the top candidate is busy or b) all top $j$ candidates are free. What is the probability that this happens? 
$$P[\text{top is busy}] + P[\text{all top $j$ are free}]$$
$$(1-p_1) \cd P[\perm^1] + (1-p_2) \cd (1-P[\perm^1]) + p_1 \cd p_2^{k-1} \cd P[\perm^{1<i\leq k}] + p_2^{k} \cd (1-P[\perm^{1 \leq i \leq k}])$$
Collecting like terms: 
$$1-p_2 + p_2^k + P[\perm^1] \cd (1-p_1-1 + p_2)   + p_2^{k-1}P[\perm^{1 \leq i\leq k}] \cd (p_1 - p_2)$$
$$ = 1-p_2 + p_2^k + (p_2-p_1) \cd (P[\perm^1] -p_2^{k-1}\cd P[\perm^{1\leq i\leq k}]) $$
Note that this is decreasing (holding $k$ constant) if and only if increasing accuracy causes the below quantity to decrease: 
$$P[\perm^1]\cd (1 -p_2^{k-1}) - p_2^{k-1} \cd P[\perm^{2\leq i\leq k}]$$
See Figure \ref{fig:increasedecreasetop} for an example where this could lead to increased chances of picking the top candidate.

Finally, we will show that this quantity increases for $j>k$, or: 
$$1-p_2 + p_2^k + (p_2-p_1) \cd (P[\perm^1] -p_2^{k-1}\cd P[\perm^{1\leq i\leq k}])<1-p_2 + p_2^{k-1} + (p_2-p_1) \cd (P[\perm^1] -p_2^{k-2}\cd P[\perm^{1\leq i\leq k-2}])$$
$$p_2^k + (p_2-p_1) \cd (-p_2^{k-1}\cd P[\perm^{1\leq i\leq k}])<p_2^{k-1} + (p_2-p_1) \cd (-p_2^{k-2}\cd P[\perm^{1\leq i\leq k-2}])$$
$$(p_2-p_1) \cd (p_2^{k-2}\cd P[\perm^{1\leq i\leq k-2}] -p_2^{k-1}\cd P[\perm^{1\leq i\leq k}])<p_2^{k-1}(1-p_2)$$
$$(p_2-p_1) \cd (P[\perm^{1\leq i\leq k-2}] -p_2\cd P[\perm^{1\leq i\leq k}])<p_2\cd (1-p_2)$$
This is satisfied because $p_2 > p_2 - p_1$ and 
$$P[\perm^{1\leq i\leq k-2}] -p_2\cd P[\perm^{1\leq i\leq k}] \leq (P[\perm^{1\leq i\leq k-1}] -p_2\cd P[\perm^{1\leq i\leq k}])  = (P[\perm^{1\leq i\leq k-1}](1 -p_2) \leq 1-p_2$$
\end{proof}

\section{Proofs for Section \ref{sec:beyondsuperstar}}\label{app:beyondsuperstar}

\subsection{Inversion-monotone properties (optimality of picking first free or first busy)}
Recall that in Section \ref{sec:optstrat} we proved Lemma \ref{lem:alwaysborf}: 

\alwaysborf*

We also noted that the precondition \enquote{In the superstar setting} was necessary. Here, we will show why this is: specifically, Lemma \ref{lem:counterexample} shows that in the non-superstar setting, it may be optimal to pick a strategy other than \enquote{first-busy, first-free}: for example, to pick the second-ranked free item rather than the top-ranked free item. 

Our main theorem is given by: 
\beyondsuperstartopbest*
We will prove this theorem through a series of smaller lemmas. 
\counterexample*
\begin{proof}
We will set the 3 candidates to have values $[\val_1 = 1, \val_2, \val_3= 0]$. We will create a permutation distribution with nonzero support on exactly two permutations: 
$$\perm^1 = [\val_1, \val_2, \val_3] \quad \perm^2 = [\val_3, \val_2, \val_1]$$
where $P[\perm^1] = 1-\epsilon, P[\perm^2] = \epsilon$. Before the candidate status vector is taken into account, the expected utility of picking candidate $i \in [1, 2, 3]$ is given by:
$$\mathbb{E}_{\perm \sim \permdist}[\val_{\perm_1}] = \val_1 \cd (1-\epsilon) \quad \mathbb{E}_{\perm \sim \permdist}[\val_{\perm_2}] = \val_2 \quad \mathbb{E}_{\perm \sim \permdist}[\val_{\perm_3}] = \epsilon \cd \val_1$$
If we want the distribution to have descending expected value, we will set: 
$$\val_1 \cd (1-\epsilon) > \val_2 > \epsilon \cd v_1$$
We will next consider the probabilities of being free. For simplicity, we will set $\pfree_1 < \pfree_2 = \pfree_3$, so candidates 2 and 3 have the same probability of being free, which is greater than for candidate 1. 

Next, we will consider the case where we have realized status vector $[1, 1, 0]$. This increases our posterior belief that we are in $\perm^2$, which decreases the expected utility of picking the first candidate. 

We will formalize this next. The posterior value for the first candidate is given by: 
$$P[\val_{\perm_1}= \val_1 \mid \avail] \cd \val_1 = \frac{P[\avail \mid \perm^1] \cd P[\perm^1]}{P[\avail]} = \val_1 \cd \frac{\pfree_1 \cd \pfree_2 \cd (1-\pfree_2) \cd (1-\epsilon)}{P[\avail]}$$
The posterior value for the second candidate is always exactly $\val_2$, because both possible permutations involve having $\val_2$ ranked second. However, we will find it useful to strategically rewrite this probability : 
$$P[\val_{\perm_2} = \val_2 \vert \avail] \cd \val_2 = v_2  = \val_2 \cd \frac{P[\avail]}{P[\avail} = \val_2\cd \frac{P[\avail \mid \perm^1] \cd P[\perm^1] + P[\avail \mid \perm^2] \cd P[\perm^2]}{P[\avail]} $$
$$= \val_2 \cd \frac{\pfree_1 \cd \pfree_2 \cd (1-\pfree_2)\cd (1-\epsilon) + \pfree_2^2 \cd (1-\pfree_1) \cd \epsilon}{P[\avail]}$$
The posterior value for the third candidate is given by: 
$$P[\val_{\perm_3} = \val_1 \mid \avail] \cd \val_1 /\gamma =\val_1 \cd \frac{P[\avail \mid \perm^2] \cd P[\perm^2]}{P[\avail]} \gamma =  \val_1 \cd \frac{ \pfree_2^2 \cd (1-\pfree_1) \cd \epsilon}{P[\avail] \cd \gamma }$$
Next, we will show that there exists parameters such that the optimal strategy (candidate with highest posterior utility) is the second (free) candidate, rather than the first-free or first-busy strategy. Specifically, we will show that this condition holds for $\epsilon = 0.1, \val_1 = 1, \val_2 = 2/3, \val_3 = 0, \pfree_1 = 0.1, \pfree_2 = 0.4, \busypen \geq 1$. Note that these parameters immediately satisfies $v_2 < (1-\epsilon) \cd v_1$, which implies that the prior ranking is descending in expected utility (as desired).  

Note that each of these expected utility terms have a common denominator of $P[\avail]$, which we can drop. We know that picking the second (free) candidate has higher expected utility than picking the first (free) candidate whenever: 
$$\val_1 \cd \pfree_1 \cd \pfree_2 \cd (1-\pfree_2) \cd (1-\epsilon) < \val_2 \cd (\pfree_1 \cd \pfree_2 \cd (1-\pfree_2) \cd (1-\epsilon) + \pfree_2^2 \cd (1-\pfree_1) \cd \epsilon)$$
Dropping a common term of $\pfree_2$ and distributing $\val_2$: 
$$\val_1 \cd \pfree_1 \cd (1-\pfree_2) \cd (1-\epsilon) < \val_2 \cd \pfree_1 \cd (1-\pfree_2) \cd (1-\epsilon) + \val_2 \cd  \pfree_2 \cd (1-\val_1) \cd \epsilon$$
Substituting in for the given values sets: 
$$1 \cd 0.1 \cd 0.6 \cd 0.9 < \frac{2}{3} \cd 0.1 \cd 0.6 \cd 0.9 + \frac{2}{3} \cd 0.4 \cd 0.9 \cd 0.1$$
$$ 0.054 < 0.06$$
as desired. 

We will additionally require that picking the third candidate (given $\avail$ realized) has lower expected utility than picking the second candidate. This means we require: 
$$\val_2 \cd (\pfree_1 \cd \pfree_2 \cd (1-\pfree_2) \cd (1-\epsilon) + \pfree_2^2 \cd (1-\pfree_1) \cd \epsilon) > 1/\gamma \cd \val_1 \cd \pfree_2^2 \cd (1-\pfree_1) \cd \epsilon$$
$$\val_2 \cd (\pfree_1 \cd (1-\pfree_2) \cd (1-\epsilon) + \pfree_2 \cd (1-\pfree_1) \cd \epsilon) > 1/\gamma \cd \val_1 \cd \pfree_2 \cd (1-\pfree_1) \cd \epsilon$$
Substituting in for given values: 
$$v_2 \cd (p_1 \cd (1-p_2) \cd (1-\epsilon) + p_2 \cd (1-p_1) \cd \epsilon) > 1/\gamma \cd v_1 \cd p_2 \cd (1-p_1) \cd \epsilon$$
$$\frac{2}{3} \cd 0.1 \cd 0.6 \cd 0.9 + \frac{2}{3} \cd 0.4 \cd 0.9 \cd 0.1 >  1 \cd 0.4 \cd 0.9 \cd 0.1/\busypen$$
$$ 0.06 > 0.036/\busypen$$
which holds for any $\busypen \geq 1$. 
\end{proof}

Note that the ranking in Lemma \ref{lem:counterexample} did satisfy descending expected value. However, rankings were somewhat unusual - in particular, there were only two orderings with nonzero probability, even though, given 3 items, there are 6 possible permutations of each of them. A more natural ranking would probably have some nonzero weight over each of these, ideally with greater weight on rankings where more of the items are ordered \enquote{correctly} (that is, with $\val_{\perm_i} > \val_{\perm_j}$ for $i<j$). 

Definition \ref{def:monotone} exactly describes this intuition. Specifically, it defines an \enquote{inversion-monotone} probability distribution as one where, for any permutation $\perm$ with at least two items inverted (that is, $\val_{\perm_i} < \val_{\perm_j}$ for $i<j$), there exists another permutation $\tilde \perm$ that is identical to $\perm$ but with items in indices $i, j$ flipped, and with probability at least as high as $\perm$ ($P[\tilde \perm] \geq P[\perm]$). Lemma \ref{lem:monotoneenough} proves that, if a probability distribution is inversion-monotone, we regain the same property we had in the superstar setting: the optimal strategy for the firm will always be to pick the first free item or first busy item.

\monotonedef*

\monotoneenough*

While Lemma \ref{lem:monotoneenough} shows that inversion-monotonicity will guarantee a more straightforward strategy space, it is natural to wonder how reasonable such a requirement is. In fact, it turns out that multiple commonly-used models of permutations already satisfy this property. 

First, Lemma \ref{lem:mallowmonotone} shows that the Mallows model is inversion-monotone. While helpful, this is not extremely surprising: the Mallows model \cite{mallows1957non} is constructed such that the probability of a permutation $\perm$ occurring is directly tied to the number of pairs of items that are inverted, so this property follows naturally.  

\begin{restatable}{lemma}{mallowmonotone}
\label{lem:mallowmonotone}
The Mallows model is inversion-monotone. 
\end{restatable}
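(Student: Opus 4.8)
The plan is to use the standard parametrization of the Mallows model: fix a reference permutation $\sigma_0$ obtained by listing the items in weakly decreasing order of value (breaking ties by item index), and recall that $P[\sigma] = \phi^{d(\sigma,\sigma_0)}/Z$, where $d(\cdot,\cdot)$ is the Kendall tau distance (the number of pairs of items appearing in the opposite relative order from $\sigma_0$), $\phi \in (0,1]$ is the dispersion parameter, and $Z$ is the normalizing constant. It suffices to show that for the pair $\sigma,\tilde\sigma$ of Definition~\ref{def:monotone} we have $d(\tilde\sigma,\sigma_0) \ge d(\sigma,\sigma_0)$, since then $\phi^{d(\tilde\sigma,\sigma_0)} \le \phi^{d(\sigma,\sigma_0)}$ and dividing by $Z$ gives $P[\tilde\sigma] \le P[\sigma]$. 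When several items share a value, $P[\sigma]$ is really a sum over item-level permutations $\pi$ realizing the value pattern $\sigma$; I would handle this by pairing each such $\pi$ with the item-level permutation $\pi'$ realizing $\tilde\sigma$ obtained by transposing the items in positions $i$ and $j$ (a bijection between the realizations of $\sigma$ and of $\tilde\sigma$), and proving the distance inequality for each $\pi,\pi'$ individually.

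Next I would compute the effect of this transposition on the Kendall tau distance. Write $a$ for the item in position $i$ and $b$ for the item in position $j$ of $\pi$, so $\val_a > \val_b$ and $i<j$, and let $C$ be the set of items in the positions strictly between $i$ and $j$ (which the swap leaves fixed). Any pair not of the form $\{a,b\}$, or $\{a,c\}$ or $\{b,c\}$ with $c \in C$, keeps its relative order, so only these pairs can change status. The pair $\{a,b\}$ is correctly ordered in $\pi$ (since $\val_a > \val_b$) and inverted in $\pi'$, contributing $+1$. For each $c \in C$, both pairs $\{a,c\}$ and $\{b,c\}$ reverse their relative order under the swap; a short case analysis on where $\val_c$ lies shows the net contribution to $d$ is $0$ when $\val_c < \val_b$ (one pair turns from inverted to correct and the other from correct to inverted) or when $\val_c > \val_a$ (symmetrically), and $+2$ when $\val_b < \val_c < \val_a$ (both pairs turn from correct to inverted), with ties handled by the fixed tie-break in $\sigma_0$ and never making the contribution negative. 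Summing over $C$ and adding the $\{a,b\}$ term gives $d(\pi',\sigma_0) - d(\pi,\sigma_0) \ge 1$.

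Combining the two steps yields $d(\pi',\sigma_0) \ge d(\pi,\sigma_0) + 1$ termwise along the bijection, hence $P[\tilde\sigma] = \sum_{\pi' } \phi^{d(\pi',\sigma_0)}/Z \le \sum_{\pi} \phi^{d(\pi,\sigma_0)}/Z = P[\sigma]$, which is exactly inversion-monotonicity. The one genuinely non-obvious point is the middle-item bookkeeping: because $i$ and $j$ need not be adjacent, transposing their contents drags $a$ and $b$ past every element of $C$, so one must verify that none of these forced reversals can lower the inversion count; this is precisely what the case analysis above establishes. Everything else (the Mallows normalization, the realization bijection, and $\phi \le 1$) is routine.
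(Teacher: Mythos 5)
Your proof is correct and follows essentially the same route as the paper: both arguments show that transposing the (correctly ordered) items in positions $i$ and $j$ raises the inversion count by at least one, by accounting for the pair itself plus the items lying between them, and then conclude from the fact that Mallows probabilities are decreasing in Kendall tau distance. Your per-element case analysis on the middle items is just a finer-grained version of the paper's subset-counting ($G_i \subseteq G_j$, $L_j \subseteq L_i$), and your explicit bijection handling tied values is a careful refinement rather than a different approach.
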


While the Mallows model is frequently used as a model of permutations, it does have drawbacks. Specifically, one desirable property of rankings is that items with very different true values should be less likely to be inverted. For example, given $[\val_1 = 10, \val_2 = 9, \val_3 = 0]$, we would expect items $\val_1, \val_2$ to be more frequently inverted than $\val_2, \val_3$, even though their ordinal ranks differ by the same amount. This type of property cannot be expressed by the Mallows model, but it can be captured by the Random Utility Model \cite{thurstone1994law}. As described in Section \ref{sec:modelassump}, in the Random Utility Model (RUM), while each item has some true value $\{\val_i\}$, it is assumed that they are ranked by noised versions of these values, such as additive noise ($\hat \val_i \sim \mathcal{D}$). This automatically satisfies the property that items with more similar true values will be more likely to be swapped. Additionally, it seems likely that the Random Utility Model might better capture the performance of rankings produced by humans or algorithmic tools, which might have some sense of the \enquote{true value} of each item, but make small, independent errors in their estimation of these values. However, does the Random Utility Model satisfy inversion-monotonicity? 

Theorem \ref{thrm:rummonotone} answers this question in the affirmative. While we believe that this property may be of independent interest, to our knowledge, we are the first paper to prove such a property for the Random Utility model. The proof of Theorem \ref{thrm:rummonotone} is largely proven by two sub-lemmas. First, Lemma \ref{lem:orderedgum} follows directly from properties of the Gumbel distribution. Next, Lemma \ref{lem:ordered} generalizes this result to all symmetric noise distribution (e.g., including Gaussian noise, a standard choice in the RUM): the proof of this lemma turns out to be surprisingly subtle.

\begin{restatable}{theorem}{rummonotone}
    \label{thrm:rummonotone}
The RUM with either a) Gumbel noise, or b) any noise from any identical, symmetric, noise distribution is inversion-monotone. 
\end{restatable}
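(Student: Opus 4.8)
The plan is to split into the two cases after one common reduction. Fix a permutation $\perm$ and indices $i<j$ with $\val_{\perm_i}>\val_{\perm_j}$, write $a=\perm_i$ and $b=\perm_j$ (so $\val_a>\val_b$), and let $\tilde\perm$ be $\perm$ with positions $i$ and $j$ swapped; the goal is $P[\perm]\ge P[\tilde\perm]$ (Definition~\ref{def:monotone}). Since $\perm$ and $\tilde\perm$ rank exactly the same items, this comparison is really about how the two placements of the single pair $\{a,b\}$ interact with the rest of the ranking, which suggests conditioning on the noise of all items other than $a,b$ and reducing to a localized inequality.

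\textbf{Gumbel noise.} Here I would not condition but use the Plackett--Luce form (Theorem~\ref{thrm:plackettluce}) directly. With $w_k:=\exp(\val_k/\beta)$ and $D_k^{\perm}:=\sum_{\ell\ge k}w_{\perm_\ell}$, the numerator product $\prod_k w_{\perm_k}$ is identical for $\perm$ and $\tilde\perm$; the tail sums satisfy $D_k^{\perm}=D_k^{\tilde\perm}$ for $k\le i$ and for $k>j$, while $D_k^{\perm}=D_k^{\tilde\perm}-(w_a-w_b)$ for $i<k\le j$ (the only change in the tail set is replacing $a$ by $b$). Hence
\[
\frac{P[\perm]}{P[\tilde\perm]}=\prod_{i<k\le j}\frac{D_k^{\tilde\perm}}{D_k^{\tilde\perm}-(w_a-w_b)}>1,
\]
since $w_a>w_b$ and every $D_k^{\perm}$ is a strictly positive partial sum of weights. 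This settles the Gumbel half (it is the role of the auxiliary Lemma~\ref{lem:orderedgum}).

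\textbf{General symmetric noise.} This is the crux (Lemma~\ref{lem:ordered}). Conditioning on the noise of every item but $a$ and $b$, the surviving items pin down thresholds so that $P[\perm\mid\cdot]$ is the probability that $\hat\val_a$ lands in an ``upper'' window and $\hat\val_b$ in a ``lower'' one, with $P[\tilde\perm\mid\cdot]$ the mirror event; the statement one needs is that the higher-mean copy of the noise is more likely to occupy the upper window. When the noise density is log-concave this is immediate from the monotone likelihood ratio property (and in particular covers the Gaussian), but for a merely symmetric density the window-by-window inequality can reverse, so the symmetry must be used only \emph{after} aggregating. The plan is to prove the appropriate pairwise statement by coupling each noise realization with its images under the measure-preserving maps that swap $\epsilon_a\leftrightarrow\epsilon_b$ and that reflect $\epsilon\mapsto-\epsilon$, and showing that over each such orbit the $\perm$-outcomes weakly dominate the $\tilde\perm$-outcomes; then lift this to the general configuration (arbitrarily many items of arbitrary value wedged between positions $i$ and $j$) by induction routed through the same pairing rather than through a chain of adjacent transpositions.

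\textbf{Main obstacle.} The difficulty is entirely in the symmetric case. The localized inequality is not true configuration-by-configuration, so the result cannot simply be read off from the conditioning step, and the naive ``bubble-sort into place'' reduction to adjacent-position transpositions also fails, because an individual adjacent swap can move the probability in the wrong direction whenever an intervening item has value between or above $\val_a,\val_b$. Identifying the right measure-preserving involution on full noise vectors that converts $\tilde\perm$-realizations into $\perm$-realizations while absorbing the leftover boundary contributions is the delicate point, and I expect that is exactly where the ``surprisingly subtle'' proof of Lemma~\ref{lem:ordered} does its work.
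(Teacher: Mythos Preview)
Your Gumbel argument is correct and more direct than the paper's: you compute the Plackett--Luce ratio factor by factor, while the paper instead routes through Lemma~\ref{lem:orderedgum} (the difference of two Gumbels is logistic, so conditional on $|\hat\val_a-\hat\val_b|=\Delta$ the correct sign is more likely) and then invokes independence of the remaining coordinates. Both work; yours is cleaner.

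For the symmetric case you have misread what Lemma~\ref{lem:ordered} is doing. It is \emph{not} about full noise vectors or about a coupling that carries $\tilde\perm$-realizations to $\perm$-realizations. It is a purely two-variable statement: for $X_1=\mu_1+\epsilon_1$, $X_2=\mu_2+\epsilon_2$ with i.i.d.\ symmetric unimodal noise, the density of $X_1-X_2$ is itself symmetric and unimodal about $\mu_1-\mu_2$, hence $P[X_1-X_2=\Delta]>P[X_1-X_2=-\Delta]$ for every $\Delta>0$. The ``surprisingly subtle'' work is entirely in proving that the autocorrelation $g(\delta)=\int f(\epsilon)f(\epsilon-\delta)\,d\epsilon$ of a symmetric unimodal $f$ is again unimodal; the paper does this by differentiating under the integral and using the one-dimensional reflection $\epsilon\mapsto-\epsilon$. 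The proof of Theorem~\ref{thrm:rummonotone} then lifts to full permutations in a single sentence by conditioning on $|\hat\val_a-\hat\val_b|$ and on the (independent) noise of all other items.

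So your window-level reduction and the proposed four-element orbit coupling ($\epsilon_a\leftrightarrow\epsilon_b$ together with $\epsilon\mapsto-\epsilon$) are not on the paper's path at all, and you have not completed them. The obstacle you flag---that the conditional window inequality $P[\hat\val_a\in I_i]\,P[\hat\val_b\in I_j]\ge P[\hat\val_b\in I_i]\,P[\hat\val_a\in I_j]$ can fail pointwise without log-concavity---is genuine, and it is worth noting that the paper's lifting step is itself quite terse: conditioning on $|\hat\val_a-\hat\val_b|$ alone does not pin down where $a$ and $b$ sit among the other items, so the passage from Lemma~\ref{lem:ordered} to inversion-monotonicity deserves more justification than either your sketch or the paper's one-liner provides.
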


\begin{restatable}{lemma}{orderedgum}
\label{lem:orderedgum}
Suppose we have two random variables given by $X_1 = \mu_1 + \epsilon_1, X_2 = \mu_2 + \epsilon_2$, for $\mu_1 > \mu_2$ and $\epsilon_1, \epsilon_2 \sim \mathcal{D}$ from a \emph{Gumbel distribution} $\mathcal{D}$. Then, if $\abs{X_1 - X_2} = \Delta$, 
$$P[X_1 > X_2 \mid \abs{X_1 - X_2} = \Delta] > P[X_1 < X_2 \mid \abs{X_1 - X_2} = \Delta]$$
\end{restatable}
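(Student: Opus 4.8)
The plan is to reduce this to an elementary property of the logistic distribution. First I would set $D := X_1 - X_2 = \delta + (\epsilon_1 - \epsilon_2)$ with $\delta := \mu_1 - \mu_2 > 0$, and record the standard fact that the difference of two i.i.d.\ Gumbel variables (the location parameter cancels in the difference) is logistic with scale $\beta$: CDF $x \mapsto 1/(1+e^{-x/\beta})$, density $f_L(x) = \tfrac{1}{4\beta}\operatorname{sech}^2(x/2\beta)$. The only two features of $f_L$ I need are that it is symmetric about $0$ and strictly decreasing in $\abs{x}$, both of which are immediate from this closed form. Consequently $D$ has continuous density $f_D(t) = f_L(t - \delta)$, so $P[D = 0] = 0$ and, for $\Delta > 0$,
\[
P[X_1 > X_2 \mid \abs{X_1 - X_2} = \Delta] = \frac{f_D(\Delta)}{f_D(\Delta) + f_D(-\Delta)}, \qquad P[X_1 < X_2 \mid \abs{X_1 - X_2} = \Delta] = \frac{f_D(-\Delta)}{f_D(\Delta) + f_D(-\Delta)}.
\]

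So the whole statement reduces to the inequality $f_D(\Delta) > f_D(-\Delta)$, i.e.\ $f_L(\Delta - \delta) > f_L(-\Delta - \delta)$. Using symmetry and strict monotonicity of $f_L$, this is in turn equivalent to $\abs{\Delta - \delta} < \abs{-\Delta - \delta} = \Delta + \delta$, which holds by the triangle inequality and is strict because $\Delta > 0$ and $\delta > 0$ (equality would force one of them to be $0$). Feeding this back in gives $P[X_1 > X_2 \mid \abs{X_1 - X_2} = \Delta] > \tfrac12 > P[X_1 < X_2 \mid \abs{X_1 - X_2} = \Delta]$, which is the claim.

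There is no genuinely hard step here; the two points to be careful about are (i) justifying the form of the density of $\epsilon_1 - \epsilon_2$ — rather than the explicit logistic computation one could instead just argue that this difference is symmetric about $0$ with a log-concave (hence unimodal) density, using log-concavity of the Gumbel density together with closure of log-concavity under reflection and convolution, which is the route that naturally generalizes to the symmetric-noise case of Lemma~\ref{lem:ordered} — and (ii) noting that we may assume $\Delta > 0$, since for $\Delta = 0$ the conditioning event is $\{X_1 = X_2\}$ and the statement is vacuous. I would nonetheless include the short explicit logistic computation, since it makes the strict monotonicity of $f_L$ fully transparent and keeps the argument self-contained.
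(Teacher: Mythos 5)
Your proposal is correct and follows essentially the same route as the paper: both reduce the claim to the fact that $X_1 - X_2$ is a logistic random variable with location $\mu_1 - \mu_2 > 0$ and scale $\beta$, and then compare the density at $\Delta$ versus $-\Delta$. Your write-up is in fact a bit more careful than the paper's (which just asserts the conclusion from the distribution being shifted right), since you make the conditional probabilities precise as density ratios and verify the strict inequality $\abs{\Delta-\delta} < \Delta+\delta$ explicitly.
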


\begin{restatable}{lemma}{ordered}
\label{lem:ordered}
Suppose we have two random variables given by $X_1 = \mu_1 + \epsilon_1, X_2 = \mu_2 + \epsilon_2$, for $\mu_1 > \mu_2$ and $\epsilon_1, \epsilon_2 \sim \mathcal{D}$ for a \emph{symmetric, single-peaked distribution} $\mathcal{D}$. Then, if $\abs{X_1 - X_2} = \Delta$, 
$$P[X_1 > X_2 \mid \abs{X_1 - X_2} = \Delta] > P[X_1 < X_2 \mid \abs{X_1 - X_2} = \Delta]$$
\end{restatable}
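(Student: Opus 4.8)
The plan is to reduce the statement to a single inequality between two values of a symmetric density and then establish that inequality by a double symmetrization of the underlying convolution. Write $\mu := \mu_1 - \mu_2 > 0$ and $Z := \epsilon_1 - \epsilon_2$, let $f$ be the density of $\mathcal{D}$, and let $g := f * f$ be the density of $Z$; since $f$ is symmetric, $g$ is automatically symmetric about $0$. Then $D := X_1 - X_2 = \mu + Z$ has density $d \mapsto g(d-\mu)$, the event $\{\abs{X_1-X_2} = \Delta\}$ splits (up to a null set) into $D = \Delta$ (equivalently $X_1 > X_2$) and $D = -\Delta$, and $P[X_1=X_2\mid \abs{X_1-X_2}=\Delta]=0$, so by Bayes' rule for densities $P[X_1>X_2\mid\abs{X_1-X_2}=\Delta] = \frac{g(\Delta-\mu)}{g(\Delta-\mu)+g(-\Delta-\mu)} = \frac{g(\Delta-\mu)}{g(\Delta-\mu)+g(\Delta+\mu)}$. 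Hence the lemma is exactly the claim $g(\Delta-\mu) > g(\Delta+\mu)$ for all $\Delta,\mu > 0$.

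To prove $g(\Delta-\mu) > g(\Delta+\mu)$, expand $g = f*f$ and substitute $t = \Delta - u$ in the convolution integral to obtain $g(\Delta-\mu) - g(\Delta+\mu) = \int_{\mathbb{R}} \big(f(u-\mu) - f(u+\mu)\big)\, f(\Delta-u)\,du$. The function $h(u) := f(u-\mu)-f(u+\mu)$ is odd (by symmetry of $f$), so folding the integral at the origin gives $g(\Delta-\mu)-g(\Delta+\mu) = \int_0^\infty h(u)\,\big(f(\Delta-u) - f(\Delta+u)\big)\,du$. For $u>0$ both factors are nonnegative: $h(u) = f(\abs{u-\mu}) - f(u+\mu) \ge 0$ because $\abs{u-\mu} < u+\mu$ and $f$ is nonincreasing on $[0,\infty)$, and likewise $f(\Delta-u) - f(\Delta+u) = f(\abs{\Delta-u}) - f(\Delta+u) \ge 0$. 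So the integrand is nonnegative on $(0,\infty)$, and for $u$ in a fixed small interval near $0$ both strict inequalities $\abs{u-\mu}<u+\mu$ and $\abs{\Delta-u}<\Delta+u$ hold with a uniform gap, so each factor is strictly positive there provided $f$ is genuinely single-peaked (strictly decreasing on an interval of $(0,\infty)$). This yields the strict inequality. As an alternative high-level route, one can instead invoke the classical fact (Wintner's theorem) that the convolution of two symmetric unimodal densities is symmetric unimodal, so $g$ is symmetric unimodal, whence $\abs{\Delta-\mu} < \Delta+\mu$ immediately gives $g(\Delta-\mu) \ge g(\Delta+\mu)$; I would use the elementary argument above to keep the proof self-contained and to extract strictness.

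The main obstacle is the symmetrization bookkeeping together with the strictness claim: one must be careful to use the single-peakedness hypothesis only through monotonicity of $f$ on each half-line, and to argue that plateaus of $f$ (e.g.\ in the tails) do not destroy strict positivity of the integrand — which is why the argument is localized near $u=0$. It is also worth flagging why a separate lemma is needed at all: the symmetric case cannot piggyback on Lemma~\ref{lem:orderedgum}, since the Gumbel distribution is skewed and the convolution identity above is genuinely new for symmetric noise, while conversely the Gumbel proof relies on closed-form exponential-family structure unavailable for general symmetric $\mathcal{D}$. Combining Lemma~\ref{lem:orderedgum} and this lemma then immediately feeds into Theorem~\ref{thrm:rummonotone}, since a pairwise inversion in a RUM-generated permutation is exactly the event $X_i < X_j$ conditioned on the realized gap $\abs{X_i - X_j}$ with $\mu_i > \mu_j$.
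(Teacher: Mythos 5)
Your proof is correct, and it takes a genuinely different route from the paper's. Both arguments first pass to the difference $D = X_1 - X_2$, whose density is the symmetric (about $0$) density $g$ of $\epsilon_1-\epsilon_2$ shifted by $\mu = \mu_1-\mu_2$, and both ultimately rest on the inequality $g(\Delta-\mu) > g(\Delta+\mu)$; but the paper gets there indirectly, by proving that $g$ is unimodal (via a derivative computation $\frac{d}{d\delta}g(\delta) = -\int f(\epsilon)f'(\epsilon-\delta)\,d\epsilon$, folded at the origin) and then combining symmetry about $\mu_1-\mu_2$ with unimodality in a reflection case analysis on $\Delta \lessgtr \mu_1-\mu_2$. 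You instead hit the target inequality in one shot: writing $g(\Delta-\mu)-g(\Delta+\mu) = \int_0^\infty \bigl(f(u-\mu)-f(u+\mu)\bigr)\bigl(f(\Delta-u)-f(\Delta+u)\bigr)\,du$ after folding the convolution, with both factors nonnegative because $\abs{u-\mu} \le u+\mu$, $\abs{\Delta-u} \le \Delta+u$ and $f$ is nonincreasing on $[0,\infty)$. What your route buys: no differentiability of $f$ is needed (the paper's argument uses $f'$ and implicitly $f'<0$ on $(0,\infty)$), and the final case analysis disappears; your Wintner's-theorem remark is in effect the paper's route packaged as a citation. What the paper's route buys is the intermediate fact itself (symmetry plus unimodality of the difference density), which is mildly reusable. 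One small caution on your strictness step: localizing at $u$ near $0$ forces the arguments of $f$ into small neighborhoods of $\mu$ and $\Delta$, so if $f$ had a plateau exactly there both factors would vanish on that interval; under your stated reading of single-peaked (strictly monotone on each side of the peak) both factors are strictly positive for every $u>0$ and the issue is moot, which matches the implicit assumption in the paper's own proof, so this is a presentational point rather than a gap.
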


These results tell us that, even in settings beyond the superstar model, for cases where permutations are generated by commonly-used models, the optimal strategy will still be to either pick the first free item or the first busy candidate. 

Finally, we present proofs for all above lemmas: 

\monotoneenough*
\begin{proof}
The expected value of the $i$th entry is given by: 
$$\mathbb{E}[\val_{\perm_i} \mid \avail] = \sum_{\perm \in \permdist}P[\perm \mid \avail] \cd \val_{\perm_i}  = \sum_{\perm \in \permdist} \frac{P[\avail \mid \perm ] \cd P[\perm]}{P[\avail]} \cd \val_{\perm_i} $$
By identical reasoning, the expected value of the $j$th entry is given by: 
$$\mathbb{E}[\val_{\perm_j} \mid \avail] = \sum_{\perm \in \permdist}P[\perm \mid \avail] \cd \val_{\perm_j}  = \sum_{\perm \in \permdist} \frac{P[\avail \mid \perm ] \cd P[\perm]}{P[\avail]} \cd \val_{\perm_j} $$
We wish to show that: 
$$\sum_{\perm \in \permdist} \frac{P[\avail \mid \perm ] \cd P[\perm]}{P[\avail]} \cd \val_{\perm_i} \geq \sum_{\perm \in \permdist} \frac{P[\avail \mid \perm ] \cd P[\perm]}{P[\avail]} \cd \val_{\perm_j}$$

Consider some $\perm$ such that $\val_{\perm_i} < \val_{\perm_j}$. Then, we consider a unique mapping to some $\tilde \perm$ such that $\perm_k = \tilde \perm_k$ except for $\tilde \perm_i = \perm_j, \tilde \perm_j = \perm_i$ (the $i$th and $j$th elements are swapped).  

Then, we will show that: 
$$P[\perm \mid \avail] \cd \val_{\perm_i}  + P[\tilde \perm \mid \avail] \cd  \val_{\tilde \perm_i}  > P[\perm \mid \avail] \cd \val_{\perm_j}  + P[\tilde \perm \mid \avail] \cd  \val_{\tilde \perm_j} $$
By construction, we have $\tilde \perm_i = \perm_j, \tilde \perm_j = \perm_i$, so we can rewrite this as: 
$$P[\perm \mid \avail] \cd \val_{\perm_i}  + P[\tilde \perm \mid \avail] \cd  \val_{\perm_j}  > P[\perm \mid \avail] \cd \val_{\perm_j}  + P[\tilde \perm \mid \avail] \cd  \val_{\perm_i} $$
$$(P[\tilde \perm \mid \avail] - P[\perm \mid \avail]) \cd (\val_{\perm_j} - \val_{\perm_i}) > 0$$
Next, we can rewrite using Bayes rule: 
$$\p{ \frac{P[\avail \mid \tilde \perm] \cd P[\tilde \perm]}{P[\avail]}- \frac{P[\avail \mid \perm] \cd P[\perm]}{P[\avail]}} \cd (\val_{\perm_j} - \val_{\perm_i}) > 0 $$ 
Dropping the common denominator gives: 
$$\p{ P[\avail \mid \tilde \perm] \cd P[\tilde \perm] - P[\avail \mid \perm] \cd P[\perm]} \cd (\val_{\perm_j} - \val_{\perm_i})> 0 $$ 
We will argue that $P[\avail \mid \tilde \perm] = P[\avail \mid \perm] $. Recall that $\tilde \perm, \perm$ are identical except for at entries $i, j$. Because $\avail_i = \avail_j$ by assumption, for the probability of observing $\avail$ does not change if candidates are swapped between these two entries. Dropping this common term gives: 
$$\p{P[\tilde \perm] - P[\perm]}\cd (\val_{\perm_j} - \val_{\perm_i}) > 0 $$ 
The second term is satisfied because $\val_{\perm_j}> \val_{\perm_i}$ by assumption, and the first term holds because $P[\tilde \perm] > P[\perm]$ by the requirement that the distribution is inversion-monotone as in Definition \ref{def:monotone}. 

Finally, we will show that $\mathbb{E}[\val_{\sigma_i} \mid \avail] \geq \mathbb{E}[\val_{\sigma_j} \mid \avail]$ implies that the best strategy is to either pick the top ranked free or top-ranked busy candidate. \\
Consider any pairs of indices $i, j$ with $i<j$ and $\avail_i =\avail_j$ (both candidates free or busy). Then, the condition indicates that: 
$$\busypen^{1-\avail_i} \cd \mathbb{E}[\val_{\sigma_i} \mid \avail] \geq \busypen^{1-\avail_j} \cd \mathbb{E}[\val_{\sigma_j} \mid \avail]$$
Because there is a common term of $\busypen^{1-\avail_i}= \busypen^{1-\avail_j}$, this indicates that the firm only gets lower utility by picking the candidate in index $j$. As a result, the highest-ranked free candidate has the highest expected utility for the firm among candidates that are free, and the highest-ranked busy candidate has the highest expected utility for the firm among candidates that are busy. Thus, the optimal strategy is always to pick either the top ranked free or top ranked busy candidate. 
\end{proof}

\mallowmonotone*
\begin{proof}
Consider some permutation $\perm$, with $\val_{\perm_i} > \val_{\perm_j}$ for $i > j$. Recall that the probability of seeing a permutation is inversely proportional to the number of inversions (e.g. $k> l$ such that $\val_{\perm_k} < \val_{\perm_l}$). 

If $i = j+1$ then this is obvious: this only changes the number of inversions for $i, j$, and increases the number of inversions by exactly 1. 

If $i > j+1$, then there exists some set $S \in [i+1, j-1]$. Flipping $\val_{\perm_i}, \val_{\perm_j}$ also changes the number of inversions within this set (but maintains the number of inversions in elements outside of this set). We would like to show that if $\val_{\perm_i} > \val_{\perm_j}$, flipping the elements $i, j$ only increases the number of inversions (and thus makes this alternative permutation less likely). 

Within $S$, we can denote the set of elements that are \emph{greater} and \emph{less} than $i$ and $j$ respectively by:  
$$G_i = \{k \in S \mid \val_{\perm_k} > \val_{\perm_i}\} \quad G_j = \{k \in S \mid \val_{\perm_k} > \val_{\perm_j}\} $$
$$L_i = \{k \in S \mid \val_{\perm_k} < \val_{\perm_i}\} \quad L_j = \{k \in S \mid \val_{\perm_k} < \val_{\perm_j}\} $$
Note that $G_i \subseteq G_j$ and $L_j \subseteq$: the set of elements that is greater than $i$ is a subset of the set of elements that is greater than $j$, and the set of elements that is less than $j$ is a subset of the elements that is smaller than $i$. 

Note that within $S$ in the original permutation $\pi$, the number of inversions is given by 
$$\abs{G_i} + \abs{L_j}$$
the elements that are greater than $\val_{\perm_i}$ and less than $\val_{\perm_j}$. After swapping the order of elements $i, j$, the number of inversions within $S$ is given by: 
$$\abs{G_j} + \abs{L_i}$$
the elements that are greater than $\pi_j$ and less than $\pi_i$. By our prior reasoning: 
$$\abs{G_j} \geq \abs{G_i} \quad \abs{L_i} \geq \abs{L_j}$$
and so: 
$$\abs{G_j} + \abs{L_i} \geq \abs{G_i} + \abs{L_j}$$
We note that flipping $i, j$ involves at least one more inversion (since now we have $\val_{\perm_j} < \val_{\perm_i}$) and so the swapping process strictly increased the total number of inversions. 
\end{proof}

\rummonotone*
\begin{proof}
Consider any permutation $\perm$ with at least one inversion (an instance where $\val_{\perm_i}<\val_{\perm_j}$ but $i<j$). Such a permutation occurs when $\hat \val_{\perm_i} > \hat \val_{\perm_j}$.  By Lemmas \ref{lem:orderedgum} and \ref{lem:ordered}, we know that given $\val_{\perm_i}<\val_{\perm_j}$ and pair of noised values $\hat \val_i,  \hat \val_j$ generated with with noise either a) from a Gumbel distribution or b) any additive symmetric noise distribution, if $\abs{\hat \val_{\perm_i} - \hat \val_{\perm_j}} = \Delta$, then 
$$P[ \hat \val_{\perm_j}  > \hat \val_{\perm_i}\mid \abs{\hat \val_{\perm_i} - \hat \val_{\perm_j}}= \Delta] > P[ \hat \val_{\perm_j}  < \hat \val_{\perm_i} \mid \abs{\hat \val_{\perm_i} - \hat \val_{\perm_j}}= \Delta]$$
This tells us that it is more likely we would have $\hat \val_i < \hat \val_j$ (the correct relative ordering of these candidates) than their inversion. Because noise is added to element independently in the RUM, we know that the noised value $\hat \val_k$ is completely independent of the noised value of $\hat \val_i , \hat \val_j$ for $k \ne i, j$. Thus, we know that there must exist a permutation $\tilde \perm$ identical to $\perm$ except with the candidates in indices $i, j$ flipped, and we must have $P[\tilde \perm] \geq P[\perm]$.  
\end{proof}

\orderedgum*
\begin{proof}
First, note that the mean of a Gumbel distribution is given by $\mu + \beta \cd \eta$, where $\eta$ here is Euler's constant. That means that we can model $X_i = \mu_i + \epsilon_i$ as $X_i \sim G(X_i - \beta \cd \eta)$. Next, we consider the difference distribution induced by $\epsilon_1 - \epsilon_2$: we note that the difference distribution between two Gumbel distributions is given by a Logistic distribution governed by mean $X_1 - \beta \cd \eta - X_2 + \beta \cd \eta = X_1 - X_2$ and noise parameter $\beta$. Because $X_1 > X_2$, this distribution is shifted towards the right: thus, for all values $x$, the probability of $X_1 > X_2 $ given $\abs{X_1 - X_2} = x$ is always greater than the probability of $X_1 < X_2 $, as desired. 
\end{proof}

\ordered*
\begin{proof}
We will begin by analyzing the distribution $\mathcal{D}'$ induced by $X = X_1 - X_2$. 

\noindent \textbf{Symmetric:}\\
First, we will show that it is symmetric around $\mu_1 - \mu_2$. In order to show this, we will show that: 
$$P[X = \mu_1 - \mu_2 + \delta] = P[X = \mu_1 - \mu_2 - \delta]$$
for all $\delta >0$. Suppose that we have some $\epsilon_1, \epsilon_2$ such that: 
$$X = X_1 - X_2 = \mu_1 +\epsilon_1 - \mu_2 - \epsilon_2 = \mu_1 - \mu_2 + \delta$$
for $\delta = \epsilon_1 - \epsilon_3$. Then, we can show a mapping to an equally-likely event where $X = \mu_1 - \mu_2 - \delta$. Because $\epsilon_1, \epsilon_2$ are drawn from the same distribution $\mathcal{D}$, it is equally likely that we would have $\epsilon_1' = \epsilon_2, \epsilon_2' = \epsilon_1$. This would give us: 
$$X = X_1 - X_2 = \mu_1 + \epsilon_2 - \mu_2 - \epsilon_1 = \mu_1 - \mu_2 + (\epsilon_2 - \epsilon_1) = \mu_1 - \mu_2 - (\epsilon_1 - \epsilon_2) = \mu_1 - \mu_2 -  \delta $$
as desired. 

\noindent \textbf{Unimodal:}\\
 The next thing we would like to show is that $\mathcal{D}'$ is unimodal (strictly increasing and then decreasing).

We will look at the distribution $g(x)$ of $\mathcal{D}'$ directly. We'd like to show that this is decreasing for $x> \mu_1 - \mu_2$. What's the probability density of $g(\delta)$ for some $\delta  =\epsilon_1 - \epsilon_2$? We can obtain this by integrating over all possible values of $\epsilon_1, \epsilon_2$, along with the cdf $f(\cd)$ for the noise distribution: 
$$g(\delta) = \int_{-\infty}^{\infty}\mathbbm{1}[\delta = \epsilon_1 - \epsilon_2]f(\epsilon_1) \cd f(\epsilon_2) d\epsilon_1d\epsilon_2$$
We can rewrite as a single integral over $\epsilon_1$ and use $\epsilon_2 = \epsilon_1 - \delta$.
$$g(\delta) = \int_{-\infty}^{\infty}f(\epsilon_1)\cd f(\epsilon_1 - \delta)d\epsilon_1$$
We would like to show that this is decreasing in $\delta$ for $\delta>0$. We can take the derivative wrt $\delta$, which gives us: 
$$\frac{d}{d\delta}g(\delta) =  \frac{d}{d\delta}\int_{-\infty}^{\infty}f(\epsilon_1)\cd f(\epsilon_1 - \delta)d\epsilon_1$$
Integration and differentiation commutes, so we have: 
$$ =  \int_{-\infty}^{\infty}f(\epsilon_1)\cd \frac{d}{d\delta}f(\epsilon_1 - \delta)d\epsilon_1$$
by the chain rule: 
$$ = - \int_{-\infty}^{\infty}f(\epsilon_1)\cd f'(\epsilon_1 - \delta)d\epsilon_1$$
Which means we want to show that: 
\begin{equation}\label{eq:pos} \int_{-\infty}^{\infty}f(\epsilon_1)\cd f'(\epsilon_1 - \delta)d\epsilon_1 > 0\end{equation}
Let's consider a closely related term: 
$$ \int_{-\infty}^{\infty}f(\epsilon_1)\cd f'(\epsilon_1)d\epsilon_1$$
We know that $f(\cd)$ is symmetric around 0 and is increasing below 0 and decreasing above 0, which means that this above term equals 0. $f(\epsilon_1 - \delta)$ is shifted to the right: we will show that this suffices to show that Equation \ref{eq:pos} is positive. 

We can rewrite Equation \ref{eq:pos} as: 
$$\int_{-\infty}^{\infty}f(\epsilon+\delta) \cd f'(\epsilon)d\epsilon$$
And what we want to show is
$$\int_{-\infty}^{\infty}f(\epsilon+\delta) \cd f'(\epsilon)d\epsilon > \int_{-\infty}^{\infty}f(\epsilon)\cd f'(\epsilon_1)d\epsilon$$
Or rewritten: 
$$\int_{-\infty}^{\infty}\p{f(\epsilon+\delta) - f(\epsilon)}\cd f'(\epsilon)d\epsilon > 0$$
First, we can divide this into two components based on whether $\epsilon$ is positive or negative: 
$$\int_{-\infty}^0(f(\epsilon + \delta) - f(\epsilon)) \cd f'(\epsilon) d \epsilon + \int_{0}^{\infty}(f(\epsilon + \delta) - f(\epsilon)) \cd f'(\epsilon) d \epsilon$$
$$\int_{0}^{\infty}(f(-\epsilon + \delta) - f(-\epsilon)) \cd f'(-\epsilon) d \epsilon + \int_{0}^{\infty}(f(\epsilon + \delta) - f(\epsilon)) \cd f'(\epsilon) d \epsilon$$
Because $f$ is symmetric by our prior reasoning, we have that $f(\epsilon) = f(-\epsilon)$ and $f'(\epsilon) = - f'(-\epsilon)$, which allows us to rewrite: 
$$-\int_{0}^{\infty}(f(-\epsilon + \delta) - f(\epsilon)) \cd f'(\epsilon) d \epsilon + \int_{0}^{\infty}(f(\epsilon + \delta) - f(\epsilon)) \cd f'(\epsilon) d \epsilon$$
We can then combine these terms to give: 
$$\int_0^{\infty}f'(\epsilon) \cd (f(\epsilon + \delta) - f(\epsilon) - f(-\epsilon + \delta) + f(\epsilon))d\epsilon$$
$$=\int_0^{\infty}f'(\epsilon) \cd (f(\epsilon + \delta) - f(-\epsilon + \delta))d\epsilon$$
We know that $f'(\epsilon)$ is always negative for positive $\epsilon$ (by the assumption that $f(\cd)$ is unimodal and centered at 0). In order for the total term to be positive, we need to show that the other term is also negative - that is, that 
$$f(-\epsilon + \delta) > f(\epsilon + \delta) \quad  \forall \epsilon > 0, \delta > 0$$
Again, we use that $f(\cd)$ is symmetric and unimodal. This means that 
$$f(\epsilon + \delta) = f(-\epsilon - \delta) < f(-\epsilon + \delta)$$
as desired. 

\noindent\textbf{Overall reasoning: }
Having shown that $\mathcal{D}'$ is unimodal and symmetric, we will now show that $P[X = \Delta] > P[X= -\Delta]$ for all $\Delta >0$. We know that $\mathcal{D}'$ has a mean at $\mu_1 - \mu_2>0$. 

If $0 < \Delta < \mu_1 - \mu_2$, then the event $X = \Delta$ occurs on the lefthand side of the unimodal distribution, as does the event $X = -\Delta$. Because $\mathcal{D}'$ is unimodal and strictly decreasing, this implies that $P[X = -\Delta] < P[X = \Delta]$. 

If $0 < \mu_1 - \mu_2 \leq \Delta$, then $P[X = \Delta]$ occurs on the righthand side of the unimodal curve, at a distance $\Delta - (\mu_1 - \mu_2)$ above the peak. Then, reflecting across the axis of symmetry gives the point $P[X = (\mu_1 - \mu_2) - (\Delta - (\mu_1 - \mu_2))] = P[X = 2\cd (\mu_1 - \mu_2) - \Delta]  = P[X = \Delta]$. This point is on the lefthand side of the unimodal curve, but because $\mu_1 - \mu_2 > 0$, this point also is above the point $P[X = -\Delta]$, and therefore we have:
$$P[X = \Delta] = P[X = 2 \cd (\mu_1 - \mu_2) - \Delta] > P[X = -\Delta]$$
as desired. 
\end{proof}

\subsection{Beyond-the-superstar strategy}\label{app:beyondapp}

\begin{definition}[Accuracy: beyond superstar]\label{def:accuracybeyond}
A ranking distribution $\{P[\perm]'\}$ is \emph{more accurate} than $\{P[\perm]\}$ if both conditions hold: 

\begin{enumerate} 
     \item Ratio of probabilities: for every pair of permutations $\perm, \tilde \perm$ that differ solely by inverting $\val_i \geq \val_k$ in index $1, j>2$, the ratio of probabilities of finding $P[\perm]/P[\tilde \perm]$ a) increases with increased accuracy, and b) increases with increased index $j$. 
 \item Majorization: if the probability of a candidate of value $v_i$ being in the top $k$ for any $k\geq 1$ decreases, then the probability of any other candidate with value $v_j<v_i$ being in the top $k$ also decreases.
\end{enumerate}
\end{definition}

\begin{restatable}{lemma}{gumbelsataccuracy}\label{lem:gumbelsataccuracy}
Plackett-Luce (RUM with Gumbel noise) satisfies the accuracy definitions in Definition \ref{def:accuracybeyond}. 
\end{restatable}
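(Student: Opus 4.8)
The plan is to verify, for the Plackett--Luce model, the two clauses of Definition~\ref{def:accuracybeyond} in turn, using the closed form of Theorem~\ref{thrm:plackettluce} together with the additive-Gumbel picture (draw $\hat\val_\ell=\val_\ell+\epsilon_\ell$, sort descending). Throughout write $w_\ell=\exp(\val_\ell/\beta)$ for the Luce weight of item $\ell$, so that decreasing $\beta$ inflates high-value weights relative to low-value ones.

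\textbf{Clause 1, ratio formula and monotonicity in $j$.} Fix $\perm,\tilde\perm$ of the stated type, with the value-$\val_i$ item in position $1$ and the value-$\val_k$ item in position $j>2$ in $\perm$, and $\tilde\perm$ obtained by swapping exactly those two. Plugging both into the product of Theorem~\ref{thrm:plackettluce}, the numerators coming from positions $1$ and $j$ are $w_iw_k$ for each permutation and cancel, every other numerator agrees, the denominators at position $1$ and at every position $>j$ are equal total weights, and at a position $2\le m\le j$ the denominator is $S_m+w_k$ for $\perm$ and $S_m+w_i$ for $\tilde\perm$, where $S_m$ is the weight-sum of the items occupying positions $m,\dots,\nitem$ other than the two swapped items. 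Hence
$$\frac{P[\perm]}{P[\tilde\perm]}=\prod_{m=2}^{j}\frac{S_m+w_i}{S_m+w_k},$$
which is $\ge 1$ since $\val_i\ge\val_k$ gives $w_i\ge w_k$. For the canonical permutations $\perm_{i,k,j}$ of Algorithm~\ref{algo:approximatebeyondsuperstar} one checks that $\perm_{i,k,j+1}$ equals $\perm_{i,k,j}$ with the contents of positions $j$ and $j+1$ interchanged; this leaves $S_2,\dots,S_j$ unchanged and appends one extra factor $\tfrac{S_{j+1}+w_i}{S_{j+1}+w_k}\ge1$, giving clause~1(b).

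\textbf{Clause 1, monotonicity in accuracy, and Clause 2.} For 1(a) I differentiate $\log\prod_{m=2}^{j}\tfrac{S_m+w_i}{S_m+w_k}$ with respect to $u=1/\beta$; the derivative is $\sum_{m}\tfrac{N_m}{(S_m+w_i)(S_m+w_k)}$ with $N_m=S_m'(w_k-w_i)+S_m(\val_iw_i-\val_kw_k)+w_iw_k(\val_i-\val_k)$ and $S_m'=\sum \val_\ell w_\ell$ over the items in $S_m$. Each $N_m$ need not be nonnegative (the small-$m$ term $S_2$ may contain a higher-value item than $\val_i$), so this is a global argument: one must use that $S_2>S_3>\dots>S_j$ with $S_j$ built only from the low-ranked items, so that the strongly positive contributions from $m$ near $j$ (where $N_m\approx w_iw_k(\val_i-\val_k)$) dominate. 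For Clause~2 I use the sorting characterization: item $\ell$ is in the top $k$ iff $\val_\ell+\epsilon_\ell$ beats the $k$-th largest of the other noised values, so $\Pr_\beta[\text{value-}v\text{ item in top }k]=\mathbb{E}\big[\bar G_\beta(M^{(\beta)}_k-v)\big]$ with $\bar G_\beta$ the Gumbel survival at scale $\beta$ and $M_k^{(\beta)}$ the $k$-th order statistic of the independent other noised values. Fixing the other items and taking $\beta'<\beta$, I claim $v\mapsto \Pr_{\beta'}[\cdot]-\Pr_{\beta}[\cdot]$ is single-crossing from nonpositive to nonnegative, which is exactly Clause~2. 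For $k=1$ this difference is $\varphi\!\big(\tfrac{v-c_{\beta'}}{\beta'}\big)-\varphi\!\big(\tfrac{v-c_{\beta}}{\beta}\big)$ with $\varphi$ the logistic function and $c_\beta=\beta\ln\sum_{m\ne\ell}w_m$ increasing in $\beta$; two logistic curves of differing steepness cross at most once because at any crossing value $p$ the difference has slope $p(1-p)(1/\beta'-1/\beta)>0$, so a second crossing would require a sign change with the wrong slope in between. The general $k$ follows by the same ``positive slope at every crossing'' principle applied to $\mathbb{E}[\bar G_\beta(M_k-v)]$ (which is strictly increasing in $v$), or by induction on $k$.

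\textbf{Main obstacle.} The real work is Clause~1(a): the Plackett--Luce denominators $S_m$ are themselves $\beta$-dependent and generically mix items above and below $\val_k$, so no single factor of the product is monotone in $\beta$ and one cannot argue term by term. Pushing the log-derivative inequality $\sum_m N_m/((S_m+w_i)(S_m+w_k))\ge 0$ through requires exploiting the nested structure $S_2\supset S_3\supset\cdots\supset S_j$ and the canonical placement of items, and is the step I expect to be most delicate. A related point to fix early is precisely which permutation pairs Definition~\ref{def:accuracybeyond} quantifies over: the bare ``pairs differing by a $(1,j)$-swap'' reading admits small counterexamples, so the intended (and sufficient) class is the canonical $\perm_{i,k,j}$ actually used in Algorithm~\ref{algo:approximatebeyondsuperstar}, for which the sliding identity in 1(b) and the structural facts needed for 1(a) are available. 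Clause~2 for general $k$ is a secondary technical nuisance, not a conceptual one; Clause~1(b) is essentially immediate from the sliding identity.
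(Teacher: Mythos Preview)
Your derivation of the ratio formula for Clause~1 and the ``sliding'' argument for 1(b) are essentially what the paper does (the paper writes the same product $\prod_m (S_m+w_i)/(S_m+w_k)$ and relies on $w_i\ge w_k$). You are correct that 1(a) cannot be read off factor by factor; the paper is in fact quite terse here, simply displaying the formula and writing ``Because $\val_{\perm_1}>\val_{\perm_j}$ by assumption, this is satisfied,'' without a derivative computation. So your flagging of 1(a) as the delicate step, and your restriction to the canonical permutations $\perm_{i,k,j}$ of Algorithm~\ref{algo:approximatebeyondsuperstar}, is if anything more careful than what the paper makes explicit; the paper's downstream use (Lemma~\ref{lem:beyondprop}) is indeed only for those canonical pairs.

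For Clause~2 the paper takes a much more elementary route than your single-crossing\,/\,order-statistic approach, and it is worth knowing. Rather than analyzing $\Pr_\beta[\text{item in top }k]$ via the $k$th order statistic of the other noised values, the paper works pointwise with the Luce choice probability over a fixed subset: for any set $S$ (with aggregate weight $S_\beta$) not containing $i$ or $k$, it shows that if decreasing $\beta$ to $\alpha$ shrinks
\[
\frac{\exp(v_i/\beta)}{S_\beta+\exp(v_i/\beta)+\exp(v_k/\beta)},
\]
then it also shrinks the same expression with $i$ and $k$ exchanged. The proof is two lines of algebra: cross-multiplying, the hypothesis forces $S_\alpha/S_\beta>\exp\bigl(v_i(1/\alpha-1/\beta)\bigr)$, and since $v_i>v_k$ the right side only drops when $v_i$ is replaced by $v_k$. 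This avoids Gumbel order statistics, the logistic single-crossing lemma, and any induction on $k$; it is also the form actually invoked in Theorem~\ref{thrm:beyondwelfare}. Your approach would work for $k=1$ but the ``same principle\ldots or induction on $k$'' extension is genuinely nontrivial, whereas the paper's subset-level statement handles all $k$ uniformly.
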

\begin{proof}
\textbf{Ratio of probabilities:}
The quantity of interest is: 
$$\frac{P[\perm]}{P[\tilde \perm]} $$
where $\tilde \perm$ is identical to $\perm$, except with elements in $1, j$ flipped. By the definition of the Plackett-Luce model, this is given by: 

$$\prod_{j=1}^{\nitem}\frac{\exp(\val_{\perm_j}/\beta)}{\sum_{\ell=j}^{\nitem}\exp(\val_{\perm_{\ell}}/\beta)} \frac{\exp(\val_{\perm_j}/\beta)}{\prod_{j=1}^{\nitem}\sum_{\ell=j}^{\nitem}\exp(\val_{\perm_{\ell}}/\beta)} = \frac{\exp(\val_{\perm_j}/\beta)}{\prod_{j=1}^{\nitem}\sum_{\ell=j}^{\nitem}\exp(\val_{\perm_{\ell}}/\beta)}$$
We can immediately note two points of symmetry: 
\begin{itemize}
    \item The numerator is identical for $\perm, \tilde \perm$. 
    \item In the denominator, all elements in indices $\ell > j$ the elements within the product are identical. 
\end{itemize}
 The ratio of probabilities is given by: 

 $$\frac{P[\perm]}{P[\tilde \perm]} = \frac{\prod_{k=1}^j\sum_{\ell=k}^{\nitem}\exp(\val_{\tilde\perm_{\ell}}/\beta)}{\prod_{k=1}^j\sum_{\ell=k}^{\nitem}\exp(\val_{ \perm_{\ell}}/\beta)} = \frac{\prod_{k=2}^j\sum_{\ell=k}^{\nitem}\exp(\val_{\tilde \perm_{\ell}}/\beta)}{\prod_{k=2}^j\sum_{\ell=k}^{\nitem}\exp(\val_{ \perm_{\ell}}/\beta)}$$
 where the equality comes from the fact that the $k=1$ term is equal on the top and bottom. Noting that $\perm = \tilde \perm$ except in indices $1, j$, we can rewrite this as: 
 $$\frac{\prod_{k=2}^{j}\p{\exp(\val_{\tilde \perm_{j}}/\beta) + \sum_{\ell=k, \ell \ne j}^{\nitem}\exp(\val_{\tilde \perm_{\ell}}/\beta)}}{\prod_{k=2}^{j}\p{\exp(\val_{ \perm_{j}}/\beta) + \sum_{\ell=k, \ell \ne j}^{\nitem}\exp(\val_{ \perm_{\ell}}/\beta)}} = \frac{\prod_{k=2}^{j}\p{\exp(\val_{\perm_{1}}/\beta) + \sum_{\ell=k, \ell \ne j}^{\nitem}\exp(\val_{ \perm_{\ell}}/\beta)}}{\prod_{k=2}^{j}\p{\exp(\val_{ \perm_{j}}/\beta) + \sum_{\ell=k, \ell \ne j}^{\nitem}\exp(\val_{ \perm_{\ell}}/\beta)}}$$
 Because $\val_{\perm_1} > \val_{\perm_j}$ by assumption, this is satisfied. 

\textbf{Majorization:}
The property we wish to prove is the following: If the probability of a candidate of value $v_i$ being in the top $k$ for any $k\geq 1$ decreases, then the probability of any other candidate with value $v_j<v_i$ being in the top $k$ also decreases. 
Instead, we will prove a stronger version: for any set of items $S$ with $\abs{S} = k-2$ $v_i, v_k \not \in S$, if increasing accuracy decreases the probability of $v_i$ being ranked above all elements in $S \cup \{v_k\}$, then it also decreases the probability of $v_k$ being ranked above all elements in $S \cup \{v_i\}$. 

What's the probability that $v_i$ is higher than any of these other elements in $S$? 
$$\frac{\exp(v_i/\beta)}{S_{\beta} + \exp(v_i/\beta) + \exp(v_k/\beta)}$$
Suppose we consider some $\alpha < \beta$ (more accurate distribution). This shrinks the probability of $v_i$ being in the top $k$ if: 
$$\frac{\exp(v_i/\beta)}{S_{\beta} + \exp(v_i/\beta) + \exp(v_k/\beta)} > \frac{\exp(v_i/\alpha)}{S_{\alpha} + \exp(v_i/\alpha) + \exp(v_k/\alpha)}$$
Cross multiplying gives: 
$$\exp(\val_i\cd(1/\beta + 1/\alpha))  + \exp(\val_i/\beta) \cd (S_{\alpha} + \exp(\val_k/\alpha)) > \exp(\val_i \cd (1/\alpha + 1/\beta)) + \exp(\val_i/\alpha)\cd (S_{\beta} + \exp(\val_k/\beta))$$

$$ \exp(\val_i/\beta) \cd (S_{\alpha} + \exp(\val_k/\alpha)) > \exp(\val_i/\alpha)\cd (S_{\beta} + \exp(\val_k/\beta))$$
Strategically collecting terms gives: 
$$\exp\p{\val_i/\beta + \val_k/\alpha} + \exp(\val_i/\beta) \cd S_{\alpha} > \exp\p{\val_i/\alpha + \val_k/\beta} + \exp(\val_i/\alpha) \cd S_{\beta}$$
Because $\val_i > \val_k$ and $\alpha < \beta$, we know that $\exp\p{\val_i/\beta + \val_k/\alpha} > \exp\p{\val_i/\alpha + \val_k/\beta}$, and so if the above inequality holds, we must have that 
$$\exp(\val_i/\beta) \cd S_{\alpha}>\exp(\val_i/\alpha) \cd S_{\beta} \quad \Rightarrow \quad S_{\alpha}/S_{\beta} > \exp(\val_i \cd (1/\alpha - 1/\beta))$$ 

When is the same true for $v_k$? When: 
$$\frac{\exp(v_k/\beta)}{S_{\beta} + \exp(v_i/\beta) + \exp(v_k/\beta)} > \frac{\exp(v_k/\alpha)}{S_{\alpha} + \exp(v_i/\alpha) + \exp(v_k/\alpha)}$$
Or: 
$$\exp\p{\val_k/\beta + \val_i/\alpha} + \exp(\val_k/\beta) \cd S_{\alpha} > \exp\p{\val_k/\alpha + \val_i/\beta} + \exp(\val_k/\alpha) \cd S_{\beta}$$
By our previous analysis, we know that $\exp\p{\val_i/\beta + \val_k/\alpha} > \exp\p{\val_i/\alpha + \val_k/\beta}$, which satisfies one pair of inequalities. Because $\val_i > \val_k$, we know that 
$$S_{\alpha}/S_{\beta} > \exp(\val_i \cd (1/\alpha - 1/\beta)) \geq \exp(\val_k \cd (1/\alpha - 1/\beta))$$
which satisfies the other pair of terms. 
\end{proof}

\beyondwelfare*
\begin{proof}
First, we will consider the case that a candidate is using first-free. Here, a firm only picks a busy candidate if all in the window size (length $k$) are busy. The event where this occurs is: 
$$\sum_{\perm \in \mathcal{S}}P[\perm] \cd \prod_{i \in \perm_{1:k}} (1-p_i)$$
First, we consider the case where the window shrinks from $k$ to $k-1$: we wish to show that this \emph{increases} the chances that a busy candidate is picked. This event can be written as: 
$$\sum_{\perm \in \mathcal{S}}P[\perm] \cd \prod_{i \in \perm_{1:(k-1)}} (1-p_i)$$
Because $p_i \in (0, 1)$, we know that the product is always strictly larger: 
$$\prod_{i \in \perm_{1:(k-1)}} (1-p_i)\geq \prod_{i \in \perm_{1:k}} (1-p_i)$$

Next, we will consider the case where the window size is held constant, but accuracy increases (a shift from $\{P[\perm]\}$ to $\{P[\perm]'\}$). We wish to show that this increases the probability of the top $k$ candidates being busy (and thus, the probability of a busy candidate being picked). That is, we wish to show that: 
$$\sum_{\perm \in \mathcal{S}}P[\perm]' \cd \prod_{i \in \perm_{1:k}} (1-p_i) \geq \sum_{\perm \in \mathcal{S}}P[\perm] \cd \prod_{i \in \perm_{1:k}} (1-p_i)$$
Note that we can rewrite this as: 
$$\prod_{i \in [\nitem] \mid P[i \in \perm_{1:k}]'>0}(1-p_i) \cd P[i \in \perm_{1:k}]' \geq \prod_{i \in [\nitem] \mid P[i \in \perm_{1:k}]>0}(1-p_i) \cd P[i \in \perm_{1:k}] $$
that is: the probability that the candidate is in the top $k$, times the probability that it is busy. By Definition \ref{def:accuracybeyond} we know that if $P[i \in \perm_{1:k}]'<P[i \in \perm_{1:k}]$ for some $i$, then it is also lower for any $j>i$. Because $\val_i$ and probability of being free $\perm_i$ are inversely related, we know that the quantity on the lefthand side must have lower probability mass on the highest $p_i$ values (lowest $1-p_i$ values) and higher probability mass on lower $p_i$ values (higher $1-p_i$ values). This shows that the inequality must hold. 

Next, we turn to the setting where the firm is using first-busy as their strategy. Here, a firm selects a busy candidate \emph{unless} all of the top candidates are free. Thus, we will analyze the probability that all of the top candidates are free (and inverting the results for this quantity will tell us results about how often busy candidates are picked). 

The event where this occurs is: 
$$\sum_{\perm \in \mathcal{S}}P[\perm] \cd \prod_{i \in \perm_{1:k}} p_i$$
First, we consider the case where the window shrinks from $k$ to $k-1$: we wish to show that this \emph{increases} the chances that a free candidate is picked (thus, decreasing the probability that a busy candidate is picked). This event can be written as: 
$$\sum_{\perm \in \mathcal{S}}P[\perm] \cd \prod_{i \in \perm_{1:(k-1)}} p_i$$
Because $p_i \in (0, 1)$, we know that the product is always strictly larger: 
$$\prod_{i \in \perm_{1:(k-1)}} p_i\geq \prod_{i \in \perm_{1:k}} p_i$$
which means that the probability of all candidates in the window being free only \emph{increases}, as desired. \\

Next, we will consider the case where the window size is held constant, but accuracy increases (a shift from $\{P[\perm]\}$ to $\{P[\tilde \perm]\}$). We wish to show that this decreases the probability of the top $k$ candidates being free (and thus, increases probability of a busy candidate being picked). That is, we wish to show that: 
$$\sum_{\perm \in \mathcal{S}}P[\perm]' \cd \prod_{i \in \perm_{1:k}} p_i \leq \sum_{\perm \in \mathcal{S}}P[\perm] \cd \prod_{i \in \perm_{1:k}}p_i$$
Rewriting this gives: 
$$\prod_{i \in [\nitem] \mid P[i \in \perm_{1:k}]'>0}p_i \cd P[i \in \perm_{1:k}]' \leq \prod_{i \in [\nitem] \mid P[i \in \perm_{1:k}]>0}p_i \cd P[i \in \perm_{1:k}] $$

Again, we can prove this by appealing to the accuracy definition in Definition \ref{def:accuracybeyond}: we know that if the probability of candidates $i$ being in the top $k$ decreases, then it must decrease for all candidates $j>i$. Because the total probability of candidates being in the top $k$ must stay constant, this means that the probability of candidates with higher value (lower index) must increase. Because candidates with higher value have lower $p_i$, this implies that the total probability above must decrease with increased accuracy. 
\end{proof}

\beyondpicktop*
\begin{proof}
Suppose that a firm is using first-busy. Then, the firm picks the top ranked candidate if either a) the top candidate is busy or b) all top $k$ candidates are free. This is given by: 
$$\sum_{\perm \in S}P[\perm] \cd (1-p_{\perm_1}) + \sum_{\perm \in S}P[\perm] \cd \prod_{i \in \perm_{1:k}}p_i$$
First, we consider the case where the window size $k$ shrinks. Then this becomes: 
$$\sum_{\perm \in S}P[\perm] \cd (1-p_{\perm_1}) + \sum_{\perm \in S}P[\perm] \cd \prod_{i \in \perm_{1:k-1}}p_i$$
which is higher, so the probability of the top candidate being picked only increases. 
Next, we consider the case where the window stays the same but the accuracy increases. 
By Definition \ref{def:accuracybeyond} and our reasoning in the proof of Theorem \ref{thrm:beyondwelfare} we know that increasing the accuracy of the ranking tool only increases the total probability of the top $k$ elements being busy. Thus, this means that the first term increases and the second term increases - whether or not the total term increases or decreases depends on the relative change. 

Suppose that a firm is using first-free. Then, the firm picks the top ranked candidate if either a) the top candidate is free or b) all top $k$ candidates are busy. This is given by: 
$$\sum_{\perm \in S}P[\perm] \cd p_{\perm_1} + \sum_{\perm \in S}P[\perm] \cd \prod_{i \in \perm_{1:k}}(1-p_i)$$
First, we note that if the window size $k$ shrinks, then the overall term increases and the probability of picking the top candidate increases. 
Next, we consider the case where the window size stays constant but accuracy increases. Again by Definition \ref{def:accuracybeyond} and our reasoning in the proof of Theorem \ref{thrm:beyondwelfare}, the first term increases while the second term decreases: which dominates depends on the relative change in these two. 
\end{proof}

\beyondreduces*
\begin{proof}
Note that if we are in the superstar setting, there exists only a single pair of unique candidate types to compare, $\val_1>\val_2$, so all of the \enquote{votes} will go in the same direction. First, we will consider the setting where the first candidate is busy.

The quantity $G_{12}$ can be calculated as: 
$$G_{12} = \ratio \cd P[\perm] \cd (\val_2 - \val_1/\busypen_1) + P[\tilde \perm] \cd (\val_1 - \val_2/\busypen_2)$$
This is greater than 0 exactly when: 
$$\ratio \cd P[\perm] \cd (\val_2 - \val_1/\busypen_1) + P[\tilde \perm] \cd (\val_1 - \val_2/\busypen_2) > 0$$
$$ P[\tilde \perm] \cd (\val_1 - \val_2/\busypen_2) > \ratio \cd P[\perm] \cd (\val_1/\busypen_1 - \val_2)$$
By assumption, $\val_1/\busypen_1 - \val_2>0$ or else it would never be preferable to pick a busy candidate, which means this can be rewritten as: 
$$ \frac{\val_1 - \val_2/\busypen_2}{\ratio \cd (\val_1/\busypen_1 - \val_2)} > \frac{P[\perm]}{P[\tilde \perm] }$$
Note that the LHS is equal to $1/R$ as calculated in Algorithm \ref{algo:superstar}. For the $RHS$, $\perm$ is defined as a permutation with the high value candidate ranked $1$st, and $\tilde \perm$ is a permutation with it ranked in index $j$. Note that $P[\perm] \ne P[\perm^1]$, because the latter is the \emph{total probability} that the high value candidate is ranked 1st, while $P[\perm]$ is the probability of a single permutation occurring where the high value candidate is ranked first. However, $P[\perm^1] = \nitem \cd P[\perm]$ and $P[\perm^j] = \nitem \cd P[\tilde \perm]$, and so $P[\perm]/P[\tilde \perm] = P[\perm^1]/P[\perm^j]$, as desired. Similar analysis holds for the setting where the first candidate is free. 
\end{proof}
\beyondprop*
\begin{proof}
We can rewrite these conditions as four properties: 
\begin{enumerate}
    \item Algorithm \ref{algo:approximatebeyondsuperstar} either preferentially picks free or busy candidates (e.g. if it ever picks a free candidate in index $j>1$, then it \emph{never} picks a busy candidate in any $k>1$, and vice versa). 
    \item If Algorithm \ref{algo:approximatebeyondsuperstar} would pick a candidate in index $j$, then it would also pick a candidate in index $k<j$ (assuming corresponding free/busy status). 
    \item Increased accuracy decreases window size. 
    \item If a firm is preferentially picking busy candidates, increased $\busypen$ shrinks the window size (and similarly increases window size if the firm is preferentially picking free candidates). 
\end{enumerate}
Note that properties 1 and 2 together tell us that this strategy reduces down to fixing a window size and preferentially picking either free or busy candidates within that window size.

We will prove these properties in sequence. First, we note that Lemma \ref{lem:gumbelsataccuracy} gives us a closed-form solution for the relative probability of finding a permutation with two items flipped: 
$$\frac{P[\perm]}{P[\tilde \perm]} = \frac{\prod_{k=2}^{j}\p{\exp(\val_{\perm_{1}}/\beta) + \sum_{\ell=k, \ell \ne j}^{\nitem}\exp(\val_{ \perm_{\ell}}/\beta)}}{\prod_{k=2}^{j}\p{\exp(\val_{ \perm_{j}}/\beta) + \sum_{\ell=k, \ell \ne j}^{\nitem}\exp(\val_{ \perm_{\ell}}/\beta)}} $$
We can note that, given fixed $\val_{\perm_1}, \val_{\perm_j}$, this is fixed when all other terms within the sum are minimized. In particular, this means that we must have $\perm$ be the ordering with every item besides $i, k$ correctly ordered.

\textbf{Property 1: Algorithm \ref{algo:approximatebeyondsuperstar} either preferentially picks free or busy candidates:} We will show that if the strategy preferentially picks a candidate who is busy (resp. free) in index $j>1$, then it will \emph{never} pick a candidate who is free (resp. busy) in any index $k>1$. WLOG, we will consider a strategy that preferentially picks a free candidate in index $j$. If this occurs, then $\abs{G_j} > \abs{G_1}$. For a particular pair of candidates $i, k$, we have $G_{ik}$ given by:

    $$G_{ik} = \ratio_{i, k} \cd P[\perm] \cd (\val_{k} - \val_{i}/\busypen) + P[\tilde \perm] \cd (\val_{i} - \val_{k}/\busypen)$$

    Next, we will consider the hypothetical  $\tilde G_{i, k}$ if a) the first candidate were \emph{free} and b) the choice is picking the candidate ranked 1st and between any possible index $\ell$. The relevant quantity for candidates pairs $i, k$:  
    $$\tilde G_{ik} =  P[\perm'] \cd (\val_{k}/\busypen - \val_{i}) + \ratio_{i, k} \cd P[\tilde \perm'] \cd (\val_{i}/\busypen - \val_{k})$$
    We will show that if $G_{ik} > 0$, we must have $\tilde G_{ik} < 0$. By assumption, $P[\perm]> P[\tilde \perm]$ and $P[\perm']> P[\tilde \perm']$ and $\val_i > \val_k$.  Note that $\tilde G_{ik}$ has two components, $C_1 = \val_k - \val_i/\busypen$, and $C_2 = \val_i - \val_k/\busypen$, and that they are inverted in $\tilde G_{ik}$. Analyzing the terms case by case: 
    \begin{itemize}
        \item If $C_1, C_2 > 0$, then $\tilde G_{ik}$ can be written as $ P[\perm'] \cd (-C_1) + \ratio_{ik} \cd P[\tilde \perm'] \cd (-C_2)$, and thus is negative. 
        \item We cannot have $C_1, C_2 < 0$ because we know $G_{ik} >0$. 
        \item If $C_1 < 0, C_2 > 0$, then we know because $G_{ik}>0$ 
        $$0 < G_{ik} \leq \ratio_{ik} \cd C_1 + C_2 $$
        we have $\abs{\ratio_{ik} \cd C_1} < C_2$, which implies that 
        $$\tilde G_{ik} \leq -C_2 - \ratio \cd C_1 \leq 0$$
        \item Note that we cannot have $C_1 >0, C_2 <0$, as we must have $C_1 \leq C_2$ because: 
    $$C_1 \leq C_2$$
    $$v_k - v_i /\busypen \leq v_i - v_k/\busypen$$
    $$\val_k \cd (1 + 1/\busypen) \leq v_i \cd (1 + 1/\busypen)$$
    which is satisfied by assumption. 
    \end{itemize}

\textbf{Property 2: if Algorithm \ref{algo:approximatebeyondsuperstar} would pick a candidate in index $j$, then it would also pick a candidate in index $k<j$} In order to prove this, we will show that if Algorithm \ref{algo:approximatebeyondsuperstar} would pick a candidate in index $j$, then it would also pick a candidate in index $k<j$ in the same setting (e.g. assuming the first candidate and $j$th/$k$th candidate have corresponding statuses). 

Note that in Algorithm \ref{algo:approximatebeyondsuperstar}, the only portions that rely on the index of the candidate is the $P[\perm], P[\tilde \perm]$ calculation. From our analysis at the beginning of this proof, we note that: 
$$\frac{P[\perm_{i, k, j}]}{P[\tilde \perm_{i, k, j}]} = \frac{P[\perm_{i, k, j-1}]}{P[\tilde \perm_{i, k, j-1}]} \cd \frac{\exp(\val_{\perm_{1}}/\beta) + \sum_{\ell>j}^{\nitem}\exp(\val_{ \perm_{\ell}}/\beta)}{\exp(\val_{ \perm_{j}}/\beta) + \sum_{\ell>j}^{\nitem}\exp(\val_{ \perm_{\ell}}/\beta)}$$
Because $\val_{\perm_1} > \val_{\perm_j}$ by assumption, the ratio on the RHS is greater than 1. Note that a larger value of $P[\perm]$ as compared to $P[\tilde \perm]$ only makes it more challenging to pick the candidate ranked $j$th (lower threshold to pick the candidate ranked 1st).

\textbf{Property 3: Increased accuracy shrinks window size: } Recall that the only component of Algorithm \ref{algo:approximatebeyondsuperstar} that relies on window size is the ratio $P[\perm]/P[\tilde \perm]$. Increased accuracy (decreased $\beta$) increases this ratio because it increases $\exp(\val_{\perm_1}/\beta)$ more quickly than $\exp(\val_{\perm_j}/\beta)$.

\textbf{Property 4: Increasing $\busypen$ shrinks the window size if a firm is preferentially picking busy candidates (and increases it otherwise):} Note that the only place that Algorithm \ref{algo:approximatebeyondsuperstar} relies on $\busypen$ is in calculating $G_{ik}$. A firm is preferentially picking busy candidates if it ever picks a busy candidate ranked in index $j$. Increasing $\busypen$ decreases $\val_k/\busypen, \val_i/\busypen$ and thus decreases $G_{ik}$ for when the first candidate is free, either keeping constant or shrinking the window size. Similarly, if the first candidate is busy, increasing $\busypen$ shrinks $\val_i/\busypen, \val_k/\busypen$ and increases $G_{ik}$, either keeping constant or increasing the window size. 
\end{proof}

\beyondstratopterr*
\begin{proof}

\textbf{Exact conditions:} We will find it helpful to begin by exploring the exact conditions for the optimal decision. First, when the \textbf{first candidate is free} it is optimal to pick the $j$th (busy) candidate exactly whenever: 
$$\mathbb{E}_{\perm \sim \permdist}[\val_{\perm_1} \mid \avail] \leq  \mathbb{E}_{\perm \sim \permdist}[\val_{\perm_j} \mid \avail]/\busypen $$
Applying Bayes rule tells us that we can rewrite this as: 
$$\sum_{\perm \in \permdist} \frac{P[\avail \mid \perm ] \cd P[\perm]}{P[\avail]} \cd \val_{\perm_1}\leq  \sum_{\perm \in \permdist} \frac{P[\avail \mid \perm ] \cd P[\perm]}{P[\avail]} \cd \val_{\perm_j} /\busypen$$
Pulling over: 
$$0 \leq \frac{1}{P[\avail]} \sum_{\perm \in \permdist} P[\avail \mid \perm] \cd P[\perm] \cd (\val_{\perm_j}/\busypen - \val_{\perm_1})$$
Next, we rewrite strategically by grouping together permutations $\perm$ that have elements in index 1, $j$ ordered correctly, and corresponding permutations $\tilde \perm$ that are identical to $\perm$, but with values $1, j$ exactly inverted: 

$$0 \leq \frac{1}{P[\avail]} \sum_{\perm \in \permdist, \val_{\perm_1}> \val_{\perm_j}} P[\avail \mid \perm] \cd P[\perm] \cd (\val_{\perm_j}/\busypen - \val_{\perm_1}) + P[\avail \mid \tilde \perm] \cd P[\tilde \perm] \cd (\val_{\tilde \perm_j}/\busypen- \val_{\tilde \perm_1}) $$
which we can rewrite as: 
\begin{equation}\label{eq:exactlyfirstfree}
    0 \leq \frac{1}{P[\avail]} \sum_{\perm \in \permdist, \val_{\perm_1}> \val_{\perm_j}} P[\avail \mid \perm] \cd \p{P[\perm] \cd (\val_{\perm_j}/\busypen_{\perm_j} - \val_{\perm_1}) +\ratio_{\perm_1, \perm_j} \cd  P[\tilde \perm] \cd (\val_{\perm_1}/\busypen_{\perm_1} - \val_{\perm_j})} 
\end{equation}
Collecting pairs $i, k$ gives us: 

\begin{equation}\label{eq:exactlyfirstfree}
    0 \leq \frac{1}{P[\avail]} \sum_{i, k \mid \val_i > \val_k} \p{\p{\sum_{\perm \mid \perm_1 = i, \perm_j = k} P[\avail \mid \perm] \cd P[\perm]} \cd \p{(\val_{\perm_j}/\busypen_{\perm_j} - \val_{\perm_1})} + \p{\sum_{\tilde \perm \mid \tilde \perm_1 = k, \tilde \perm_j = i} P[\avail \mid \perm] \cd P[\tilde \perm] }\cd \ratio_{\perm_1, \perm_j} \cd (\val_{\perm_1}/\busypen_{\perm_1} - \val_{\perm_j})} 
\end{equation}

For any particular pair of elements $v_i > v_k$, the relevant gap in Algorithm \ref{algo:approximatebeyondsuperstar} is given by: 
$$G_{ik} = P[\perm'] \cd (\val_{k}/\busypen_{k} - \val_{i}) + \ratio_{i, k} \cd P[\tilde \perm'] \cd (\val_{i}/\busypen_{i} - \val_{k})$$
where $\perm', \tilde \perm'$ are defined specifically where $\perm$ is the permutation with every item ordered correctly, except for with item $i$ ranked 1st and item $k$ ranked in index $j$, and $\tilde \perm$ is identical to $\perm$ except with items $i, k$ swapped. By prior analysis in Lemma \ref{lem:beyondprop}, we know that $P[\perm]/P[\tilde \perm]$ is maximized with $\perm', \tilde \perm'$ as defined. Define $\mathcal{P}_{i, k, j}$ as the set of all pairs of permutations $\perm, \tilde \perm$ such that $\perm = \tilde \perm$ except with items $i, k$ in indices $1, j$ swapped. 

Thus, if $G_{ik} > 0$, then we know that the corresponding sum for every $\perm, \tilde \perm$ in $\mathcal{P}_{i, k, j}$ is also positive. Thus, we can analyze the error as follows: \\
If Algorithm \ref{algo:approximatebeyondsuperstar} picks the candidate in index $j$, then there is 0 error from pairs $i, k$ with $G_{ik} >0$, because by our prior reasoning the term within the sum is positive for all permutations in $\mathcal{P}_{i, k, j}$. However, there is error from pairs $i, k$ with $G_{ik} <0$ (pairs that \enquote{voted} to pick the candidate ranked 1st). Using the $\perm', \tilde \perm'$ as defined in Algorithm \ref{algo:approximatebeyondsuperstar} are the pair of permutations that makes it \emph{most} easy to \enquote{vote} for picking the candidate ranked 1st, and therefore we can upper bound the error associated with picking the candidate ranked $j$th from this pair $i, k$ as: 
$$\frac{P[\avail \mid \perm]}{P[\avail]}\sum_{\perm, \tilde \perm \in \mathcal{P}_{i, k, j}}-\p{P[\perm] \cd (\val_{\perm_j}/\busypen_{\perm_j} - \val_{\perm_1}) +\ratio_{\perm_1, \perm_j} \cd P[\tilde \perm] \cd (\val_{\perm_1}/\busypen_{\perm_1} - \val_{\perm_j})}  \leq \max{\perm, \rho} \frac{P[\avail \mid \perm]}{P[\avail \mid \rho]}\cd  P[\mathcal{P}_{i, k, j}] \cd (\val_i - \val_k/\busypen)$$ 
For conciseness, in the analysis below we will omit the $\max{\perm, \rho} \frac{P[\avail \mid \perm]}{P[\avail \mid \rho]}$ term, which is a coefficient on all bounds below. \\
Next, if Algorithm \ref{algo:approximatebeyondsuperstar} picks the candidate ranked 1st, then for pairs with $G_{ik}>0$ the error is upper bounded by: 
$$P[\mathcal{P}_{i, k, j}] \cd \ratio_{1, k} \cd(\val_i/\busypen - \val_k)$$
For pairs with $G_{ik} < 0$ the error comes from the fact that $G_{ik}$ is biased towards picking the top element by its use of $\perm', \tilde \perm '$. The error is upper bounded by the gap between this and the most conservative permutations, $\perm'', \tilde \perm ''$ where $\perm ''$ is created by having every item inverted. Then, the error is bounded by: 
$$P[\mathcal{P}_{i, k, j}] \cd \max(-G_{ik}', 0)$$

By similar analysis, when the \textbf{first candidate is busy} it is optimal to pick the $j$th (free) candidate exactly whenever: 
$$\mathbb{E}_{\perm \sim \permdist}[\val_{\perm_1} \mid \avail]/\busypen \leq  \mathbb{E}_{\perm \sim \permdist}[\val_{\perm_j} \mid \avail] $$
Applying Bayes rule tells us that we can rewrite this as: 
$$\sum_{\perm \in \permdist} \frac{P[\avail \mid \perm ] \cd P[\perm]}{P[\avail]} \cd \val_{\perm_1}/\busypen\leq  \sum_{\perm \in \permdist} \frac{P[\avail \mid \perm ] \cd P[\perm]}{P[\avail]} \cd \val_{\perm_j}$$
Pulling over: 
$$0 \leq \frac{1}{P[\avail]} \sum_{\perm \in \permdist} P[\avail \mid \perm] \cd P[\perm] \cd (\val_{\perm_j} - \val_{\perm_1}/\busypen)$$
Next, we rewrite strategically by grouping together permutations $\perm$ that have elements in index 1, $j$ ordered correctly, and corresponding permutations $\tilde \perm$ that are identical to $\perm$, but with values $1, j$ exactly inverted: 

$$0 \leq \frac{1}{P[\avail]} \sum_{\perm \in \permdist, \val_{\perm_1}> \val_{\perm_j}} P[\avail \mid \perm] \cd P[\perm] \cd (\val_{\perm_j} - \val_{\perm_1}/\busypen) + P[\avail \mid \tilde \perm] \cd P[\tilde \perm] \cd (\val_{\tilde \perm_j}- \val_{\tilde \perm_1}/\busypen) $$
which we can rewrite as: 
\begin{equation}\label{eq:exactlyfirstbusy}
    0 \leq \frac{1}{P[\avail]} \sum_{\perm \in \permdist, \val_{\perm_1}> \val_{\perm_j}} P[\avail \mid \perm] \cd \p{ P[\perm] \cd (\val_{\perm_j} - \val_{\perm_1}/\busypen_{\perm_1}) + P[\tilde \perm] \cd \ratio_{\perm_1, \perm_j} \cd (\val_{\perm_1} - \val_{\perm_j}/\busypen_{\perm_j})} 
\end{equation}
For any particular pair of elements $v_i > v_k$, the relevant gap in Algorithm \ref{algo:approximatebeyondsuperstar} is given by: 
$$G_{ik} = P[\perm] \cd (\val_{k} - \val_{i}/\busypen_{i}) + P[\tilde \perm]\cd \ratio_{i, k} \cd (\val_{i} - \val_{k}/\busypen_{k})$$
By using similar analysis to above, we know that when Algorithm \ref{algo:approximatebeyondsuperstar} picks the candidate in index $j$, for pairs in $G_{ik}>0$ there is 0 error, and for pairs with $G_{ik} < 0$ there is error upper bounded by: 
$$\max{\perm, \rho} \frac{P[\avail \mid \perm]}{P[\avail \mid \rho]} \cd P[\mathcal{P}_{i, k, j}]\cd (\val_i/\busypen - \val_k)$$
Similarly, when Algorithm \ref{algo:approximatebeyondsuperstar} picks the candidate ranked 1st, for pairs in $G_{ik}>0$ there is regret upper bounded by: 
$$\max{\perm, \rho} \frac{P[\avail \mid \perm]}{P[\avail \mid \rho]} \cd P[\mathcal{P}_{i, k, j}]\cd (\val_i - \val_k/\busypen)$$
and for pairs in $G_{ik}<0$ there is regret upper bounded by:
$$P[\mathcal{P}_{i, k, j}] \cd \max(- G_{ik}', 0)$$
\end{proof}

\end{document}
\endinput